\documentclass[11pt,singlecolumn,prx,longbibliography,nofootinbib, superscriptaddress]{revtex4-2}

\usepackage{config}
\usepackage{comment}
\usepackage{times}
\usepackage{tikz}

\usepackage{titlesec}
\titleformat{\paragraph}[runin]
  {\normalfont\bfseries}{\theparagraph.}{0.5em}{}[\;\;]
\titlespacing*{\paragraph}
  {0pt}{1.5ex plus 0.5ex minus 0.2ex}{0.5em}

\begin{document}

\title{Distinctness threshold for pseudorandom unitaries}

\author{Asad Raza}
\email{asad.raza@fu-berlin.de}
\affiliation{Dahlem Center for Complex Quantum Systems, Freie Universit\"at Berlin, 14195 Berlin, Germany}

\author{Jens Eisert}

\affiliation{Dahlem Center for Complex Quantum Systems, Freie Universit\"at Berlin, 14195 Berlin, Germany}

\affiliation{Helmholtz-Zentrum Berlin f{\"u}r Materialien und Energie, 14109 Berlin, Germany}

\author{Bill Fefferman}

\affiliation{Department of Computer Science, The University of Chicago}

\date{\today}
\begin{abstract}
Pseudorandomness is increasingly recognized as a key property of ensembles in quantum information theory, statistical mechanics, and quantum many-body physics. Yet it appears in two conceptually different forms: statistical pseudorandomness, embodied by unitary designs, and 
computational pseudorandomness captured by pseudorandom unitaries (PRUs). The relationship between these two forms of pseudorandomness remains surprisingly poorly understood. Existing PRU constructions reveal this interplay where a statistically randomizing ingredient—a unitary design—is combined with classical cryptographic primitives to produce computational pseudorandomness. 
We show that statistical pseudorandomness is not necessary for computationally pseudorandom unitaries. We do this by 
replacing the unitary $2$-design layer in the existing constructions with ensembles that are not even state $1$-designs, yet are sufficiently {\em distinct}, a property we identify to be necessary for any PRU. 
This yields new non-adaptively secure PRU ensembles whose computational pseudorandomness is obtained without an underlying statistically pseudorandom quantum ensemble, such as a $2$-design.
We characterize distinctness via an entangled analogue of anticoncentration and use it to show that distinctness already captures constraints on coherence and imaginarity of PRUs, while identifying broad classes of inputs for which the latter obstruction disappears, enabling real-valued PRUs even for certain (maximally) entangled states. As an application, we use lack of  distinctness to constrain the conjectured pseudorandomness of the random phase-Hadamard ensemble to form a PRU.
\end{abstract}

\maketitle 

\vspace{-0.3cm}
\tableofcontents

\section{Introduction}

Recent years have witnessed a remarkable convergence between two research directions that, at first sight, appear largely unrelated. On the one hand, quantum many-body physics has sought to understand the emergence of complexity in isolated quantum systems through phenomena such as thermalization, scrambling, quantum chaos, and information propagation \cite{1408.5148,hayden_black_2007,cotler_fluctuations_2022,christian_review}. On the other hand, theoretical computer science---and cryptography in particular---has developed computational notions of randomness, culminating in pseudorandom objects that efficiently emulate ideal random behaviour. Remarkably, these two perspectives have come to share many of the same mathematical structures. Random quantum circuits simultaneously model chaotic quantum dynamics, approximate Haar-random evolution through unitary designs \cite{Designs}, and are widely believed to realize cryptographic primitives such as pseudorandom quantum states and pseudorandom unitaries \cite{bostanci_efficient_2024}. This rapidly developing interplay has become one of the most exciting interfaces between quantum information, complexity theory, and many-body physics. One of the central outcomes of this convergence has been the theory of \emph{pseudorandom unitaries} (PRUs), first conjectured by Ji, Liu, and Song (JLS)~\cite{ji_pseudorandom_2018}. Informally, a PRU is an efficiently implementable ensemble of unitaries that is computationally indistinguishable from a Haar-random unitary to every efficient observer. Since their introduction, increasingly powerful constructions have been developed \cite{metger_simple_2024,schuster_strong_2025,lu_parallel_2025}, providing security under progressively stronger models of adversarial access. Despite this rapid progress, the existing constructions rely on rather different proof techniques. It has therefore remained difficult to determine which of their ingredients are fundamentally required for pseudorandomness and which merely provide convenient tools for establishing security. A common structure nevertheless appears across several existing proofs. Before computational pseudorandomness enters the argument, one first shows that the state obtained from $t$ parallel queries has almost all of its weight on the so-called \emph{distinct subspace}. This subspace is spanned by $n$-qubit computational basis states whose labels are pairwise distinct across the $t$ query registers. More precisely, the corresponding set of labels is \begin{equation} [N]_{\mathrm{dist}}^t := \left\{ (x_1,\ldots,x_t)\in[N]^t \,:\, x_i\neq x_j \text{ for all } i\neq j \right\}, \end{equation}
where  $N\coloneqq 2^n$. Operationally, concentration on this subspace means that measuring the query registers in the computational basis produces collisions only with small probability. In the celebrated $PFC$ family of constructions~\cite{metger_simple_2024,schuster_strong_2025,foxman_quantum_2026}, this property is obtained by applying a random Clifford and exploiting the fact that Clifford circuits form unitary $2$-designs. In constructions based on Kac's walk~\cite{lu_parallel_2025}, an analogous role is played by state $2$-designs. The distinct subspace has consequently been regarded primarily as a technical device in existing security proofs, generated by a much stronger form of statistical pseudorandomness. In this work, we show that distinctness is neither merely a proof artifact nor simply another manifestation of design-level randomness. Instead, it is the fundamental intermediate property connecting statistical randomization with computational pseudorandomness. 

 Throughout this work we will only focus on non-adaptive and forward query secure PRUs. We first prove that every pseudorandom unitary must place all but negligible weight on the distinct subspace for every efficiently preparable input state. Otherwise, collisions provide a direct efficient distinguisher from Haar random evolution. Distinctness is therefore a necessary threshold that every PRU must cross. Having established its necessity, we ask how much randomness is actually required to cross this threshold. Surprisingly, neither unitary nor state designs are necessary. We construct an ensemble consisting of a complex-valued random diagonal phase unitary, $F_{\C}$, followed by the $n$-qubit Hadamard transform, $H$, that achieves the same $O(t^2/2^n)$ distinctness bound as the unitary $2$-design, $C$, used in the $PFC$ construction \cite[Lemma 3.2]{metger_simple_2024}, while failing  to be a state $1$-design. Here $P$ and $F$ are random permutation and real-valued random binary phase operator respectively.
 
We further show that a single layer of independent single-qubit $2$-designs---for example, random single-qubit Clifford gates---is negligibly distinct for polynomially many queries. Within the $PFC$ construction, this single layer can replace the depth-$\log(n)$ global unitary $2$-design layer while still yielding a PRU. Designs therefore provide far more statistical randomness than is required at this stage of the construction. To give distinctness an operational characterization, we introduce \emph{entangled anticoncentration}. Standard anticoncentration \cite{dalzell_random_2022,hangleiter_anticoncentration_2018} requires the output probability mass not to concentrate on a small number of measurement outcomes and is ordinarily formulated for fixed product-state inputs. Entangled anticoncentration instead requires suppression of collisions for arbitrary bipartite inputs, including inputs entangled across the two query registers. In the computationally bounded regime of polynomially many queries and negligible error, we show that entangled anticoncentration is equivalent to distinctness. Distinctness thereby lies strictly below state $2$-designs while strengthening standard anticoncentration. In particular, our construction that is distinct but not even a state $1$-design shows that the recently established equivalence between anticoncentration and state $2$-designs on locally invariant architectures, such as brickwork Haar-random circuits \cite{heinrich_anti-concentration_2025}, does not extend to general architectures without local invariance. Once isolated as a necessary intermediate property, distinctness can be used in two complementary directions. Constructively, it allows the statistically randomizing layer in existing PRU constructions to be substantially weakened. Obstructively, failure of distinctness provides a simple test for ruling out candidate PRU constructions. The equivalence with entangled anticoncentration suggests a particularly direct distinguisher: prepare a Bell state across two query registers, apply the unknown unitary in parallel, and measure again in the Bell basis. We apply this test to the conjectured PRU construction of Ji, Liu, and Song~\cite{ji_pseudorandom_2018}, which consists of alternating random phase and Hadamard unitaries. While the distinguisher does not break the original conjectured construction, it rules out a broad weaker version in which the codomains of the random functions defining the phase unitaries are smaller than their domains by a superpolynomial factor. The same operational perspective also clarifies the quantum resources required for pseudorandomness. Using maximally entangled Bell states as witness inputs, we show that distinctness alone already enforces strong lower bounds on imaginarity and coherence. 

We thereby recover the resource lower bounds for PRUs established by Haug \emph{et al.}~\cite{haug_pseudorandom_2024} at the much weaker level of distinctness. In particular, an ensemble that is distinct on arbitrary inputs must be highly complex-valued; fully real unitary ensembles cannot satisfy this unrestricted requirement. This obstruction, however, turns out to be strongly input dependent. The impossibility of real-valued distinctness is witnessed specifically by inputs having large overlap with the canonical maximally entangled Bell state, rather than by entanglement itself. We formalize this observation through a Bell-overlap condition and construct real-valued distinct ensembles on inputs with sufficiently small Bell overlap. This condition includes all states with positive partial transpose, as well as broad classes of states with negative partial transpose, including certain maximally entangled states. As a consequence, we obtain concrete real-valued PRUs that are secure on such restricted classes of inputs, resolving an open question of Brakerski and Magrafta \cite{brakerski_real-valued_2024}. Our results also go beyond the PPT (positive patial transpose) condition previously imposed in Ref.~\cite{grevink_will_2025} for the equivalence between unitary and orthogonal designs. 

Taken together, these results identify distinctness as a threshold for separating the statisitical quantum pseudorandomness from quantum computational pseudorandomness in the context of PRUs. It is necessary for every PRU in the parallel forward-query setting and sufficient as the statistical randomization layer in the $PFC$ framework, yet dramatically weaker than a unitary or state design. This perspective turns distinctness into both a construction principle and a no-go test for quantum pseudorandomness. In \Cref{subsubsec:dist-prop-necc-charac}, we formally define distinctness, establish its necessity for PRUs, and characterize it through entangled anticoncentration. We then detail its applications in \Cref{subsubsec:apps-of-distinctness}.

\subsection{Main results}

 We now summarize our main contributions. We first identify distinctness as a necessary property for PRUs, and provide an operational characterization of distinctness in terms of an entangled version of anticoncentration. We then show that distinctness can arise without unitary or state designs and explore its consequences for PRU constructions, their resource requirements, and real-valued pseudorandomness.

\subsubsection{Distinctness as a property}
\label{subsubsec:dist-prop-necc-charac}    

\paragraph{Necessity of distinctness for PRUs and its operational characterization.} 

\begin{definition}[$\delta$-distinctness]\label{def:delta-distinctness}
    We say that an $n$-qubit unitary ensemble $\calE$ is $\delta$-distinct if for any $2 \leq t\leq N$,
    \begin{equation}
        \tr\left[\Pi^{\dist}\E_{U \leftarrow \mathcal{E}}\left[U^{\otimes t}\rho_{\mathsf{X}_1, \ldots , \mathsf{X}_t} U^{\otimes t, \dagger}\right]\right] \geq 1 - \delta,
     \end{equation}
    for all $nt$-qubit 
    quantum states $\rho_{\X_1, \ldots, \X_t}$, where $
\Pi^{\dist}
:=
\sum_{(x_1,\ldots,x_t)\in [N]^t_{\dist}}
|x_1,\ldots,x_t\rangle\langle x_1,\ldots,x_t|$.
\end{definition}

Distinctness can be understood in terms of the familiar anticoncentration property, as we will see in \Cref{thm:equiv-of-distinctness-and-EAC}. Recall that standard anticoncentration of an ensemble $\cal E$ requires that a uniformly randomly drawn unitary $U$ from $\cal E$ has, on average, small number of collisions when $U^{\ot 2}\ket{0,0}$ is being  measured in the computational basis. We define {\em entangled anticoncentration} (EAC) by enforcing the same few collision requirement, but for arbitrary bipartite input states, crucially the maximally entangled ones. 

\begin{definition}[$\delta$-(entangled anticoncentration) ($\delta$-EAC)]\label{def:delta-EAC}
 We say that an $n$-qubit unitary ensemble $\calE$ is $\delta$-(entangled anticoncentrated) if for any bipartite state $\omega_{\X_i, \X_j}$,
 \begin{equation}
    \tr\left[\Pi^{\eq}_{\X_i, \X_j} \E_{U\leftarrow \calE}\left[U^{\ot 2} \omega_{\X_i, \X_j} U^{\ot 2, \dagger}\right]\right] \leq \delta,
 \end{equation}
 where $
{\Pi}^{\eq}_{\X_i, \X_j} = \sum_{x\in [N]} \ketbra{x}{x}_{\X_i} \otimes \ketbra{x}{x}_{\X_j}$.
\end{definition} 

\begin{remark}
While $\delta$-distinctness in \Cref{def:delta-distinctness} (resp. $\delta$-EAC in \Cref{def:delta-EAC}) is defined for {\em arbitrary} $nt$-qubit (resp. $2n$-qubit) quantum state inputs, we sometimes require $\delta$-distinctness (resp. $\delta$-EAC) on a restricted set, $\cal S$, of quantum states. In that case, we will say that an ensemble $\calE$ is $\delta$-distinct (resp. $\delta$-EAC) on $\cal S$. Note that in the special case when $\cal S$ is the set of product states (typically just $\ketbra{0}{0}^{\ot 2}$), we recover the standard anticoncentration (for appropriately chosen $\delta>0$) as commonly used in the literature \cite{dalzell_random_2022, hangleiter_anticoncentration_2018}.
\end{remark}

Our first main result is to show the necessity of 
distinctness for PRUs, as opposed to being a mere proof artifact recurrently appearing in PRU constructions \cite{metger_simple_2024,schuster_strong_2025, lu_parallel_2025}.

\begin{namedtheorem}[\Cref{thm:PRU-must-be-negl-dist}]
    Any $n$-qubit pseudorandom unitary (PRU) ensemble $\calE$ must be $\negl(n)$-distinct on all efficiently preparable input states.
\end{namedtheorem}
Since this is established in the forward-only, non-adaptive query model, the necessary condition also applies to PRUs satisfying stronger notions of security.

We now formalize the relation between distinctness and entangled anticoncentration.
We show that, so long as the number of query registers $t$ are at most $\poly(n)$, which is the case for PRUs, $\delta$-EAC and $\delta$-distinctness are equivalent. We prove this equivalence in two steps. First, we show that $\delta$-distinctness implies $\delta$-EAC (\Cref{lemma:distinctness-implies-EAC}), for any $\delta$. It is then straightforward to show a reverse implication, but with a multiplicative factor of $t^2$. That is, $\delta$-EAC implies $(\delta\cdot t^2)$-distinctness (\Cref{lemma:EAC-implies-distinctness}).

\begin{theorem}[Equivalence of $\delta$-distinctness and $\delta$-EAC in the polynomial regime]\label{thm:equiv-of-distinctness-and-EAC}
    Let $\calE$ be some n-qubit unitary ensemble. Then in the computationally bounded regime, where $t=\poly(n)$ and any $\delta = \negl(n)$, $\calE$ is $\delta$-distinct iff $\calE$ is $\delta$-EAC.
\end{theorem}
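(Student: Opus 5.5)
The proof splits into the two implications announced before the statement, after which the polynomial regime absorbs the $t^2$ loss.

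\emph{Distinctness implies EAC (\Cref{lemma:distinctness-implies-EAC}).} Take $t=2$ in \Cref{def:delta-distinctness}; this is allowed since $2\le t\le N$. For $t=2$ we have $\Pi^{\dist}=\mathbb{1}-\Pi^{\eq}_{\X_1,\X_2}$. So for any bipartite $\omega$, $\delta$-distinctness gives
\[
\tr\bigl[\Pi^{\eq}\,\E_U[U^{\ot 2}\omega U^{\ot 2,\dagger}]\bigr]=1-\tr\bigl[\Pi^{\dist}\,\E_U[U^{\ot 2}\omega U^{\ot 2,\dagger}]\bigr]\le\delta ,
\]
which is exactly $\delta$-EAC, with no loss.

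\emph{EAC implies $(t^2\delta)$-distinctness (\Cref{lemma:EAC-implies-distinctness}).} A tuple $(x_1,\dots,x_t)$ fails to be distinct iff $x_i=x_j$ for some pair $i<j$. Hence we have the operator inequality
\[
\mathbb{1}-\Pi^{\dist}\le\sum_{i<j}\Pi^{\eq}_{\X_i,\X_j}\otimes\mathbb{1}_{\text{rest}} .
\]
Both sides are diagonal in the computational basis, and on each basis state the left side is $0$ or $1$ while the right side counts colliding pairs. Fix an $nt$-qubit state $\rho$ and a pair $(i,j)$. The partial trace over the other registers commutes with $U^{\ot t}$ up to the action on those registers, because $\tr_{\text{rest}}[(A\otimes U^{\ot(t-2)})\rho(\cdot)^\dagger]=A\,\rho_{\X_i\X_j}A^\dagger$. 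Therefore
\[
\tr\bigl[(\Pi^{\eq}_{\X_i,\X_j}\otimes\mathbb{1})\,U^{\ot t}\rho\, U^{\ot t,\dagger}\bigr]=\tr\bigl[\Pi^{\eq}\,U^{\ot 2}\rho_{\X_i\X_j}U^{\ot 2,\dagger}\bigr].
\]
Averaging over $U$ and applying $\delta$-EAC to the reduced state $\omega=\rho_{\X_i\X_j}$, each term is at most $\delta$. A union bound over the $\binom{t}{2}\le t^2$ pairs then gives $\tr[\Pi^{\dist}\,\E_U[U^{\ot t}\rho U^{\ot t,\dagger}]]\ge 1-t^2\delta$. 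The only point that needs care is that the reduced state of an arbitrary, possibly entangled, $\rho$ is a legitimate bipartite input. This holds because EAC is quantified over all bipartite states, and this is where the entangled version is essential. If one works on a restricted input set $\mathcal S$, one also needs closure of $\mathcal S$ under these reductions.

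\emph{Polynomial regime.} If $t=\poly(n)$ and $\delta=\negl(n)$, then $t^2\delta=\negl(n)$. So negligible EAC gives negligible distinctness for every polynomial $t$, and conversely negligible distinctness gives negligible EAC. The equivalence holds at the level of the class of negligible functions, which is how the statement with ``any $\delta=\negl(n)$'' should be read: the exact parameter $\delta$ is not preserved in the direction EAC $\Rightarrow$ distinctness. The main obstacle is thus only this bookkeeping, since the $t^2$ factor prevents a literal equality of parameters. I would state the theorem as equivalence of ``$\negl(n)$-distinct for all $t=\poly(n)$'' and ``$\negl(n)$-EAC''.
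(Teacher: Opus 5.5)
Your proof is correct and follows the paper's route. Like you, the paper combines a lossless implication distinctness $\Rightarrow$ EAC with the union-bound implication EAC $\Rightarrow$ $(t^2\delta)$-distinctness, which uses $\overline{\Pi}^{\dist}\preceq\sum_{i<j}\Pi^{\eq}_{\X_i,\X_j}$ and a reduction to two-register marginals, and it also reads the equivalence at the level of negligible functions for $t=\poly(n)$. The differences are minor: in the forward direction you specialize to $t=2$, where $\overline{\Pi}^{\dist}=\Pi^{\eq}_{\X_1,\X_2}$ (legitimate, since the definition quantifies over all $2\le t\le N$), while the paper works at general $t$ with inputs $\omega_{\X_i,\X_j}\otimes\tau$ and $\Pi^{\eq}_{\X_i,\X_j}\preceq\overline{\Pi}^{\dist}$. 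In the reverse direction your pair-by-pair bound is cleaner than the paper's displayed step equating all $t(t-1)/2$ marginal terms.
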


\paragraph{Quantum resources for distinctness.}
The relationship between $\negl(n)$-distinctness and $\negl(n)$-EAC (\Cref{thm:equiv-of-distinctness-and-EAC}), beyond being conceptually useful, turns out to be crucial in proving quantum resource lower bounds for distinctness, and thus, for PRUs. We use this equivalence to strengthen the existing results \cite{haug_pseudorandom_2024} on quantum resource requirements for imaginarity and coherence for PRUs. In particular, we show that these resources are already implied by $\delta$-distinctness (and hence $\delta$-EAC) of an ensemble. See \Cref{thm:dist-implies-q.resources}. That is, these resources become necessary well before the ensemble can be fully pseudorandom. To quantify the `imaginarity' and coherence of a unitary $U$, we use the definitions of Ref.\ \cite[Section 4]{haug_pseudorandom_2024}, which quantify the imaginarity and coherence respectively of the corresponding Choi state of $U$ as 
$I_p(U) := 1-\frac{1}{N^2}\bigl|\tr[U^\dagger U^*]\bigr|^2$ and $C_p(U) := -\frac{1}{N}\sum_{x,y=0}^{N-1} |U_{x, y}|^2 \ln |U_{x ,y}|^2.
$

\begin{theorem}[Distinctness implies quantum resources {(\Cref{thm:imag-necc-for-dist} and \Cref{thm:coh-necc-for-dist})}]\label{thm:dist-implies-q.resources}
 Let a unitary ensemble $\cal E$ be $\delta$-distinct. 
Then,
\begin{equation}
\textup{Imaginarity:}\hspace{2mm}\E_{U\leftarrow\calE}[I_p(U)] \geq 1-\delta
\qquad\text{and}\qquad
\textup{Coherence:}\hspace{2mm}\E_{U\leftarrow\calE}[C_p(U)] \geq \ln\left(1/\delta\right).
\end{equation}
\end{theorem}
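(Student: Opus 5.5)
The argument passes through \Cref{thm:equiv-of-distinctness-and-EAC}, or more precisely \Cref{lemma:distinctness-implies-EAC}: $\delta$-distinctness implies $\delta$-EAC. It is therefore enough to show that $\delta$-EAC forces both resource bounds. In each case we test EAC on the canonical maximally entangled state $\ket{\Phi}=\frac{1}{\sqrt N}\sum_x\ket{x}\ket{x}$, or on a close variant, and rewrite the collision probability $\tr[\Pi^{\eq}\,U^{\ot 2}\ketbra{\Phi}{\Phi}U^{\ot 2,\dagger}]$ in terms of the matrix entries of $U$.

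\emph{Imaginarity.} By the transpose trick, $(U\ot U)\ket{\Phi}=(I\ot UU^T)\ket{\Phi}$. The equal-label projection then gives
\begin{equation}
p_{\eq}(U)=\sum_x\Bigl|\tfrac{1}{\sqrt N}(U U^T)_{x,x}\Bigr|^2=\frac1N\sum_x |(UU^T)_{xx}|^2 .
\end{equation}
The imaginarity $I_p$ instead involves $\frac{1}{N^2}|\tr[U^\dagger U^*]|^2=\frac{1}{N^2}|\tr[U^TU]|^2$, which is the squared overlap $|\bra{\Phi}U\ot U\ket{\Phi}|^2$, up to conjugation. Since $\ketbra{\Phi}{\Phi}\le \Pi^{\eq}$, we have $|\bra\Phi U^{\ot2}\ket\Phi|^2\le p_{\eq}(U)$. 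Averaging over $U\leftarrow\calE$ and applying $\delta$-EAC gives $\E[1-I_p(U)]\le\delta$. The one point to check is that $\ket\Phi$ is supported inside the range of $\Pi^{\eq}$, which holds because every term of $\ket\Phi$ has equal labels. I would verify the conjugation and transpose conventions of $I_p$ carefully, since $|\tr[U^\dagger U^*]|=|\tr[U^TU]|$ up to complex conjugation.

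\emph{Coherence.} Here I would use the inputs $\ket{y}\ket{y}$ rather than $\ket\Phi$. EAC on this product input gives $\E\sum_x|U_{x,y}|^4\le\delta$ for each $y$, and averaging over $y$ gives
\begin{equation}
\E\,\frac1N\sum_{x,y}|U_{x,y}|^4\le\delta .
\end{equation}
For each column $y$, the quantities $p_y(x)=|U_{x,y}|^2$ form a probability distribution. Its Shannon entropy dominates its collision (Rényi-2) entropy, so $-\sum_x p_y(x)\ln p_y(x)\ge -\ln\sum_x p_y(x)^2$. Averaging over $y$ with weight $1/N$ expresses $C_p(U)$ as an average of Shannon entropies. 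Two applications of Jensen's inequality, first over $y$ and then over $U$, using convexity of $-\ln$, then give
\begin{equation}
\E[C_p(U)]\ge -\ln\,\E\Bigl[\tfrac1N\sum_{x,y}|U_{x,y}|^4\Bigr]\ge\ln(1/\delta).
\end{equation}

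The main obstacle is bookkeeping, not conceptual. For imaginarity, the step to get right is identifying $\frac1{N^2}|\tr[U^\dagger U^*]|^2$ with the return probability of $\ket\Phi$ under $U^{\ot2}$, including the correct conjugation. For coherence, the step to get right is the direction of Jensen's inequality together with the Shannon-versus-Rényi comparison. If the paper's EAC lemma is stated only for general bipartite states, both inputs used here, $\ket\Phi$ and $\ket{y,y}$, are legitimate bipartite states, so no efficiency restriction intervenes.
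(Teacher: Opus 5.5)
Your proof is correct and follows essentially the paper's argument: you reduce to $\delta$-EAC via \Cref{lemma:distinctness-implies-EAC}, test on $|\Phi\rangle$ for imaginarity and on the diagonal inputs $|y,y\rangle$ for coherence, and finish with the Shannon-versus-collision-entropy comparison and Jensen applied twice. The differences are cosmetic. The paper uses the uniform mixture $\frac1N\sum_y|y,y\rangle\langle y,y|$, which is equivalent to your $|y,y\rangle$ inputs by linearity. It applies Jensen through convexity of $e^{-x}$ instead of $-\ln$. Finally, it gets $\frac{1}{N^2}|\tr[UU^{\mathsf T}]|^2\le p_{\mathrm{eq}}(U)$ by Cauchy--Schwarz on the diagonal of $UU^{\mathsf T}$, which is the same inequality as your $|\Phi\rangle\langle\Phi|\preceq\Pi^{\mathrm{eq}}$.
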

Noting that PRUs must be $\negl(n)$-distinct (\Cref{thm:PRU-must-be-negl-dist}), we recover resource Theorems 3 and 5 for PRUs in Ref.\ \cite{haug_pseudorandom_2024} already at the distinct subspace level. Speaking of resources, it is natural to ask how does entanglement of the unitary ensemble relate to distinctness. We show that a single layer of  single-qubit random Cliffords is $\negl(n)$-distinct (\Cref{prop:single-layer-cliffords-eac}). And this suffices to replace the $\log(n)$ depth random Clifford (or $n$-qubit unitary $2$-design) $C$, in the $PFC$ ensemble with a single layer of single qubit 2-designs, e.g., single qubit random Cliffords.

\paragraph{Is distinctness just a state 2-design in disguise?} Having established that distinctness is indeed necessary for PRUs, we next ask whether it is genuinely weaker than the $2$-design property used to obtain distinctness in all existing constructions \cite{metger_simple_2024,schuster_strong_2025,ma_how_2024, lu_parallel_2025}. In other words, 
are there distinct ensembles that fail to be both unitary and state designs? We answer this question in the affirmative by constructing an explicit ensemble that $\mathcal{O}(t^2/2^n)$-distinct, yet it fails to even be a {\em state} $1$-design! We stress that not only is $\mathcal{O}(t^2/2^n)$-distinctness much stronger than $\negl(n)$-distinctness, it is exactly the same distinctness that the existing constructions \cite{metger_simple_2024,schuster_strong_2025,ma_how_2024, lu_parallel_2025} achieve using a state or a unitary 2-design, whereas our ensemble is not even a state 1-design.   Our ensemble is a {\em complex} random phase unitary, $F_{\C}$\footnote{Notation: Unless clear from context, we use the $X_{\C}$ or $X_{\R}$ to denote whether the operator $X$ is complex or real respectively.}, followed by $n$-qubit Hadamard, $H$. Henceforth, the $HF_{\C}$ ensemble.

\begin{theorem}[Distinctness of $HF$ ensemble without being a state design]\label{thm:HF-dist-but-not-state-2-design}
    Let $H$ be the $n$-qubit Hadamard transform and $F \coloneqq \sum_{x \in [N]} \omega^{f(x)}\ketbra{x}{x}$, where $\omega = e^{2\pi i/3}$, and $f: \zo^n \rightarrow \{0, 1, 2\}$ is a random 4-wise independent ternary function. Then, $HF$ is $\mathcal{O}(t^2/N)$-distinct, yet it fails to be a state 1-design.
\end{theorem}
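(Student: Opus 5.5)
The non-design half is immediate, and the distinctness half reduces, via the entangled-anticoncentration route of \Cref{thm:equiv-of-distinctness-and-EAC}, to an operator-norm bound on a single averaged two-copy operator.

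\textbf{Failure of the $1$-design property.} I would feed in the state $\ket{0^n}$. Since $F$ is diagonal, $F\ket{0^n}=\omega^{f(0^n)}\ket{0^n}$, so $HF\ket{0^n}=\omega^{f(0^n)}\ket{+}^{\otimes n}$. Hence
\begin{equation}
\E_f\big[HF\ketbra{0^n}{0^n}F^\dagger H\big]=\ketbra{+}{+}^{\otimes n}\neq I/N ,
\end{equation}
so $HF$ is not a state $1$-design. The random phase is a global phase on basis inputs and cannot randomize them.

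\textbf{Plan for distinctness.} I would not attack $\Pi^{\dist}$ directly. I would first prove an $\mathcal{O}(1/N)$-EAC bound and then lift it to $t$ registers with the union bound behind \Cref{lemma:EAC-implies-distinctness}:
\begin{equation}
1-\tr[\Pi^{\dist}\sigma]\le\sum_{i<j}\tr[\Pi^{\eq}_{\X_i,\X_j}\sigma].
\end{equation}
Each pair term involves only the two-register marginal of $\sigma$. So EAC with parameter $\delta$ gives $\delta\binom{t}{2}=\mathcal{O}(t^2/N)$-distinctness.

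\textbf{EAC computation.} By duality, for any bipartite $\omega$,
\begin{equation}
\tr\big[\Pi^{\eq}\,\E[(HF)^{\otimes 2}\omega(HF)^{\otimes 2\dagger}]\big]=\tr[M\omega],\qquad M:=\E_f\big[(F^\dagger H)^{\otimes 2}\,\Pi^{\eq}\,(HF)^{\otimes 2}\big].
\end{equation}
It therefore suffices to show $\|M\|_\infty=\mathcal{O}(1/N)$.

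First I would conjugate by $H$. Summing $(-1)^{x\cdot(a\oplus b\oplus c\oplus d)}$ over $x$ gives
\begin{equation}
H^{\otimes2}\Pi^{\eq}H^{\otimes2}=\frac1N\sum_{a\oplus b=c\oplus d}\ketbra{a,b}{c,d}.
\end{equation}
Next I would average the phases. The entry $\ketbra{a,b}{c,d}$ picks up the factor $\E_f\big[\omega^{-f(a)-f(b)+f(c)+f(d)}\big]$. Using $4$-wise independence, the values of $f$ on the at most four distinct points among $a,b,c,d$ are independent and uniform on $\mathbb{Z}_3$. The expectation is therefore zero unless every point appears with net coefficient $0 \bmod 3$.

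The case analysis runs as follows. If $\{a,b\}=\{c,d\}$ as multisets, the exponent vanishes, and the constraint $a\oplus b=c\oplus d$ holds automatically. Otherwise some point has net coefficient $\pm1$ or $\pm2$, which is nonzero mod $3$, so the term averages to zero. The delicate case is $a=b$, $c=d$ with $a\neq c$, where the exponent is $2(f(c)-f(a))$. This is exactly where the ternary alphabet is needed: with binary phases it would survive and produce a large collision term.

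The main obstacle is this careful bookkeeping. Once done, only the multiset-matching terms remain, giving
\begin{equation}
M=\frac1N\Big(I+\mathrm{SWAP}-\sum_a\ketbra{a,a}{a,a}\Big),
\end{equation}
where the last term corrects the double count at $a=b$. Hence $\|M\|_\infty\le 2/N$, so $HF$ is $2/N$-EAC, and by the union bound above it is $\mathcal{O}(t^2/N)$-distinct.
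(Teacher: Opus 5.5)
Your proof is correct and follows the paper's argument: the $\ket{0^n}\mapsto\ket{+}^{\otimes n}$ counterexample, the fact that the ternary-phase fourth moment survives only when $\{a,b\}=\{c,d\}$ as multisets, a $2/N$ operator-norm (EAC) bound, and the pairwise union bound of \Cref{lemma:EAC-implies-distinctness}. The only difference is in the order of operations. You conjugate $\Pi^{\mathrm{eq}}$ by $H^{\otimes 2}$ before averaging the phases, which gives $M=\frac{1}{N}(I+\mathrm{SWAP}-\Pi^{\mathrm{eq}})$, the $H^{\otimes 2}$-conjugate of the paper's $\frac{1}{N}(I+\mathrm{SWAP})-\frac{1}{N^2}J$. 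As a result, the paper's positivity lemma for $J$ is replaced by the obvious $\Pi^{\mathrm{eq}}\succeq 0$, and you use $4$-wise independence directly instead of going through the uniformly random case.
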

\Cref{lemma:HF-distinct} shows that when $f$ is uniformly random ternary function, which interestingly also appears in strong PRU constructions \cite{ma_how_2024,schuster_strong_2025}, $HF$ ensemble is $(2/N)$-EAC, meaning that $HF$ is $\mathcal{O}(t^2/N)$-distinct by \Cref{lemma:EAC-implies-distinctness}. Recall that since $N=2^n$, $\mathcal{O}(t^2/N)$-distinctness is much stronger than $\negl(n)$-distinctness, which is what we need for PRUs. Since we only ever use a 2-copy $F$-twirl in our proof of \Cref{lemma:HF-distinct}, we instead sample the random ternary $f$ from a $4$-wise independent ternary function family. This is because the results of Ref.\ \cite[Theorem 3.1]{zhandry2015secure} guarantee that any quantum algorithm making at most $t$ queries to a uniformly random function acts identically if we replace the uniformly random function by a $2t$-wise independent function. 
\begin{remark}[$HF$ is not a state $1$-design]\label{remark:HF-not-1-design}
 Despite being $(2/N)$-EAC, $HF$ ensemble is not even a state $1$-design. This follows by a simple observation that $HF \ket{0} = \ket{+}$, which is $\Omega(1)$ far (in trace distance) from the single copy Haar random state, i.e., the maximally mixed state.    
\end{remark}

The $HF$ ensemble might also be independently interesting because standard anticoncentration has long been conjectured to be equivalent to state 2-designs. Authors of Ref.\  \cite{heinrich_anti-concentration_2025} show that this is indeed true on locally invariant architectures e.g., a brickwork random quantum circuit. Anticoncentration, yet failure to be a state $2$-design, of the $HF$ ensemble shows that this equivalence must be architecture dependent. 

\subsubsection{Applications of distinctness}
\label{subsubsec:apps-of-distinctness}
After having isolated distinctness as the relevant threshold for PRUs, we use it in two complementary directions. Constructively, it provides
the minimal statistical input required by 
the $PFC$-type constructions (\Cref{thm:negl-distinct-to-PRU}), allowing
design-level statistical randomization to be replaced by substantially simpler ensembles. Obstructively, every failure of `sufficient' distinctness yields a collision-based distinguisher 
and hence a no-go test for pseudorandomness. The same principle underlies the quantum resource lower bounds for PRUs as we saw in \Cref{thm:dist-implies-q.resources}. We finally show that these resource obstructions become input dependent when pseudorandomness is required only on restricted classes (allowed to the adversary/distinguisher to query the unknown unitary on) of input states. We characterize these states algebraically and also propose concrete (and physically relevant) ensembles of such states.

\paragraph{New PRU ensembles from new distinct ensembles.}

We start by observing that the $PF$ ensemble (with a pseudorandom function and permutation) yields a non-adaptive PRU on the distinct subspace \cite[Theorem 5.2]{metger_simple_2024}. Meaning that if an ensemble of unitaries $D$ is distinct, then $PFD$ is a non-adaptive PRU. In this section, we will focus on new distinct ensembles $D$ that together with $PF$, yield $PFD$ as non-adaptive PRU. Recall that $D$ is generally taken be a unitary 2-design that ensures $\mathcal{O}(t^2/N)$-distinctness \cite[Lemma 3.2]{metger_simple_2024}.

\begin{theorem}[{Implicit in \cite[Section~3]{metger_simple_2024}}]\label{thm:negl-distinct-to-PRU}
If an $n$-qubit unitary ensemble $D$ is
$\delta$-distinct, then
\begin{equation}
    \left\|\calM_{PFD}^{(t)}-\calM_{U_{\Haar}}^{(t)}\right\|_\diamond
    \leq O\left(\sqrt{\delta}+\frac{t^2}{N}\right)
\end{equation}
for $t\ll N\coloneqq 2^n$.
\end{theorem}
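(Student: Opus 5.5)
The plan is to reproduce the Metger--Poremba--Sinha--Yuen argument with the unitary $2$-design $C$ replaced by an arbitrary $\delta$-distinct ensemble $D$. Here $\calM^{(t)}_{\mathcal{E}}$ denotes the $t$-query parallel channel $\rho\mapsto \E_{U\leftarrow\calE}[(U^{\ot t}\ot I)\rho(U^{\ot t}\ot I)^\dagger]$, acting on the $t$ query registers together with an arbitrary purifying register. Fix such an input $\rho$ and write $\sigma_D := (D^{\ot t}\ot I)\rho(D^{\ot t}\ot I)^\dagger$.

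\textbf{Step 1 (projecting onto the distinct subspace).} By $\delta$-distinctness (\Cref{def:delta-distinctness}, applied to the reduced state on $\X_1,\ldots,\X_t$), we have $\E_D\tr[(\Pi^{\dist}\ot I)\sigma_D]\ge 1-\delta$. The gentle measurement lemma then gives $\E_D\|\sigma_D - \tilde\sigma_D\|_1\le 2\sqrt{\delta}$, where $\tilde\sigma_D$ is the normalized post-projection state (Jensen's inequality moves the expectation inside the square root). Since $PF$ acts unitarily, this error persists unchanged through the remaining layers. It is the source of the $O(\sqrt\delta)$ term.

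\textbf{Step 2 ($PF$ on the distinct subspace).} The next step invokes \cite[Theorem 5.2]{metger_simple_2024}: on any state supported on $\Pi^{\dist}\ot I$, the channel $\E_{P,F}[(PF)^{\ot t}\cdot(PF)^{\dagger,\ot t}]$ is $O(t^2/N)$-close in trace distance to a fixed "path-recording"/ideal channel $\Phi_{\dist}$. The key structural fact is that $\Phi_{\dist}$ coincides with the action of a Haar random unitary restricted to the distinct subspace, up to $O(t^2/N)$. Concretely, the random phases kill off-diagonal terms between distinct tuples not related by a permutation, and the random permutation then produces the uniform mixture over distinct tuples, matching the Haar twirl's leading Weingarten term.

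\textbf{Step 3 (comparison with Haar).} To compare with Haar, I would use the right-invariance of the Haar measure: $\calM^{(t)}_{U_\Haar} = \calM^{(t)}_{U_\Haar}\circ\calM^{(t)}_D$. The Haar side can therefore also be written as acting on $\sigma_D$, and by Step 1 as acting on $\tilde\sigma_D$, up to $2\sqrt\delta$, since channels are contractive. On $\tilde\sigma_D$, the same \cite{metger_simple_2024} analysis shows that the Haar twirl is $O(t^2/N)$-close to $\Phi_{\dist}$. The triangle inequality then gives
\[
\|\calM^{(t)}_{PFD}(\rho)-\calM^{(t)}_{U_\Haar}(\rho)\|_1\le 4\sqrt\delta+O(t^2/N).
\]
Taking the supremum over $\rho$ yields the diamond norm bound, because the purifying register was arbitrary throughout.

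\textbf{Main obstacle.} The delicate point is Step 2/3: checking that the \cite{metger_simple_2024} bounds depend on the $2$-design $C$ \emph{only} through the distinctness of its output. That is, their Lemma~3.2 is the sole place where $C$ enters, and the later comparisons hold uniformly for every input supported on the distinct subspace, including inputs entangled with the reference. I would verify this by rereading their proof with $C$ treated as an arbitrary unitary preprocessing. Their Theorem~5.2 is stated for arbitrary distinct-subspace inputs, which is exactly what makes the replacement valid.
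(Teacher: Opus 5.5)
Your proposal is correct and follows essentially the same route as the paper, which gives no separate proof and defers to \cite[Section~3]{metger_simple_2024}: there the $2$-design $C$ enters only through its distinctness (their Lemma~3.2), so replacing it by a $\delta$-distinct $D$ works. The chain you describe is the one that reference uses: the $O(\sqrt{\delta})$ gentle-measurement loss the paper itself mentions, an $O(t^2/N)$ comparison of the $PF$-twirl with the Haar twirl on distinct-subspace inputs (with reference register), and absorption of $D$ on the Haar side via $\calM^{(t)}_{U_{\Haar}}=\calM^{(t)}_{U_{\Haar}}\circ\calM^{(t)}_{D}$. The only quibble is that the ``path-recording'' terminology comes from \cite{ma_how_2024}, not \cite{metger_simple_2024}, which is harmless since any fixed intermediate channel (or a direct $PF$-versus-Haar comparison) plays the same role.
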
 
One can then replace the random function $F$ and the random permutation $P$ by their suitable pseudorandom analogues in order to get the desired version of pseudorandomness: statistical or computational \cite{metger_simple_2024}. Replacing the random permutation $P$ and the random function $F$ by their quantum computationally secure pseudorandom counterparts and setting $\delta = \negl(n)$ in \Cref{thm:negl-distinct-to-PRU} implies a neglibile diamond norm, yielding a non-adaptive PRU. Observing that $\negl(n)$-distinct ensemble $D$ already suffices to make $PFD$ a PRU (\Cref{thm:negl-distinct-to-PRU}), we construct an explicit $\negl(n)$-distinct ensemble in unit depth. Concretely, we show that a single layer of random single qubit Cliffords is $(2/3)^n$-EAC (\Cref{prop:single-layer-cliffords-eac}), which for $t=\poly(n)$ implies $\negl(n)$-distinctness for the ensemble.

Existing constructions use a unitary $2$-design to get $\mathcal{O}(t^2/N)$-distinctness. That route incurs an unavoidable depth of at least $\log(n)$. \Cref{prop:single-layer-cliffords-eac} brings this depth down to just $1$, which together with \Cref{thm:negl-distinct-to-PRU} gives a $\log(n)$ depth saving to instantiate the distinctness generating unitary in the $PFC$ ensemble against computationally bounded adversaries. 
\begin{corollary}\label{corr:PFC-wth-1-qubit-Cliffords}
    Let $\bigotimes_{i=1}^n C_i$ be a layer of single-qubit Cliffords, where each $C_i$ is drawn uniformly randomly from $1$-qubit Clifford group. Then, we have that $PF\bigotimes_{i=1}^n C_i $
 is a PRU.
\end{corollary}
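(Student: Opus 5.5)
The plan is to combine \Cref{thm:negl-distinct-to-PRU} with \Cref{prop:single-layer-cliffords-eac} and \Cref{lemma:EAC-implies-distinctness}, and then perform the standard pseudorandom replacement of $P$ and $F$.

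\emph{Step 1: EAC of the single-qubit Clifford layer.} Set $D=\bigotimes_{i=1}^n C_i$. By \Cref{prop:single-layer-cliffords-eac}, $D$ is $(2/3)^n$-EAC on arbitrary bipartite inputs. If I had to reprove this, I would twirl each qubit pair $(i,i')$ by $C_i^{\otimes 2}$. Since single-qubit Cliffords form a $2$-design, this twirl projects onto the symmetric and antisymmetric subspaces. The single-qubit collision projector $\sum_b \ketbra{bb}{bb}$ has expectation $2/3$ on the symmetric part and $0$ on the antisymmetric part. The twirled operator $\E[C^{\dagger\otimes 2}\Pi^{\eq}_1 C^{\otimes 2}]$ is therefore $\tfrac{2}{3}\Pi_{\mathrm{sym}}\preceq \tfrac23 I$. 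Tensoring over the $n$ qubits gives $\E[D^{\dagger\otimes2}\Pi^{\eq}D^{\otimes 2}]\preceq (2/3)^n I$, which holds for every (possibly entangled) input $\omega$.

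\emph{Step 2: distinctness.} Applying \Cref{lemma:EAC-implies-distinctness} with a union bound over the $\binom{t}{2}$ register pairs, $D$ is $\delta$-distinct with $\delta\le t^2(2/3)^n$. For $t=\poly(n)$ this is $\negl(n)$.

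\emph{Step 3: statistical PRU bound.} Invoking \Cref{thm:negl-distinct-to-PRU} gives
\[
\bigl\|\calM^{(t)}_{PFD}-\calM^{(t)}_{U_\Haar}\bigr\|_\diamond\le O\bigl(t(2/3)^{n/2}+t^2/2^n\bigr)=\negl(n).
\]

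\emph{Step 4: pseudorandom replacement.} Replace $P$ and $F$ by a quantum-secure pseudorandom permutation and a quantum-secure pseudorandom function. A hybrid argument shows that any efficient non-adaptive distinguisher for the resulting ensemble yields an efficient oracle distinguisher against the PRP or PRF. The reduction samples $D$ itself, which is efficient since $D$ is a single layer of $n$ random single-qubit Cliffords, and simulates the $t$ parallel queries using oracle access to the permutation and function. Efficient implementability holds because $D$ has depth $1$ and $P,F$ are efficient.

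The main obstacle is Step 4, specifically making sure the reduction remains valid in the parallel-query model. The $PF$ part must be simulable with oracle access: the phase $F$ requires phase-kickback queries, and the permutation $P$ requires quantum-secure PRPs (evaluable in the forward direction only, which suffices for forward queries). This part is exactly as in \cite{metger_simple_2024}, so I would cite it. The genuinely new ingredient, Step 1, is a clean tensorization argument.
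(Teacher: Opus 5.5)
Your proposal is correct and follows the paper's route: $(2/3)^n$-EAC of the single-qubit Clifford layer, obtained from the qubit-wise factorization of $\Pi^{\eq}$ and the single-qubit twirl $\frac{1}{3}(\Id+\SWAP)$; then \Cref{lemma:EAC-implies-distinctness} to get $t^2(2/3)^n$-distinctness and \Cref{thm:negl-distinct-to-PRU}; and finally replacing $P,F$ by quantum-secure pseudorandom counterparts as in \cite{metger_simple_2024}. One small correction to your Step 4 aside: even with forward-only PRU queries, implementing the in-place permutation $\ket{x}\mapsto\ket{\pi(x)}$ requires uncomputing $x$ via $\pi^{-1}$, so you need an efficiently invertible quantum-secure PRP that is secure against both forward and inverse queries (as used in \cite{metger_simple_2024}), not one evaluable only in the forward direction.
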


Nevertheless, if we desire higher order $t$-designs for $t \ll N$, we can use the $\mathcal{O}(t^2/N)$-distinctness of the $HF_{\C}$ ensemble (\Cref{thm:HF-dist-but-not-state-2-design}) to stitch together a new $t$-design/PRU construction, following the design principle of \Cref{thm:negl-distinct-to-PRU}.

\begin{theorem}\label{thm:PFHF-PRU}
    Let $H$ be the $n$-qubit Hadamard transform and $F_{\C} \coloneqq \sum_{x \in [N]} \omega^{f_1(x)}\ketbra{x}{x}$, where $\omega = e^{2\pi i/3}$, and $f_1: \zo^n \rightarrow \{0, 1, 2\}$ is a random 4-wise independent ternary function. Similarly, define $F_{\R}\coloneqq \sum_{x \in [N]} (-1)^{f_2(x)}\ketbra{x}{x}$, with $f_2$ being a pseudorandom Boolean function. Then, 
        $PF_{\R}HF_{\C}$
    is a PRU.
\end{theorem}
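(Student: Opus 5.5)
The plan is to instantiate \Cref{thm:negl-distinct-to-PRU} with $D \coloneqq HF_{\C}$ and then pass from information-theoretic to computational security by a hybrid argument.

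\textbf{Step 1 (distinctness of $D$).} By \Cref{thm:HF-dist-but-not-state-2-design}, and more precisely by \Cref{lemma:HF-distinct} together with \Cref{lemma:EAC-implies-distinctness}, the ensemble $HF_{\C}$ is $\delta$-distinct with $\delta = O(t^2/N)$. Here $f_1$ is drawn from a $4$-wise independent ternary family. This choice is legitimate because the distinctness bound only uses the second moment of the $F_{\C}$-twirl, which depends on at most four values of $f_1$. It also makes $F_{\C}$ efficiently implementable, since $4$-wise independent functions have polynomial-size descriptions and evaluation circuits, and $\omega^{f_1(x)}$ can be applied by phase kickback.

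\textbf{Step 2 (statistical closeness).} Take $P$ to be a truly random permutation and $F_{\R}$ to have a truly random Boolean $f_2$. Applying \Cref{thm:negl-distinct-to-PRU} with $\delta = O(t^2/N)$ gives
\begin{equation}
\left\|\calM^{(t)}_{PF_{\R}HF_{\C}}-\calM^{(t)}_{U_{\Haar}}\right\|_\diamond \le O\left(\frac{t}{\sqrt N}+\frac{t^2}{N}\right),
\end{equation}
which is $\negl(n)$ for $t=\poly(n)$.

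\textbf{Step 3 (hybrids).} First replace $f_2$ by a quantum-secure pseudorandom function. Any non-adaptive distinguisher between the two cases can be turned into a PRF adversary: the reduction samples $P$ and $f_1$ itself, simulates $HF_{\C}$ and $P$, and implements $F_{\R}$ with superposition queries to its oracle. Next replace $P$ by a quantum-secure pseudorandom permutation, handled the same way. Implementing $P$ coherently requires oracle access to both $P$ and $P^{-1}$, so this step needs a strong QPRP. Each hybrid changes the acceptance probability by at most $\negl(n)$ for efficient adversaries, and combining them with Step 2 by the triangle inequality yields non-adaptive (parallel forward-query) PRU security.

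\textbf{Main obstacle.} The delicate point is justifying that \Cref{thm:negl-distinct-to-PRU} applies verbatim when $D$ is only $\delta$-distinct rather than a $2$-design. Specifically, I must check that the analysis of $PF$ on the distinct subspace in \cite[Section~3]{metger_simple_2024} uses nothing about $D$ beyond the weight it places on $\Pi^{\dist}$ for every input, including inputs entangled with an ancilla. Definition~\ref{def:delta-distinctness} quantifies over all $nt$-qubit states, and purifying the input with an ancilla should cover the diamond-norm setting. I would verify this by applying the gentle measurement lemma: it projects $D^{\otimes t}\rho D^{\otimes t,\dagger}$ onto the distinct subspace at cost $O(\sqrt{\delta})$, and the $PF$ twirl then acts on the projected state exactly as in the original proof.
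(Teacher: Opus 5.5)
Your overall route is the one the paper intends. You combine \Cref{lemma:HF-distinct} with \Cref{lemma:EAC-implies-distinctness} to get $O(t^2/N)$-distinctness of $HF_{\C}$, apply \Cref{thm:negl-distinct-to-PRU}, and then swap in pseudorandom objects. Your remark that only four-point moments of $\omega^{f_1}$ enter, even for $t$ copies because of the pairwise union bound, directly justifies $4$-wise independence. The paper gives no more detail than this. Your gentle-measurement sketch of why \Cref{thm:negl-distinct-to-PRU} needs only distinctness of $D$, together with $\calM^{(t)}_{U_{\Haar}}\circ\calM^{(t)}_{D}=\calM^{(t)}_{U_{\Haar}}$, is the right check.

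One step of your hybrid argument fails as written, however. Your first hybrid moves from (truly random $P$, truly random $f_2$) to (truly random $P$, PRF $f_2$), and the PRF adversary is supposed to ``sample $P$ itself.'' A uniformly random permutation of $\{0,1\}^n$ has an exponentially long description and cannot be sampled or applied coherently in polynomial time. So this reduction is not a QPT adversary, and PRF security says nothing about it.

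The fix is to reverse the order, i.e.\ run the hybrids from the real construction downward as in \cite{metger_simple_2024}. First replace the PRF $f_2$ by a truly random function while $P$ is still a PRP. The reduction holds the PRP key, so it can compute both $\pi$ and $\pi^{-1}$ and implement $P$ in place. Only then replace the PRP by a truly random permutation. Now the reduction must simulate a truly random $f_2$, which it does with a $2t$-wise independent function. This is efficient, and since $F_{\R}$ is queried only $t$ times, it is perfectly indistinguishable by \cite{zhandry2015secure}. With this reordering, and the strong QPRP you already identified as necessary, the computational step goes through and the proof is complete.
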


Interestingly, as we discuss later, if the right-most complex-valued $F_{\C}$ is replaced by a real-valued binary phase operator $F_{\R}$, then  $PF_{\R}HF_{\R}$ forms a real-valued PRU on restricted input states. See \Cref{thm:real-PRUs} and \Cref{corr:PFHF-real-PRU}. 

\paragraph{Distinctness as a no-go test for PRUs.}

Pseudorandomness of the alternating phase and Hadamard operators in $PF_{\R}HF_{\C}$ ensemble (\Cref{thm:PFHF-PRU}) reminds us of a structurally similar conjectured PRU construction by Ji, Liu, and Song in the paper \cite{ji_pseudorandom_2018} where they introduced the notion of pseudorandom states and unitaries.
\begin{conjecture}[JLS conjecture {\cite[Section 6.2]{ji_pseudorandom_2018}}]\label{conj:JLS-alt-phase-Hadamard}
    Let $n\in\N$, $l=\calO(1)$, $N=2^n$, $\omega_N=e^{2\pi i/N}$, and, for each $j\in[l]$, let $F_j=\sum_{x\in\{0,1\}^n}
        \omega_N^{f_j(x)}\ket{x}\bra{x}$,
    where each $f_j\colon[N]\to[N]$ is an independently keyed pseudorandom
    functions. Then the ensemble of
    unitaries
    \begin{equation}
        U=F_lH\cdots F_1H
    \end{equation}
    is a PRU ensemble.
\end{conjecture}

We thus ask: can we replace the trailing permutation $P$ in $PF_{\R}HF_{\C}$ ensemble by alternating $\poly(n)$-many i.i.d. phase and Hadamard operators, while maintaining a similar security guarantee? If so, this would resolve the JLS conjecture \Cref{conj:JLS-alt-phase-Hadamard}. We show that this is not possible even if we use polynomially many i.i.d. functions $f:[N] \to [K]$, where $K \leq N/n^{\omega(1)}$, the codomain size is superpolynomially smaller than the domain. Indeed, we get a lower bound on the size of the co-domain of $f$ for \Cref{conj:JLS-alt-phase-Hadamard} to be true. 
\begin{theorem}
    Let $n, k\in\N$, $K\coloneqq 2^k \leq N\coloneqq 2^n$.
    \Cref{conj:JLS-alt-phase-Hadamard} can only hold for functions 
    \begin{equation}f_j\colon[N]\to[K]\end{equation} if  $\log K > n - \omega(\log n)$.
\end{theorem}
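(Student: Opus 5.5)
The plan is to show that when $\log K \le n-\omega(\log n)$, i.e.\ $K\le N/n^{\omega(1)}$, the ensemble $U=F_l H\cdots F_1 H$ fails to be $\negl(n)$-EAC on an efficiently preparable two-register input. Then \Cref{lemma:distinctness-implies-EAC} shows it is not $\negl(n)$-distinct on that input. Finally \Cref{thm:PRU-must-be-negl-dist} shows it cannot be a PRU. Equivalently, I will exhibit an explicit collision distinguisher: query two parallel copies of the unknown unitary and measure both registers in the computational basis.

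\textbf{Step 1: the phase layers are negligibly close to the identity.} Each $f_j$ takes values in $[K]=\{0,\dots,K-1\}$. So every diagonal entry of $F_j$ is $e^{2\pi i f_j(x)/N}$, with phase angle at most $2\pi K/N$. Using $|e^{i\theta}-1|\le|\theta|$, we get, for every key,
\[
\|F_j-I\|_{\infty}\le 2\pi K/N = n^{-\omega(1)}.
\]
A telescoping (hybrid) argument over the layers, using unitary invariance of the operator norm, then gives
\[
\|U-H^l\|_\infty\le 2\pi lK/N.
\]
This remains negligible even if $l=\poly(n)$, which also covers the strengthened question of polynomially many i.i.d.\ layers. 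Since $H^2=I$, the reference unitary $H^l$ is either $I$ or $H$, and which one is determined by the public parameter $l$.

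\textbf{Step 2: collision test.} Take the efficiently preparable product input $\ket{\psi}=(H^{l})^{\otimes 2}\ket{0,0}$, which is $\ket{0,0}$ or $\ket{+,+}$. Then
\[
\bigl\|U^{\otimes 2}\ket{\psi}-\ket{0,0}\bigr\|\le 2\|U-H^l\|_\infty\le 4\pi lK/N.
\]
Hence, for every key,
\[
\tr\bigl[\Pi^{\eq}U^{\otimes 2}\ketbra{\psi}{\psi}U^{\otimes 2,\dagger}\bigr]\ge 1-O(lK/N)=1-\negl(n).
\]
For a Haar-random $U$, the same quantity equals $2/(N+1)$, since the Haar twirl of any product state on two copies spreads uniformly over the symmetric subspace. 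The test ``measure both registers in the computational basis and accept iff the outcomes coincide'' therefore has distinguishing advantage $1-\negl(n)$. It makes only $2$ forward, non-adaptive queries. Consequently the ensemble is not $\delta$-EAC for any $\delta<1-\negl(n)$, and the conclusion follows from the chain of results in the first paragraph. One could equally run the Bell-state variant (prepare $\ket{\Phi^+}$, apply $U\otimes U$, and measure in the Bell basis), but the product input already suffices here.

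\textbf{Main obstacle.} The key point is Step 1: making sure the phase convention $\omega_N^{f_j(x)}$ with small codomain really produces phases that are uniformly small, and that the approximation survives all $l$ layers regardless of the keys. I expect the argument to become delicate only if the codomain $[K]$ were embedded in $\Z_N$ in a spread-out way, for instance as multiples of $N/K$. In that case $F_j$ would not be close to $I$, and I would instead use the Bell-basis collision test together with the $K$-periodic structure of the phases. The natural reading $f_j:[N]\to[K]\subseteq[N]$, however, gives the clean bound above.
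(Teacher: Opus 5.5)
Your proof is correct, but it takes a different and more elementary route than the paper.

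\textbf{The paper's route.} The paper uses the Bell test. It writes the acceptance probability as $|\tr[UU^{\mathsf T}]|^2/N^2$ and applies Jensen. It then uses independence of the layers together with $H^{\mathsf T}=H$, $F_j^{\mathsf T}=F_j$ and $H^2=\Id$ to obtain $\E[UU^{\mathsf T}]=\alpha^l\Id$, where $\alpha=\E[\omega_N^{2f(x)}]$. Finally it lower-bounds $|\alpha|$ via the Dirichlet kernel, Taylor bounds on $\sin$, and Bernoulli's inequality.

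\textbf{Your route.} You observe that every phase angle is at most $2\pi K/N$. Hence $\|F_j-\Id\|_\infty\le 2\pi K/N$ for every key, so $\|U-H^l\|_\infty\le 2\pi lK/N=\negl(n)$. A trivial computational-basis test on $(H^l)^{\otimes 2}\ket{0,0}$ then has advantage $1-\negl(n)$. In fact a single query on $H^l\ket{0}$, accepting on outcome $0$, already suffices.

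\textbf{What each buys.} Your argument holds pointwise in the key, so it applies verbatim to the pseudorandom $f_j$ of the actual conjecture, and even to correlated layers. It needs no averaging and no independence. It also avoids the extra PRF-to-random hybrid that the paper's proof needs, since the paper states its proof for uniformly random $f_j$. It also makes transparent that, in this regime, the ensemble is just negligibly close to the fixed unitary $H^l$. This is exactly what the paper's estimate $|\alpha|\approx 1$ is detecting. The paper's Bell test, on the other hand, only needs $\E[F_j^2]$ to have normalized trace of modulus near $1$, not $F_j\approx\Id$. So it also catches alphabets whose squared phases concentrate while the phases themselves do not. Examples are binary phases $\pm1$, or values clustered near both $0$ and $N/2$. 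It also ties into the paper's Bell-overlap and imaginarity narrative.

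\textbf{Minor points.} The Haar value $2/(N+1)$ is right for the identical product inputs $\ket{\phi}^{\otimes 2}$ you use. It is not right for arbitrary product states, where it equals $(1+|\langle\phi_1|\phi_2\rangle|^2)/(N+1)$. The detour through \Cref{lemma:distinctness-implies-EAC} and \Cref{thm:PRU-must-be-negl-dist} is unnecessary, because your explicit two-query distinguisher already violates \Cref{def:parallel-pru} directly. Your fallback for the spread-out embedding $\omega_K^{g(x)}$ would not work as sketched. There $\E[F_j^2]=0$ for $K\ge 3$, so the paper's lower bound through $\E[UU^{\mathsf T}]$ becomes vacuous. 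That case lies outside the theorem, however.
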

See \Cref{thm:two-copy-suffices} for a formal statement. Our proof gives a finer analysis of a simple distinguisher that applies the unknown unitary to both halves of a Bell state and then projects the output back onto that Bell state. This tests for failure of distinctness because the Bell-state projector is supported on the complement of the two-copy distinct subspace.

\paragraph{Input-dependent distinctness and real-valued PRUs.}Recall that for any $\delta$-distinct ensemble we quantified its imaginarity to be at least $1-\delta$ (\Cref{thm:imag-necc-for-dist}). This enforces any distinct ensemble to necessarily be complex-valued. It turns out that we can avoid the need for complex numbers altogether by asking for distinctness on specific input states. To understand the structure of these special input states, we revisit the lower bound in \Cref{thm:imag-necc-for-dist} on the complex resources (imaginarity) of an ensemble to be distinct and observe that this lower bound is witnessed by the maximally entangled input state. This lower bound, as it turns out, is inherently input state dependent. Indeed, we show that the complex resource lower bound for distinctness vanishes if we consider input states that have sufficiently small overlap with the Bell state. We formalize this condition on the input states using a natural metric, which we call the Bell overlap (\Cref{def:bell-overlap}). For a state ensemble $\calS$ and any $t$-copy input state, $\rho_{\X_1, \ldots, \X_t} \in \calS$, Bell overlap measures the maximum trace overlap over all two-copy reduced marginals of $\rho_{\X_1, \ldots, \X_t}$ with the unnormalized Bell state, $\ketbra{\Omega}{\Omega}$\footnote{Note that Bell overlap might appear to be a metric that is low for less entangled states. This is not the case. It only measures trace overlap with the Bell state, not any other (maximally) entangled state. In fact there exists maximally entangled states for which Bell overlap is $0$. See \Cref{remark:compare-grevink}.}. We show that a fully real ensemble (in this case, the {\em real} Clifford ensemble \cite{hashagen_real_2018}, or generally any orthogonal $2$-design) which is $\calO(t^2/N)$-distinct on all input states with $\calO(1)$ Bell overlap. We can go up to Bell overlap at most $N/n^{\omega(1)}$ if we only desire $\negl(n)$-distinctness, which in turn enlarges the state class which real PRUs are secure on.

\begin{theorem}\label{thm:real-PRUs}
    Let $C_{\R}$ be an $n$-qubit real random Clifford and $F_{\R} \coloneqq \sum_{x\in[N]}(-1)^{f(x)}\ketbra{x}{x}$,
for uniformly random Boolean function $f:\zo^n\to\zo$. Further, let $\calS$ be an ensemble of states with Bell overlap at most $N/n^{\omega(1)}$. Then, the ensembles $PF_{\R}C_{\R}$ and $PF_{\R}HF_{\R}$ both form PRUs on all input states in $\calS$. 
\end{theorem}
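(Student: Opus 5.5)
The plan is to reduce the statement to \Cref{thm:negl-distinct-to-PRU}, applied with inputs restricted to $\calS$. Concretely, I would show that both $C_{\R}$ and $HF_{\R}$ are $\negl(n)$-distinct on $\calS$ when $t=\poly(n)$. As a first step, I would check that the proof of \Cref{thm:negl-distinct-to-PRU} (the $PF$ twirl acting on the distinct subspace, following \cite{metger_simple_2024}) never uses anything about the input beyond the weight that $D^{\ot t}\rho D^{\ot t,\dagger}$ places on $\Pi^{\dist}$. Granting this, a $\delta$-distinct $D$ on $\calS$ gives diamond-type closeness to Haar restricted to inputs from $\calS$, with error $O(\sqrt{\delta}+t^2/N)$. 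Replacing $P$ and $F_{\R}$ by their quantum-secure pseudorandom analogues then yields PRUs on $\calS$, by the standard hybrid argument.

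Next I would reduce $t$-copy distinctness to two-copy marginals, mirroring \Cref{lemma:EAC-implies-distinctness}. By a union bound,
\[
1-\tr\bigl[\Pi^{\dist}\,\E_D[D^{\ot t}\rho D^{\ot t,\dagger}]\bigr] \le \sum_{i<j}\tr\bigl[\Pi^{\eq}_{\X_i,\X_j}\,\E_D[D^{\ot 2}\rho_{\X_i\X_j}D^{\ot 2,\dagger}]\bigr].
\]
It therefore suffices to bound the two-copy collision probability of each marginal $\rho_{ij}$ by $O(1/N)\cdot\bigl(1+\tr[\ketbra{\Omega}{\Omega}\rho_{ij}]\bigr)$, where $\ket\Omega=\sum_x\ket{x,x}$.

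For $C_{\R}$ I would use the orthogonal $2$-design property. The second moment $\E_O[O^{\ot2}(\cdot)O^{\ot2,\dagger}]$ is the orthogonal twirl, and it projects onto $\mathrm{span}\{\mathbb{I},\mathbb{F},\ketbra{\Omega}{\Omega}\}$ (identity, swap, and the transpose-twisted Bell projector). I would compute the Weingarten coefficients explicitly and pair the result with $\Pi^{\eq}$, using $\tr[\Pi^{\eq}]=N$, $\tr[\Pi^{\eq}\mathbb{F}]=N$, and $\tr[\Pi^{\eq}\ketbra{\Omega}{\Omega}]=N$. The outcome is a collision probability of the form $\frac{a+b\,\tr[\mathbb{F}\rho]+c\,\langle\Omega|\rho|\Omega\rangle}{N^2+O(N)}\cdot N$. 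Because $\mathbb{F}$ has operator norm $1$, this is $O(1/N)+O(\langle\Omega|\rho|\Omega\rangle/N)$.

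For $HF_{\R}$ I would carry out the direct $F$-twirl computation from \Cref{lemma:HF-distinct}, now with $\pm1$ phases. For binary phases the two-copy twirl keeps the pairs $(x,y),(x',y')$ with $\{x,y\}=\{x',y'\}$ and also the pairs with $x=y$, $x'=y'$. That last extra term, absent in the ternary case, is exactly the Bell-type coherence $\sum_{x,x'}\rho_{xx,x'x'}=\langle\Omega|\rho|\Omega\rangle$. After applying $H^{\ot2}$ and projecting with $\Pi^{\eq}$, each term gives a factor $1/N$, so again the bound is $O(1/N)(1+\langle\Omega|\rho|\Omega\rangle)$.

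Summing over the $\binom t2$ pairs gives $\delta\le O\bigl(t^2(1+\beta)/N\bigr)$, where $\beta\le N/n^{\omega(1)}$ is the Bell overlap. This is $\negl(n)$ for $t=\poly(n)$, and \Cref{thm:negl-distinct-to-PRU} then finishes the argument. I expect the main obstacle to be the first step, namely verifying carefully that the $PF$ analysis of \cite{metger_simple_2024} survives with a real binary $F$ and restricted inputs. The concern is that real phases might let the Bell-type component leak through the $PF$ twirl as well. I would handle this by arguing that once $D$ has mapped the state into the distinct subspace, the remaining $P F_{\R}$ analysis only needs distinctness, because collisions are exactly where real versus complex phases differ.
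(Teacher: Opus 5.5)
Your proposal is correct and follows essentially the paper's route. You bound each pairwise collision probability by $O\bigl((1+\langle\Omega|\rho_{ij}|\Omega\rangle)/N\bigr)$, via the orthogonal Weingarten twirl for $C_{\R}$ and the explicit binary $F$-twirl for $HF_{\R}$ (as in \Cref{lemma:orth-2design-dist} and \Cref{cor:binary-HF-EAC}), union bound over the pairs, and conclude with \Cref{thm:negl-distinct-to-PRU} plus pseudorandom replacement of $P$ and $F_{\R}$. The obstacle you anticipate does not arise: the phase layer in the original $PFC$ analysis is already the real binary $F$, and on distinct tuples the binary twirl already enforces $\{x_i\}=\{y_j\}$.
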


We use the $PFC_{\R}$ ($PF_{\R}HF_{\R}$ can equivalently be used) ensemble to show that in general, orthogonal and unitary designs are equivalent on input states with constant Bell overlap with error $\calO(t/\sqrt{N})$. This improves the best known to authors' knowledge bound known from Ref.\ \cite{grevink_will_2025}. Many states of interest are already captured by {\em constant} Bell overlap. This includes all PPT (\emph{positive partial transpose}) states, and a large class of NPT (\emph{negative partial transpose}) states, e.g., Choi states of traceless unitaries. See \Cref{remark:compare-grevink} for details. This goes beyond PPT barrier on the input states identified as one of the future directions left open in Ref.\ \cite{grevink_will_2025}. NPT states are necessarily entangled since negativity is a measure of entanglement. Thus NPT states with small Bell overlap informs us that the Bell overlap criterion is not a proxy of entanglement in general. It merely measures closeness with {\em one} maximally entangled (Bell) state. 

We can capture another large class of states (on which real PRUs are secure) using the Bell overlap condition. That is, pure states whose Schmidt rank is at most $N/n^{\omega(1)}$ and the same bound for mixed state but on Schmidt number (extension of Schmidt rank to mixed states defined by Terhal and Horodecki \cite[Definition 1]{terhal_schmidt_2000}). Any state with Schmidt number $k$ has Bell overlap at most $k$ \Cref{thm:SN-imply-BO}. For PRUs, we can tolerate input states with Bell overlap at most $N/n^{\omega(1)}$. This includes any $t$-partite state whose two-copy reduced marginal has Schmidt number at most $N/n^{\omega(1)}$, of which pure bipartite states with Schmidt rank at most $N/n^{\omega(1)}$ are a special case.
In fact, the set of states with Schmidt number $1$ already covers all product and separable states, which resolves the open question Brakerski and Magrafta \cite{brakerski_real-valued_2024} who have asked for real PRUs on product input states.

\subsection{Related work}

\paragraph{Distinctness in PRU constructions.}
Distinct subspaces occur as intermediate objects in the analyses of $PFC$, its (adaptive) variants, and parallel Kac's walk construction for PRUs \cite{metger_simple_2024,ma_how_2024,lu_parallel_2025,foxman_quantum_2026, schuster_strong_2025, cui_unitary_2025}. These works establish distinctness using a stronger randomization property, unitary or state 2-designs, and then exploit it inside the security proof. We instead isolate distinctness as a property of an ensemble, prove that it is necessary for every PRU, and show that the 2-design properties previously used to obtain it are unnecessary. The recent $PC$ construction \cite{foxman_quantum_2026} makes a complementary simplification to the $PFC$ ensemble: it removes the phase layer $F$  while retaining a full random Clifford, whereas we retain $PF$ but replace the Clifford by ensembles that are not even 1-design. We propose different distinct ensembles $D$, resulting in different $PFD$ ensembles with different security guarantees (computational, statistical, or input state specific) depending on the distinct ensemble $D$ and the input states allowed.

\paragraph{Anticoncentration and designs.}
Standard anticoncentration controls collisions for a fixed product input \cite{hangleiter_anticoncentration_2018,dalzell_random_2022}. Heinrich, Haferkamp, Roth, and Helsen have shown that, under local-unitary invariance, such anticoncentration is equivalent to an appropriate relative-error state $2$-design condition \cite{heinrich_anti-concentration_2025}. Our $HF_{\C}$ ensemble lies outside this invariant setting and separates the notions more strongly: it matches the unitary $2$-design bound of $\mathcal{O}(2^{-n})$-EAC on arbitrary bipartite inputs while failing to even be a state $1$-design. Operationally, however, $\mathcal{O}(2^{-n})$-EAC guarantee gives $\mathcal{O}(t^2/N)$-distinctness. This is exploited in the collision based distinguisher of Brakerski and Yuen \cite{brakerski_scalable_2026} for the $PFC$ ensemble when $t=\Theta({\sqrt{N}})$. This is due to the fact that the random $\log(n)$ depth Clifford is $\mathcal{O}(1/N)$-EAC and consequently $\mathcal{O}(t^2/N)$ distinct. This coupled with a square root loss due to the gentle measurement lemma \cite[Lemma 9]{winter_coding_1999} in the $PFC$ analysis \cite{metger_simple_2024} gives security in the regime  $t \ll \sqrt{N}$. While our distinct ensembles also do not address the case for when $\sqrt{N} \leq t \leq N$, we switch perspectives towards computational security and give depth 1 distinct ensemble, as opposed to $\log(n)$-depth Clifford, that is secure so long as $t=\poly(n)$. That is, whenever the distinguisher is computationally bounded. 

\paragraph{The alternating phase-Hadamard route to PRUs.}
Ji, Liu, and Song have proposed that a constant number of independently 
keyed phase--Hadamard layers
\begin{equation}
F_\ell H\cdots F_1H,
\qquad
F_j=\sum_{x\in[N]}\omega_N^{f_j(x)}\ketbra{x}{x},
\qquad
f_j\colon[N]\to[N],
\end{equation}
with $\omega_N = e^{2\pi i/N}$
should form a PRU \cite[Section~6.2]{ji_pseudorandom_2018}. Their proposal has been motivated in part by the design-theoretic result of Nakata et al.\  \cite{nakata_unitary_2017, nakata_efficient_2017}, who have shown that alternating random unitaries diagonal in complementary bases approach a unitary $t$-design. At the 
same time, for \emph{pseudorandom quantum states} (PRSs),
the authors of Ref.\ \cite{ji_pseudorandom_2018} have shown that $F_1H$ is already statistically indistinguishable from Haar random states and conjectured the security for the real {\em binary} phase states, where $f_j: \zo^{n}\to \zo$ is a Boolean function and $\omega_N=-1$. This has 
actually been resolved in the affirmative by Brakerski and Shmueli \cite{brakerski_pseudo_2019}. Our lower bound on the co-domain size of $f_j$ exposes a sharp gap between PRSs and PRUs. Even polynomially many independent phase-Hadamard layers cannot form a PRU when the phases are generated from functions $f_j\colon[N]\to[K]$ satisfying
\begin{equation}
\log K\leq n-\omega(\log n), \text{ equivalently }
K\leq {N}/{n^{\omega(1)}}.
\end{equation}
The binary case is already obstructed by the impossibility of real PRUs \cite{haug_pseudorandom_2024}. Our result also rules out ternary functions and, more generally, genuinely complex phase operators, with functions having co-domains size superpolynomially smaller than $2^n$. Thus, the permutation $P$ in our $PF_{\R}HF_{\C}$ construction cannot be replaced by polynomially many phase--Hadamard layers, so long as $K\leq {N}/{n^{\omega(1)}}$. This does not refute the original JLS conjecture, which uses the full alphabet $K=N$. Bostanci, Haferkamp, Hangleiter, and Poremba have more recently proposed Hamiltonian phase states based instantiation of the similar (phase-Hadamard type) PRU \cite[Section 6.6]{bostanci_efficient_2024}. Its 
security, however, remains 
conjectural (and some of it will be discussed in upcoming sRef.\ \cite{HPS_implies_OWF}).

\paragraph{Quantum resources and real PRUs.}
Haug, Bharti, and Koh have  derived imaginarity and relative entropy of coherence requirements directly from PRU security \cite{haug_pseudorandom_2024}. We have shown that both requirements already follow from distinctness. On the other hand, Brakerski 
and Yuen have proven a different entropy obstruction
\cite{brakerski_scalable_2026}. They count how many unitaries an {\em ensemble} must have in order to be a $t$-design. These authors show that diagonal unitaries, although 
they form an uncountable
set, can be replaced up to approximation by classical functions. 
Our relative entropy of coherence lower bound is robust in the sense that it also applies to distinctness (and hence pseudorandomness) of unitary ensembles beyond purely diagonal matrices, i.e., with small off-diagonal entries. On the other hand, Brakerski and Magrafta obtain real pseudorandomness on polynomially many mutually orthogonal inputs \cite{brakerski_real-valued_2024}, whereas our Bell-overlap condition far subsumes this class of states by ensuring security on states, e.g., arbitrary product states and even separable states, that are not necessarily orthogonal. All such states were assumed to be a proxy for positivity of the partial transpose of the input state. We show input states with negative partial transpose on which real PRUs (like $PFC_{\R}$) are still secure. We go even further and use our real $PFC_{\R}$ ensemble to improve the general equivalence of unitary and orthogonal twirl  devised in Ref.\ \cite{grevink_will_2025} devised only on PPT input states to constant Bell overlap states, which not only includes all PPT states, but also many NPT states.

\subsection{Discussion and open questions}

 In this work, we have identified distinctness as a  property that replaces the need for statistical pseudorandomness needed to obtain computational pseudorandomness in the case of non-adaptive forward query PRUs.  We used this `lens of distinctness' to understand various properties of PRUs: new constructions and their limitations, quantum resource constraints (e.g., on imaginarity), and when do such resource constraints vanish when we restrict the input states that can be used by the distinguisher. There are several ways to further develop this program.

\paragraph{Distinctness in stronger query models and beyond.}
A natural next step is to extend the theory of distinctness to stronger query models, including inverse, transpose, conjugate, controlled, and adaptive access. Is there a hierarchy of distinctness conditions corresponding to these increasingly powerful notions of security? Can such conditions again be separated from unitary-design properties and used both to construct and to rule out PRUs and $t$-designs secure against the corresponding query models? A satisfactory theory should distinguish the statistical properties genuinely required by each form of oracle access from those that arise only as artifacts of current proof techniques.
Beyond PRUs, we expect that in applications where collision suppression is the principal requirement, distinct ensembles can be substantially cheaper to realize than unitary $2$-designs. Recently, Ref.~\cite{foxman_quantum_2026} introduced a ``non-plussed'' version of the distinct subspace, which additionally has no support on $\ketbra{+}{+}^{\ot t}$. In 
the language of the present work, a unitary $2$-design remains $\mathcal{O}(t^2/N)$-distinct with respect to this strengthened subspace. This observation is used in Ref.~\cite{foxman_quantum_2026} to show that $PC$ is a PRU, thereby removing the random-function layer $F$ from the $PFC$ construction.
In the light of this, it
would be valuable to determine whether the distinct ensembles constructed here are also non-plussed distinct, and whether the corresponding PRU constructions remain secure after removing the binary phase operator $F$. 
In particular, is the depth-one ensemble of independent single-qubit Clifford gates non-plussed distinct, and would this imply that
$P\bigotimes_{i=1}^{n} C_i$
is a PRU? An affirmative answer would arguably yield one of the simplest PRU constructions known.

\paragraph{Real-valued pseudorandomness and applications.}
Another similar question is whether the real- and complex-valued PRUs and $t$-designs proposed here remain secure against stronger forms of query accesses. For real-valued pseudorandom constructions, it would also be useful to identify interesting and, ideally, physically relevant ensembles of input states with bounded Bell overlap. This would imply security of real unitary designs and PRUs, such as $PFC_{\R}$, on substantially larger classes of input states. Our connection with Schmidt numbers provides a natural starting point for such a classification.
Our results show, in particular, that the obstruction to real-valued pseudorandomness is not entanglement by itself, but sufficiently large overlap with the canonical maximally entangled Bell state. 

Moreover, the security of real-valued PRUs on states of small Bell overlap implies that a distinguisher between real- and complex-valued PRUs can serve as a witness of non-trivial Bell overlap in the distinguisher's query state. It would be interesting to develop this observation into a certification or resource-detection protocol beyond the immediate setting of pseudorandomness.

\paragraph{Distinctness beyond PRUs.}
Our results suggest a broader programme of replacing design conditions by weaker operational properties tailored to the task at hand. 
This perspective may also be useful beyond pseudorandom unitaries. Distinct ensembles may provide cheaper replacements for unitary designs in randomized measurements, benchmarking, sampling, and cryptographic protocols in which repeated output labels constitute the relevant failure mode. Understanding which of these applications genuinely require design-level randomness, and which require only collision suppression or related operational properties, is an interesting direction for future work.

\paragraph{Outlook.}
The theory of quantum pseudorandomness has largely developed through two complementary languages. Statistical notions such as unitary designs, quantify how closely an ensemble resembles Haar randomness, whereas computational pseudorandomness characterize what efficient observers can distinguish. Distinctness brings these perspectives closer together. 
It is our hope that this specific viewpoint will help replace unnecessarily strong conditions by sharper, and ideally, operational criteria.
More broadly, identifying such intermediate notions may lead to a `modular' theory of quantum pseudorandomness.
Such a theory would 
not only simplify constructions, but could also help identify which experimentally accessible families of quantum states and unitaries possess precisely the randomness required for cryptographic applications.

\subsection{Acknowledgements} 
We thank Lennart Bittel, John Bostanci, Lorenzo Grevink, Jonas Haferkamp, Tobias Haug, and Jonas Helsen for helpful discussions and useful feedback on the manuscript. B.F.\ acknowledges support from AFOSR (FA9550-21-1-0008 and
FA9550-26-1-B214). This
material is based upon work partially supported by the National
Science Foundation under
Grant CCF-2044923 (CAREER). The Berlin team 
acknowledges funding by the BMFTR (Hybrid++,
MuniQC-Atoms), the Munich Quantum Valley, Berlin Quantum, the 
Quantum Flagship (Millenion, PasQuans2), the European Research Council (DebuQC),
the Clusters of Excellence (MATH+, ML4Q), and the 
DFG (CRC 183, SPP 2514, and BoLaCo).

\subsection*{AI statement} 
The authors started working on this project in early 2025 and obtained preliminary versions of the main results, namely constructions of distinct ensembles, their properties, and their applications to pseudorandomness, with the notable exception of \Cref{prop:single-layer-cliffords-eac}, which was proposed by ChatGPT 5.6 Sol as a counter-example to authors' conjecture that a $\negl(n)$-distinct ensemble must be entangling. In addition to literature search and assistance with exposition and verifying technical correctness, ChatGPT 5.5 and 5.6 Pro were used to devise proof strategies for all the main results. Authors independently verified all the proofs and take full responsibility for the content. 

\section{Preliminaries}
\paragraph{Notation.}
We denote the number of qubits by $n \in \mathbb{N}$ and set
$N := 2^n$, so that each quantum register is identified with the  complex Hilbert space $\mathbb{C}^N$ of dimension $N$.  The computational basis $\zo^n$ are indexed by $[N] := \{1,\ldots,N\}$. We write $\calL(\C^N)$ for the space of linear operators on $\C^N$. For every linear operator $X \in \calL(\C^N)$, we define the Schatten $p$-norm of $X$ by  $\|X\|_p := \left( \operatorname{Tr}\left[ |X|^p \right] \right)^{1/p}$, 
for all $p \in [1,\infty]$, where $|X| := \sqrt{X^\dagger X}$. We represent $n$-qubit Hadamard transform $H^{\ot n}$ by $H$ and transpose as $\mathsf{T}$.
The parameter $t=t(n)$ denotes the number of oracle queries. For the purposes of this paper, we are primarily interested in the regime when $t$ is polynomially bounded in $n$. Throughout, sans-serif capital letters denote quantum registers. We write $\mathsf{X}_1,\ldots,\mathsf{X}_t$ for $t$ registers, each associated with a Hilbert space isomorphic to $\mathbb{C}^N$, and $\mathsf{R}$ for an arbitrarily large auxiliary register. Whenever obvious from context, we will not explicitly write registers on which an operator is supported in order to simplify presentation. We further define
\begin{equation}
[N]^t_{\dist}
:=
\left\{
(x_1,\ldots,x_t)\in [N]^t : x_i \neq x_j \text{ for all } i\neq j
\right\},
\end{equation}
to be the set of all pairwise distinct $t$-tuples. The orthogonal projector onto the corresponding distinct subspace is
\begin{equation}
\Pi^{\dist}
:=
\sum_{(x_1,\ldots,x_t)\in [N]^t_{\dist}}
|x_1,\ldots,x_t\rangle\langle x_1,\ldots,x_t|,
\end{equation}
and we denote the complementary projector by
\begin{equation}
\overline{\Pi}^{\dist}
:=
\Id_{\mathsf{X}_1,\ldots, \mathsf{X}_t}
-
\Pi^{\dist},
\end{equation}
which can be thought of `detecting a collision' on the full $t$-copy Hilbert space. An `equality projector' detects a collision on a given bipartition.
\begin{equation}
{\Pi}^{\eq}_{\X_i, \X_j} = \sum_{x\in [N]} \ketbra{x}{x}_{\X_i} \otimes \ketbra{x}{x}_{\X_j},
\end{equation}
and acts as identity on the remaining registers. $\overline{\Pi}^{\dist}$ and ${\Pi}^{\eq}_{\X_i, \X_j}$ are related by an operator inequality (\Cref{fact:dist-eq-op-ineq})

We write $\mu_{\Haar}$ for the normalized Haar measure on the relevant compact group. Depending on context, this denotes the normalized Haar measure on $\mathrm{U}(N)$ or on $\mathrm{O}(N)$.
We denote by $\Cl_{\C}(N)$  (resp. $\Cl_{\R}(N)$) the complex (resp. real) Clifford group of dimension $N$. 
Given a finite set $\calE$, we write $x \leftarrow \calE$ to mean that $x$ is sampled uniformly from $\calE$. More generally, if $\mu$ is a probability measure on a measurable space $X$, then $x \leftarrow \mu$ means that $x$ is sampled according to $\mu$.
The following two operators 
\begin{equation}
|\Omega\rangle := \sum_{x\in [N]} |x,x\rangle \qquad \SWAP
=
\sum_{x,y\in[N]}
\ket{x,y}\bra{y,x}
\end{equation}
on $\calH^{\ot 2}$ shall appear multiple 
times in this work.
The unnormalized maximally entangled state 
on two $N$-dimensional registers is denoted as $\ket{\Omega}$ and normalized maximally entangled state vector  $\ket{\Omega}/\sqrt{N}$ is denoted by $\ket{\Phi}$.

Let $\Sym_N$ be the symmetric group on $N$ elements. Then for $\pi \in \Sym_N$,
\begin{equation}
P_{\pi} := \sum_{x \in [N]} \ketbra{\pi(x)}{x}.
\end{equation}
We will omit the dependence on $\pi$ and write $P$ to simplify notation. Unless otherwise specified, $P$ corresponds to $P_{\pi}$, where $\pi \leftarrow \Sym_N$. For a diagonal phase operator in the computational basis, we write
\begin{equation}
F_j=\sum_{x\in[N]}\omega_q^{\,f_j(x)}\ketbra{x}{x},\qquad f_j:[N]\to \{0,1,\ldots, K-1\},
\end{equation}
where $\omega_q\coloneqq e^{2\pi i/q}$ is the $q$-th root of unity for some $q\geq2$ and $f_j$ is sampled uniformly randomly from the set of all functions $[N] \to \{0,1,\ldots, K-1\}$, where $K \leq N$. To avoid clutter, we will simply write $F$, which is to be understood as a diagonal phase operator with complex phases. Whenever needed, we will make the parameters $q$ and $K$ explicit.

We write $\negl(n)$ for negligible functions. That is any function that is $o(1/n^c)$ for all $c>0$.
Unless stated otherwise, we only consider PRUs against adversaries that have \emph{forward-only} oracle access and
can query the oracle \emph{only in parallel} (non-adaptively). Our goal is to understand the necessary resources
for constructing PRUs, so we work with this weakest form of security. Any lower bound or impossibility result
proved in this regime applies \emph{a fortiori} to stronger notions of PRU security (for example, allowing
adaptive queries or inverse-oracle access).
\begin{definition}[Parallel (forward) pseudorandom unitaries]\label{def:parallel-pru}
We say $\{\mathcal{U}_n\}_{n\in\mathbb{N}}$ is a secure \emph{parallel forward PRU} if, for all $n\in\mathbb{N}$,
\begin{equation}
\mathcal{U}_n=\{U_k\}_{k\in\mathcal{K}_n}
\end{equation}
is a set of $n$-qubit unitaries (where $\mathcal{K}_n$ denotes the keyspace) satisfying the following properties:
\begin{itemize}
  \item \textbf{Efficient computation:} There exists a $\poly(n)$-time quantum algorithm that implements the
  $n$-qubit unitary $U_k$ on input $k\in\mathcal{K}_n$.

  \item \textbf{Parallel forward-query indistinguishability from Haar:}
  For any QPT algorithm $\calA$, whose oracle access is \emph{forward-only} and \emph{non-adaptive}, and that measures a two-outcome observable $D_{\X\mathsf{R}}$
  with eigenvalues $\{0,1\}$ after the queries, we have
  \begin{equation}
  \left|
  \mathbb{E}_{\mathcal{O}\leftarrow \mathcal{U}_n}\,
  \tr\left(D_{\X,\mathsf{R}} \cdot \left|\calA^{\mathcal{O}}\right\rangle\left\langle \calA^{\mathcal{O}}\right|_{\X\mathsf{R}}\right)
  -
  \mathbb{E}_{\mathcal{O}\leftarrow \mu_{\Haar}}\,
  \tr\left(D_{\X,\mathsf{R}} \cdot \left|\calA^{\mathcal{O}}\right\rangle\left\langle \calA^{\mathcal{O}}\right|_{\X,\mathsf{R}}\right)
  \right|
  \le \negl(n),
  \end{equation}
  where
 $\left|\calA^{\mathcal{O}}\right\rangle \coloneqq \calA_{1, \X,\mathsf{R}} \big(\mathcal{O}^{\otimes q(n)}_{\X} \otimes \Id_{\mathsf{R}} \big) \calA_{0, \X,\mathsf{R}}\ket{\psi_0}_{\X,\mathsf{R}}$, for any $q(n)=\poly(n)$, where $\mathsf{X}$ is a $q(n)$ qubit query register, $\mathsf{R}$ is an arbitrarily large reference register, $\calA_{0, \X,\mathsf{R}}\ket{\psi_0}, \calA_{1, \X,\mathsf{R}}$ are arbitrary unitaries on the joint register $\X,\mathsf{R}$, and $\ket{\psi_0}_{\X,\mathsf{R}}$ is some efficiently preparable quantum state vector.
\end{itemize}
\end{definition}
We will also consider ensembles that form PRUs on a set of input states $\cal S$, meaning that the QPT distinguisher is only allowed to query the unknown unitary on states in $\cal S$.
Whenever we say $\cal E$ is a PRU ensemble we mean a PRU in the sense of \Cref{def:parallel-pru}, unless otherwise specified.

\paragraph{Representation theory background.}
Let $\operatorname{GL}(N)$ be the group of invertible complex $N\times N$ matrices, equivalently the group of invertible linear maps on $\calH$. We will use the following notions.
\begin{definition}[Representation]
Let $G$ be a group. A representation of $G$ 
on $\calH$ is a group homomorphism
$R:G\to \operatorname{GL}(N)$. Given such a representation $R$, the associated $t$-fold tensor representation acts on $\calH^{\otimes t}$ by
\begin{equation}
g \mapsto R(g)^{\otimes t},
\qquad g\in G.
\end{equation}
If $R$ is a representation, then $g\mapsto R(g)^{\otimes t}$ is also a representation of $G$.
\end{definition}

\begin{definition}[$t$-th order commutant]\label{def:t-th-order-commutant}
Let $R$ be a representation of a group $G$ on $\calH$. The $t$-th order commutant of $G$ with respect to $R$, denoted $\Comm(G, t)$, is the subspace of $\calL\left((\C^N)^{\otimes t}\right)$ given by
\begin{equation}
\Comm(G, t)
=
\bigl\{
A\in \calL\left((\C^N)^{\otimes t}\right)
\big|
[A,R(g)^{\otimes t}]=0 \text{ for all } g\in G
\bigr\}.
\end{equation}
\end{definition}

We will primarily work with the unitary and the orthogonal groups of dimension $N$, which we denote as $\rmU(N)$ and $\rmO(N)$, respectively. 
We interchangably use the terms $t$-wise twirl and $t$-th moment operator.
\begin{definition}[$t$-wise twirl]
Let $\calE$ be an ensemble of unitary operators. The associated $t$-fold twirling channel, also called the $t$-th moment operator, is the linear map
\begin{equation}
\mathcal{M}_{\calE}^{(t)}(X)
:=
\mathbb{E}_{U \leftarrow \calE}\left[
U^{\otimes t} X U^{\dagger,\otimes t}
\right],
\qquad
X \in \mathcal{L}\bigl((\mathbb{C}^N)^{\otimes t}\bigr).
\end{equation}
If $U$ is distributed according to Haar measure on $\mathrm{U}(d)$, we write
\begin{equation}
\mathcal{M}_{U_{\Haar}}^{(t)}(X)
:=
\mathbb{E}_{U \leftarrow \mu_{\Haar}}\left[
U^{\otimes t} X U^{\dagger,\otimes t}
\right].
\end{equation}
\end{definition}
The twirling channel w.r.t.\ the orthgonal group is similarly defined and denoted as $\mathcal{M}_{O_{\Haar}}^{(t)}(X)$.

\begin{definition}[Haar measure]
Let $\rmU(2^n)$ be the group of $n$-qubit unitaries. The Haar measure on $U(2^n)$ is the unique probability measure $\mu_{\mathrm{Haar}}$ on $U(2^n)$ that is invariant under both left and right multiplication, namely
\begin{equation}
\mu_{\mathrm{Haar}}(VS)=\mu_{\mathrm{Haar}}(S)
\quad\text{and}\quad
\mu_{\mathrm{Haar}}(SV)=\mu_{\mathrm{Haar}}(S)
\end{equation}
for every measurable set $S \subseteq U(2^n)$ and every $V \in U(2^n)$.
\end{definition}

For brevity, we will sometimes use $U_{\Haar}$ and $O_{\Haar}$ to denote a Haar random unitary or orthgonal matrix.

\begin{definition}[Unitary $t$-design]
A distribution $\mathcal{D}$ over $n$-qubit unitaries is called a \emph{unitary $t$-design} if
\begin{equation}
    \E_{U \leftarrow \mathcal{D}}\left[ U^{\otimes t}\otimes U^{\dagger,\otimes t}\right]
    =
    \int_{U(2^n)} U^{\otimes t}\otimes U^{\dagger,\otimes t}\, d\mu(U),
\end{equation}
where $\mu$ denotes the Haar measure on $U(2^n)$.
\end{definition}

Since we mostly address necessary conditions for PRUs, the natural choice is the standard notion of an additive error or diamond norm error $t$-design.

\begin{definition}[Additive error unitary $t$-design]
    For $\varepsilon > 0$, an ensemble $\calE$ is an $\varepsilon$-approximate additive error unitary $t$-design if 
    \begin{equation}
\bigl\|
\mathcal{M}_{\calE}^{(t)}-\mathcal{M}_{U_{\Haar}}^{(t)}
\bigr\|_\diamond
\leq
\varepsilon,   
\end{equation}
where $\bigl\|
\mathcal{M}_{\calE}^{(t)}-\mathcal{M}_{\calE'}^{(t)}
\bigr\|_\diamond \coloneqq \max_{\rho}\bigl\| \mathcal{M}_{\calE}^{(t)}(\rho)-\mathcal{M}_{\calE'}^{(t)}(\rho) \bigr\|_1$. The maximization is over all states on $nt$ system qubits with arbitrarily large auxiliary registers.
\end{definition}

An ensemble is called a {\em state} $t$-design if the input states are restricted to be of the form $\rho = \ketbra{\psi}{\psi}^{\ot t}$ for any (normalized) $\ket{\psi} \in (\C^2)^{\ot n}$ \cite{mele_introduction_2024,haah_efficient_2024}. Additive error orthogonal designs 
are similarly defined but w.r.t.\ the 
orthogonal group $\rmO(N)$.

\paragraph{Useful facts.}
Here we recall some well-known facts that we will refer to later.

\begin{fact}[Dirichlet kernel identity]\label{fact:Dirichlet-kernel-identity}
$\sum_{k=0}^{N-1} e^{ikx}
=
e^{i(N-1)x/2}\,\frac{\sin(Nx/2)}{\sin(x/2)}$.
\end{fact}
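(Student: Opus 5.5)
We prove \Cref{fact:Dirichlet-kernel-identity} as a direct geometric-series computation, followed by factoring out half-angle phases. Throughout we assume $x \notin 2\pi\mathbb{Z}$, so that $\sin(x/2)\neq 0$. For $x\in 2\pi\mathbb{Z}$ the left-hand side equals $N$ trivially, and the right-hand side must be read as its continuous limit, which is also $N$.

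\emph{Step 1 (summing the series).} Set $z \coloneqq e^{ix}$. Since $x\notin 2\pi\mathbb{Z}$, we have $z\neq 1$, so the finite geometric series gives
\begin{equation}
\sum_{k=0}^{N-1} e^{ikx} = \sum_{k=0}^{N-1} z^k = \frac{z^N-1}{z-1} = \frac{e^{iNx}-1}{e^{ix}-1}.
\end{equation}

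\emph{Step 2 (factoring out half-angles).} We pull out a half-angle phase from the numerator and from the denominator:
\begin{equation}
e^{iNx}-1 = e^{iNx/2}\bigl(e^{iNx/2}-e^{-iNx/2}\bigr) = 2i\, e^{iNx/2}\sin(Nx/2),
\end{equation}
\begin{equation}
e^{ix}-1 = e^{ix/2}\bigl(e^{ix/2}-e^{-ix/2}\bigr) = 2i\, e^{ix/2}\sin(x/2).
\end{equation}
Both identities follow from $\sin\theta = (e^{i\theta}-e^{-i\theta})/(2i)$.

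\emph{Step 3 (taking the quotient).} Dividing the two expressions, the factors $2i$ cancel and the phases combine as $e^{iNx/2}/e^{ix/2} = e^{i(N-1)x/2}$. This yields
\begin{equation}
\sum_{k=0}^{N-1} e^{ikx} = e^{i(N-1)x/2}\,\frac{\sin(Nx/2)}{\sin(x/2)},
\end{equation}
which is the claimed identity.

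No step poses a real obstacle. The only point needing care is the degenerate case $x\in 2\pi\mathbb{Z}$, which we handled by interpreting the right-hand side as its limit, equal to $N$.
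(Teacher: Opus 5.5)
Your proof is correct. The paper recalls this identity as a well-known fact without proof, and your geometric-series summation followed by factoring out the half-angle phases $e^{iNx/2}$ and $e^{ix/2}$ is the standard way to establish it. Your handling of the degenerate case $x\in 2\pi\mathbb{Z}$ by continuity is fine, though the paper's only use of the identity (in \Cref{thm:two-copy-suffices}, with $2^k$ terms and $x=4\pi/N$) needs only the non-degenerate case.
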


\begin{fact}[Collision-projector inequalities]\label{fact:dist-eq-op-ineq}
${\Pi}^{\mathsf{eq}}_{\X_i, \X_j} \preceq \overline{\Pi}^{\dist} \preceq \sum_{1\leq i<j\leq t} {\Pi}^{\mathsf{eq}}_{\X_i, \X_j}$
\end{fact}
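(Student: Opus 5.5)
All three operators in the statement are diagonal in the computational product basis $\{\ket{x_1,\ldots,x_t}\}_{(x_1,\ldots,x_t)\in[N]^t}$ of $\X_1,\ldots,\X_t$, so they commute. I will therefore prove the operator inequalities by comparing eigenvalues entrywise on each basis vector. For diagonal Hermitian operators $A=\sum_{\mathbf{x}}a_{\mathbf{x}}\ketbra{\mathbf{x}}{\mathbf{x}}$ and $B=\sum_{\mathbf{x}}b_{\mathbf{x}}\ketbra{\mathbf{x}}{\mathbf{x}}$ in a common basis, $A\preceq B$ holds if and only if $a_{\mathbf{x}}\le b_{\mathbf{x}}$ for every $\mathbf{x}$. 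This is because $B-A$ is then diagonal with nonnegative entries.

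The first step is to write down the eigenvalues. Fix $1\le i<j\le t$. Here $\Pi^{\eq}_{\X_i,\X_j}$ is understood as $\sum_x\ketbra{x}{x}_{\X_i}\otimes\ketbra{x}{x}_{\X_j}$ tensored with the identity on the remaining registers. On $\ket{x_1,\ldots,x_t}$ it has eigenvalue $\mathbf{1}[x_i=x_j]$. Since $\overline{\Pi}^{\dist}=\Id-\Pi^{\dist}$, it has eigenvalue $\mathbf{1}[\exists\, k<l:\ x_k=x_l]$, the indicator that the tuple lies outside $[N]^t_{\dist}$. The sum $\sum_{k<l}\Pi^{\eq}_{\X_k,\X_l}$ has eigenvalue $c(\mathbf{x}):=\#\{k<l: x_k=x_l\}$, the number of colliding pairs.

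The second step compares these eigenvalues. For the left inequality, if $x_i=x_j$ then the tuple has a collision, so $\mathbf{1}[x_i=x_j]\le\mathbf{1}[\mathbf{x}\notin[N]^t_{\dist}]$. For the right inequality, the indicator of having at least one collision is $\mathbf{1}[c(\mathbf{x})\ge1]$. Since $c(\mathbf{x})$ is a nonnegative integer, $\mathbf{1}[c(\mathbf{x})\ge 1]\le c(\mathbf{x})$. This is simply the union bound, applied pointwise on basis states.

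There is no real obstacle here. The only points needing care are conventions: $\Pi^{\eq}_{\X_i,\X_j}$ must be read as embedded in the full $t$-register space, and one needs $t\ge2$ and $i\neq j$ so that the statement is nonvacuous. Both are implicit in the setting of \Cref{def:delta-distinctness}.
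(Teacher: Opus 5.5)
Your proof is correct and takes the same approach as the paper. Both arguments observe that all the operators are diagonal in the computational basis and reduce the two operator inequalities to the pointwise scalar inequalities $\mathbf{1}[x_i=x_j]\le \mathbf{1}[\mathbf{x}\notin[N]^t_{\mathrm{dist}}]\le \#\{k<l: x_k=x_l\}$, which amount to "a collision in $(i,j)$ is a collision" and a union bound.
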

\begin{proof}
  Since $\overline{\Pi}^{\dist}$ and $\Pi^{\eq}_{\X_i,\X_j}$ are diagonal in the computational basis $(x_1,\ldots,x_t)\in [N]^t$, it suffices to compare diagonal matrix elements on each basis vector, which reduces to the scalar inequality
\begin{equation}
\mathbf{1}[x_i=x_j] \leq  \mathbf{1}\left[(x_1,\ldots,x_t)\notin [N]^t_{\mathrm{dist}}\right]
\leq
\sum_{1\leq i<j\leq t}\mathbf{1}[x_i=x_j].
\end{equation}
The first inequality holds because if $x_i=x_j$, then the tuple is not pairwise distinct. The second inequality holds because a non-distinct tuple contains at least one colliding pair.
\end{proof}

\begin{fact}[Character orthogonality]\label{fact:character-ortho}
For $x, w\in\mathbb{F}_2^n$, define the character $\chi_w:\mathbb{F}_2^n\to\{\pm1\}$
\begin{equation}
\chi_w(x):=(-1)^{\langle x,w\rangle}, \qquad
\langle x,w\rangle := \sum_{i=1}^n x_i w_i \pmod 2.
\end{equation}
Then we have that
$
\mathbb{E}_{x\in\mathbb{F}_2^n}\big[\chi_w(x)\chi_{w'}(x)\big]=\delta_{w,w'}.
$
Equivalently,
$
\sum_{x\in\mathbb{F}_2^n}\chi_w(x)\chi_{w'}(x)
=
N\delta_{w,w'}.
$
\end{fact}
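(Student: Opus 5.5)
The statement to prove is the last one stated, namely Fact~\ref{fact:character-ortho} (character orthogonality). I would prove it by direct computation, using that characters are multiplicative in their index and factorize over coordinates.

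\textbf{Step 1: combine the two characters.} For $x,w,w'\in\mathbb{F}_2^n$, bilinearity of $\langle\cdot,\cdot\rangle$ over $\mathbb{F}_2$ gives
\begin{equation}
\chi_w(x)\chi_{w'}(x)=(-1)^{\langle x,w\rangle+\langle x,w'\rangle}=(-1)^{\langle x,w\oplus w'\rangle}=\chi_{w\oplus w'}(x).
\end{equation}
So it suffices to show $\sum_{x}\chi_u(x)=N\delta_{u,0}$, where $u:=w\oplus w'$. Note that $u=0$ if and only if $w=w'$.

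\textbf{Step 2: factorize the sum over coordinates.} Write the sum as a product over the $n$ coordinates:
\begin{equation}
\sum_{x\in\mathbb{F}_2^n}(-1)^{\sum_i x_iu_i}=\prod_{i=1}^n\sum_{x_i\in\{0,1\}}(-1)^{x_iu_i}=\prod_{i=1}^n\bigl(1+(-1)^{u_i}\bigr).
\end{equation}
Each factor equals $2$ if $u_i=0$ and $0$ if $u_i=1$. Hence the product is $2^n=N$ when $u=0$ and $0$ otherwise.

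\textbf{Step 3: conclude.} Dividing by $N=|\mathbb{F}_2^n|$ gives the expectation form $\mathbb{E}_x[\chi_w(x)\chi_{w'}(x)]=\delta_{w,w'}$.

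There is no real obstacle here. The only point needing care is that the exponent is reduced mod $2$, and this is harmless because $(-1)^k$ depends only on $k \bmod 2$. As an alternative to the product in Step 2, one can fix a coordinate $i$ with $u_i=1$ and pair each $x$ with $x\oplus e_i$. This is a sign-reversing involution, so the sum vanishes.
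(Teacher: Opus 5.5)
Your proof is correct and complete. The reduction $\chi_w(x)\chi_{w'}(x)=\chi_{w\oplus w'}(x)$, followed by the coordinatewise factorization into $\prod_{i=1}^n\bigl(1+(-1)^{u_i}\bigr)$, is the standard argument. Your remark that the mod-$2$ reduction is harmless is the one point that needs saying.

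The paper gives no proof to compare against, since it lists this among well-known facts. It only invokes the fact in your reduced form $\sum_{x}(-1)^{x\cdot u}=N\delta_{u,0}$, with $u=a\oplus a'$ and similar choices, in the proof of \Cref{lemma:HF-distinct}. Your Step~1 is therefore exactly the bridge between the stated two-character version and the version the paper uses.
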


\section{Necessity and characterization of distinctness} \label{sec:distinctness-and-EAC-characterization}
\medskip
\paragraph{Distinctness, entangled anticoncentration and the regime of their equivalence.}

\begin{lemma}[$\delta$-distinctness implies $\delta$-EAC]\label{lemma:distinctness-implies-EAC}
    An $n$-qubit unitary ensemble $\calE$ is $\delta$-EAC if it is $\delta$-distinct.
\end{lemma}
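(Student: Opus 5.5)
The plan is to embed an arbitrary bipartite input into a $t$-register input and then apply the left operator inequality of \Cref{fact:dist-eq-op-ineq}. Fix any bipartite state $\omega_{\X_1,\X_2}$ on two $n$-qubit registers. The definition of $\delta$-distinctness applies for every $2\le t\le N$, so we may simply take $t=2$. In that case the complement of the distinct subspace is exactly the equality projector: $\overline{\Pi}^{\dist}=\Id-\Pi^{\dist}=\Pi^{\eq}_{\X_1,\X_2}$, since a pair $(x_1,x_2)$ fails to be distinct iff $x_1=x_2$.

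Applying $\delta$-distinctness with $\rho_{\X_1,\X_2}=\omega_{\X_1,\X_2}$ gives
\begin{equation}
\tr\left[\Pi^{\eq}_{\X_1,\X_2}\,\E_{U\leftarrow\calE}\left[U^{\ot 2}\omega U^{\ot 2,\dagger}\right]\right]
=1-\tr\left[\Pi^{\dist}\,\E_{U\leftarrow\calE}\left[U^{\ot 2}\omega U^{\ot 2,\dagger}\right]\right]\le \delta .
\end{equation}
Here we use that the twirled state has unit trace, because $U^{\ot 2}$ is unitary and $\omega$ is a normalized state. The labels $\X_i,\X_j$ in \Cref{def:delta-EAC} are immaterial, since the ensemble acts identically on each register. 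This is exactly $\delta$-EAC.

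If one instead insists on working at a fixed larger $t$, the same argument goes through with the input $\rho=\omega_{\X_i,\X_j}\otimes\sigma$ on the remaining registers, for any state $\sigma$. The twirl factorizes across the two groups of registers, and \Cref{fact:dist-eq-op-ineq} gives $\Pi^{\eq}_{\X_i,\X_j}\preceq\overline{\Pi}^{\dist}$. Since the twirled state is positive, the trace against $\Pi^{\eq}_{\X_i,\X_j}\otimes\Id$ is at most the trace against $\overline{\Pi}^{\dist}$, which is at most $\delta$. The partial trace over the extra registers then removes $\sigma$.

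There is no real obstacle here. The only points needing care are the trace-one normalization of the twirled state and the positivity that lets the operator inequality pass to traces.
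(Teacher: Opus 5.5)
Your proof is correct, but your main argument takes a shorter route than the paper's. You use that \Cref{def:delta-distinctness} quantifies over every $2\le t\le N$ and specialize to $t=2$. There $\overline{\Pi}^{\dist}=\Pi^{\eq}_{\X_1,\X_2}$ holds exactly, so the $\delta$-distinctness condition at $t=2$ literally is the $\delta$-EAC condition, and no operator inequality is needed.

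The paper instead works at an arbitrary fixed $t$. It applies $\Pi^{\eq}_{\X_i,\X_j}\preceq\overline{\Pi}^{\dist}$ from \Cref{fact:dist-eq-op-ineq} together with positivity of the twirled state. It then rewrites the left-hand side in terms of the two-register marginal $\rho_{\X_i,\X_j}$ and finally picks $\rho=\omega_{\X_i,\X_j}\otimes\tau$. This is precisely your fallback paragraph.

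Your route is the most economical given the definition as stated. The paper's route is more robust: it shows that distinctness at any single $t\ge 2$ (for instance, only at the number of queries an adversary makes) already gives the EAC bound. More generally, it shows that distinctness on a $t$-register input $\rho$ gives the collision bound on every two-register marginal of $\rho$. That is the form you need when distinctness is only assumed on a restricted class of inputs.

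One small wording fix in your fallback: $\E_U[U^{\ot t}(\omega\otimes\sigma)U^{\dagger,\ot t}]$ does not itself factorize, since the same $U$ acts on all registers. What you need, and what holds, is that for each fixed $U$ the conjugated state is a product. So the partial trace over the extra registers passes through the expectation and returns $\E_U[U^{\ot 2}\omega\, U^{\dagger,\ot 2}]$.
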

\begin{proof}
    First, let us assume that $\calE$ is $\delta$-distinct (\Cref{def:delta-distinctness}). 
    Then,
    \begin{equation}
\tr\left[
\overline{\Pi}^{\dist}
\E_{U\leftarrow\calE}\left[
U^{\otimes t}\rho_{\X_1\cdots \X_t}U^{\dagger,\otimes t}
\right]
\right]
\le \delta.
\label{eq:distinctness-hypothesis}
\end{equation}
By \Cref{fact:dist-eq-op-ineq}, we have that for any $i\neq j \in [t]$
\begin{equation}
    \Pi^{\eq}_{\X_i, \X_j} \preceq \overline{\Pi}^{\dist}.
\end{equation}
Due to the moment operator being positive semi-definite, $\E_{U\leftarrow \calE}\left[U^{\ot t} \rho_{\X_1\cdots \X_t}U^{\ot t, \dagger}\right]\succeq 0$, 
\begin{equation}
     \tr\left[\Pi^{\eq}_{\X_i, \X_j} \E_{U\leftarrow \calE}\left[U^{\ot t} \rho_{\X_1\cdots \X_t}U^{\ot t, \dagger}\right]\right] \leq \tr\left[
\overline{\Pi}^{\dist}
\E_{U\leftarrow\calE}\left[
U^{\otimes t}\rho_{\X_1\cdots \X_t}U^{\otimes t, \dagger}
\right]
\right].
\label{eq:tr-eq-leq-tr-distinct-comp}
\end{equation}
Then
\begin{align}
\nonumber
     \tr\left[\Pi^{\eq}_{\X_i, \X_j} \E_{U\leftarrow \calE}\left[U^{\ot t} \rho_{\X_1\cdots \X_t}U^{\ot t, \dagger}\right]\right] &= \E_{U\leftarrow \calE}  \tr\left[\left[U_{\X_i}^{\dagger}\ot U_{\X_j}^{\dagger} \Pi^{\eq}_{\X_i, \X_j} U_{\X_i} \ot U_{\X_j}\right] \rho_{\X_1\cdots \X_t}\right] \\
    &= \tr\left[\Pi^{\eq}_{\X_i, \X_j} \E_{U\leftarrow \calE} \left[U_{\X_i} \ot U_{\X_j}\rho_{\X_i, \X_j}U_{\X_i}^{\dagger}\ot U_{\X_j}^{\dagger}\right] \right],
    \label{eq:tr-eq-2-RDM}
\end{align}
where the first step follows by linearity and cyclicity of trace and noting that $\Pi^{\eq}_{\X_i, \X_j}$ acts non-trivially only on registers $\X_i \textnormal{ and } \X_j$. The second step again uses cyclicity and linearity of trace, in addition to the identity $\tr[(M_{\X} \ot \Id_{\X'})Y_{\X, \X'}] = \tr[M_{\X} \tr_{\X'}[Y_{\X, \X'}]]$ that holds for all linear operators $M_{\X}$ and $Y_{\X, \X'}$. Substituting \Cref{eq:tr-eq-2-RDM} in \Cref{eq:tr-eq-leq-tr-distinct-comp}, we get
\begin{equation}
    \tr\left[\Pi^{\eq}_{\X_i, \X_j} \E_{U\leftarrow \calE} \left[U_{\X_i} \ot U_{\X_j}\rho_{\X_i, \X_j}U_{\X_i}^{\dagger}\ot U_{\X_j}^{\dagger}\right] \right] \leq \tr\left[
\overline{\Pi}^{\dist}
\E_{U\leftarrow\calE}\left[
U^{\otimes t}\rho_{\X_1\cdots \X_t}U^{\otimes t, \dagger}
\right]
\right].
\end{equation}
Since $\rho_{\X_1\cdots \X_t}$ is arbitrary choose $\rho_{\X_1\cdots \X_t} = \omega_{\X_i,\X_j}\otimes \tau_{\X_k: k\in[t]\setminus\{i,j\}}$ for any quantum state $ \tau_{\X_k: k\in[t]\setminus\{i,j\}}$. Then, by assumption (\Cref{eq:distinctness-hypothesis}),
\begin{equation}
    \tr\left[\Pi^{\eq}_{\X_i, \X_j} \E_{U\leftarrow \calE} \left[U_{\X_i} \ot U_{\X_j}\omega_{\X_i,\X_j}U_{\X_i}^{\dagger}\ot U_{\X_j}^{\dagger}\right]  \right] \leq \delta,
\end{equation} 
as desired.
\end{proof}

\begin{lemma}[$\delta$-EAC implies $(\delta \cdot t^2)$-distinctness]\label{lemma:EAC-implies-distinctness}
     An $n$-qubit unitary ensemble $\calE$ is $(\delta \cdot t^2)$-distinct if it is $\delta$-EAC.
\end{lemma}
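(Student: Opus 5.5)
The plan is to use the upper half of the operator inequality in \Cref{fact:dist-eq-op-ineq}, namely $\overline{\Pi}^{\dist} \preceq \sum_{1\leq i<j\leq t}\Pi^{\eq}_{\X_i,\X_j}$, together with a union bound. Fix an arbitrary $t$ and an arbitrary $nt$-qubit state $\rho_{\X_1\cdots\X_t}$. The moment operator $\E_{U\leftarrow\calE}[U^{\otimes t}\rho U^{\dagger,\otimes t}]$ is positive semidefinite, so taking the trace against the inequality gives
\begin{equation}
\tr\left[\overline{\Pi}^{\dist}\,\E_{U\leftarrow\calE}\left[U^{\otimes t}\rho U^{\dagger,\otimes t}\right]\right]
\leq
\sum_{1\leq i<j\leq t}\tr\left[\Pi^{\eq}_{\X_i,\X_j}\,\E_{U\leftarrow\calE}\left[U^{\otimes t}\rho U^{\dagger,\otimes t}\right]\right].
\end{equation}

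Next, each summand is reduced to a two-register quantity. The computation in \Cref{eq:tr-eq-2-RDM} from the proof of \Cref{lemma:distinctness-implies-EAC} does exactly this. Because $\Pi^{\eq}_{\X_i,\X_j}$ acts nontrivially only on $\X_i,\X_j$, the unitaries on the other registers cancel by cyclicity. Partial tracing then shows that each term equals
\begin{equation}
\tr\left[\Pi^{\eq}_{\X_i,\X_j}\,\E_{U\leftarrow\calE}\left[U^{\otimes 2}\rho_{\X_i\X_j}U^{\dagger,\otimes 2}\right]\right],
\end{equation}
where $\rho_{\X_i\X_j}$ is the two-register marginal of $\rho$. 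This marginal is a legitimate bipartite state, so \Cref{def:delta-EAC} bounds each term by $\delta$.

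Summing over the $\binom{t}{2}\leq t^2$ pairs gives $\tr[\overline{\Pi}^{\dist}\,\cdots]\leq \binom{t}{2}\delta\leq t^2\delta$. Equivalently, $\tr[\Pi^{\dist}\,\cdots]\geq 1-t^2\delta$, which is $(\delta\cdot t^2)$-distinctness as required.

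No step is a real obstacle. The only point needing care is the marginal reduction: $\delta$-EAC is stated for arbitrary bipartite states, possibly mixed and entangled, so it applies to every reduced state $\rho_{\X_i\X_j}$ regardless of how $\rho$ correlates the registers. If the ensemble were only EAC on a restricted set $\calS$, one would additionally have to check that the marginals of the allowed $t$-register inputs lie in $\calS$.
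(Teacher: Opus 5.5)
Your proof is correct and follows the paper's argument: the operator inequality $\overline{\Pi}^{\dist} \preceq \sum_{i<j}\Pi^{\eq}_{\X_i,\X_j}$, reduction of each term to the two-register marginal, and the count $t(t-1)/2 \le t^2$. Bounding each pairwise term by $\delta$ separately, as you do, is slightly cleaner than the paper's step, which writes the sum as $\frac{t(t-1)}{2}$ times a single term with an arbitrary state $\omega$.
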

\begin{proof}
    For the reverse direction, the proof closely mimics \cite[Lemma 3.2]{metger_simple_2024}. Starting from the other operator inequality in \Cref{fact:dist-eq-op-ineq},
\begin{equation}
    \overline{\Pi}^{\dist} \preceq \sum_{1\leq i<j\leq t} {\Pi}^{\mathsf{eq}}_{\X_i, \X_j},
\end{equation}
we find that
\begin{align}
     \tr\left[\overline{\Pi}^{\dist}\E_{U \leftarrow \mathcal{E}}U^{\otimes t}\rho_{\X_1\cdots \X_t} U^{\dagger, \otimes t}\right] &\leq \sum_{1\leq i< j \leq t} \tr\left[\Pi_{\mathsf{X}_i, \mathsf{X}_j}^{\eq}\E_{U \leftarrow \mathcal{E}}U^{\otimes t}\rho_{\X_1\cdots \X_t} U^{\dagger, \otimes t}\right]\\
     \nonumber
     &= \sum_{1\leq i < j \leq t} \tr\left[\Pi_{\mathsf{X}_i, \mathsf{X}_j}^{\eq}\E_{U \leftarrow \mathcal{E}}U_{\mathsf{X}_i} \otimes U_{\mathsf{X}_j}\rho_{\mathsf{X}_i, \mathsf{X}_j} U_{\mathsf{X}_i}^{\dagger} \otimes U_{\mathsf{X}_j}^{\dagger}\right]\\
      \nonumber
     &= \frac{t(t-1)}{2} \cdot \tr\left[\Pi_{\mathsf{X}_i, \mathsf{X}_j}^{\eq}\E_{U \leftarrow \mathcal{E}}U_{\mathsf{X}_i} \otimes U_{\mathsf{X}_j}\omega_{\mathsf{X}_i, \mathsf{X}_j} U_{\mathsf{X}_i}^{\dagger} \otimes U_{\mathsf{X}_j}^{\dagger}\right]\\
      \nonumber
     &< t^2 \cdot \tr\left[\Pi_{\mathsf{X}_i, \mathsf{X}_j}^{\eq}\E_{U \leftarrow \mathcal{E}}U_{\mathsf{X}_i} \otimes U_{\mathsf{X}_j}\omega_{\mathsf{X}_i, \mathsf{X}_j} U_{\mathsf{X}_i}^{\dagger} \otimes U_{\mathsf{X}_j}^{\dagger}\right],
      \nonumber
\end{align}
where $\omega_{\mathsf{X}_i, \mathsf{X}_j}$ is an arbitrary state on registers $\X_i, \X_j$. The claim follows by assumption that the ensemble $\calE$ is $\delta$-EAC.
\end{proof}

In the regime where $t$ is polynomial in $n$ and $\delta$ is negligibly small in $n$, combining \Cref{lemma:distinctness-implies-EAC} and \Cref{lemma:EAC-implies-distinctness} yields the equivalence between distinctness and anticoncentration as stated in \Cref{thm:equiv-of-distinctness-and-EAC}.

\paragraph{Necessity of distinctness for PRUs.}

\begin{theorem}[PRUs imply entangled anticoncentration]\label{thm:PRU-negl-EAC}
Any $n$-qubit pseudorandom unitary ensemble $\calE$ must be $\negl(n)$-EAC on all efficiently preparable input states.
\end{theorem}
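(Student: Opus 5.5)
We argue by contrapositive: if $\calE$ fails to be $\negl(n)$-EAC on some efficiently preparable bipartite input $\omega_{\X_1,\X_2}$, we build an efficient non-adaptive forward-query distinguisher from Haar. The distinguisher prepares $\omega_{\X_1,\X_2}$, possibly purified on a reference register $\mathsf{R}$ if it is mixed but efficiently preparable. It then queries the oracle twice in parallel, $\mathcal{O}^{\otimes 2}$ on $\X_1,\X_2$, and measures the two-outcome observable $D=\Pi^{\eq}_{\X_1,\X_2}\otimes\Id_{\mathsf{R}}$. This observable is efficient: apply CNOTs from $\X_1$ into $\X_2$ and check whether $\X_2$ is all zeros. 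This fits \Cref{def:parallel-pru} with $q(n)=2n$ and $\calA_1$ the uncomputing circuit.

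The acceptance probability in the PRU world is exactly $\tr[\Pi^{\eq}\,\mathcal{M}^{(2)}_{\calE}(\omega)]$. If $\calE$ is not $\negl(n)$-EAC on $\omega$, this quantity is at least $1/p(n)$ for some polynomial $p$ and infinitely many $n$.

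We then bound the Haar side. The operator $\mathcal{M}^{(2)}_{U_{\Haar}}(\Pi^{\eq})$ lies in the commutant $\mathrm{span}\{\Id,\SWAP\}$. Using the Weingarten formulas with $\tr[\Pi^{\eq}]=N$ and $\tr[\Pi^{\eq}\SWAP]=N$, we get
\begin{equation}
\mathcal{M}^{(2)}_{U_{\Haar}}(\Pi^{\eq})=\frac{1}{N+1}\,(\Id+\SWAP).
\end{equation}
Hence for every state $\omega$,
\begin{equation}
\tr\left[\Pi^{\eq}\,\mathcal{M}^{(2)}_{U_{\Haar}}(\omega)\right]\leq \frac{2}{N+1}=\negl(n).
\end{equation}
This is the Haar-is-$O(1/N)$-EAC bound. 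The distinguishing advantage is therefore at least $1/p(n)-2/(N+1)$, which is non-negligible and contradicts \Cref{def:parallel-pru}.

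Finally, \Cref{lemma:EAC-implies-distinctness} with $t=\poly(n)$ upgrades this conclusion to $\negl(n)$-distinctness for the $t$-register statement. The EAC witness states it needs are two-register marginals of efficiently preparable states, so they are themselves efficiently preparable.

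The step requiring most care is quantifier bookkeeping. "Not $\negl(n)$-EAC" must be read as the existence of a uniform, efficiently preparable family of inputs $\omega_n$ together with a polynomial $p$ such that the collision probability is at least $1/p(n)$ infinitely often. This is exactly the negation of the PRU advantage being negligible for that fixed QPT adversary. One should also check that the reference register allowed in the definition does not weaken the Haar bound. It does not, because the Weingarten computation holds for arbitrary $\omega$ on $\X_1\X_2$, and $D$ acts trivially on $\mathsf{R}$.
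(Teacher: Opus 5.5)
Your proposal is correct and follows essentially the same argument as the paper: the same two-query collision distinguisher (prepare $\omega$, apply the oracle to both registers in parallel, accept iff the computational-basis outcomes agree), combined with the Haar bound $\tr[\Pi^{\eq}\mathcal{M}^{(2)}_{U_{\Haar}}(\omega)] = (1+\tr[\omega\,\SWAP])/(N+1) \le 2/(N+1)$. The differences are cosmetic: you twirl $\Pi^{\eq}$ in the Heisenberg picture and argue by contrapositive, whereas the paper twirls the input state and argues directly. Your closing upgrade to distinctness is not needed here, since it is the content of \Cref{thm:PRU-must-be-negl-dist} rather than of this statement.
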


\begin{proof}
Consider the following distinguisher with oracle access to an unknown unitary oracle $\calO$.
It prepares $\rho_{\X_i, \X_j}$ efficiently, applies $\calO$ to register $\X_i$ and to register $\X_j$ (two parallel oracle calls),
measures both registers in the computational basis obtaining outcomes $x,y\in\{0,1\}^n$,
and outputs $1$ if and only if $x=y$.
Conditioned on a fixed oracle unitary $U$, its acceptance probability over $U\leftarrow \mathcal{E}$ is
\begin{equation}
\Pr[\text{accept}\mid U\leftarrow \mathcal{E}]
=\operatorname{tr}\left[\Pi^{\eq}\,\E_{U\leftarrow \calE}(U\otimes U)\rho_{\X_i, \X_j}(U^\dagger\otimes U^\dagger)\right].
\end{equation}
On the other hand, when U is a Haar random unitary, 
\begin{equation}
\Pr[\text{accept}\mid U\leftarrow \mu_{\Haar}]
=\operatorname{tr}\left[\Pi^{\eq}\,\E_{U\leftarrow \mu_{\Haar}}(U\otimes U)\rho_{\X_i, \X_j}(U^\dagger\otimes U^\dagger)\right],
\end{equation}
where we know that the $2$-wise unitary twirl is a linear combination of identity and $\SWAP$ \cite{mele_introduction_2024}. 
\begin{equation}
\E_{U\leftarrow \mu_{\Haar}}(U\otimes U)\rho_{\X_i, \X_j}(U^\dagger\otimes U^\dagger) = c_{\mathbbm{1}} \mathbbm{1}\ot\mathbbm{1} + c_{\SWAP} \SWAP.
\end{equation}
Thus,
\begin{align}
\nonumber
    \Pr[\text{accept}\mid U\leftarrow \mu_{\Haar}] &= c_{\mathbbm{1}}\tr[\Pi^{\eq}] +  c_{\SWAP} \tr[\Pi^{\eq}\SWAP] \\
     \nonumber
    &= N(c_{\mathbbm{1}}+c_{\SWAP})\\
    &= \frac{N - \tr[\rho_{\X_i, \X_j}\SWAP] + N\tr[\rho_{\X_i, \X_j}\SWAP] -1}{N^2 -1}\\
     \nonumber
    &= \frac{(N-1)(1+\tr[\rho_{\X_i, \X_j}\SWAP])}{(N-1)(N+1)} \\
     \nonumber
    &\leq \frac{2}{N+1},
     \nonumber
\end{align}
where the second step follows because $\tr[\Pi^{\eq}\SWAP] = \tr[\Pi^{\eq}] = N$. In the third step, we import exact expressions for $c_{\mathbbm{1}}$ and $c_{\SWAP}$ \cite[Corollary 13]{mele_introduction_2024}. Last inequality follows by applying \Holder's inequality and using $\norm{\SWAP}_{\infty} = 1$. If, on the other hand, $\calE$ is a PRU ensemble, then by definition (\Cref{def:parallel-pru}): 
\begin{equation}
\big|\Pr[\text{accept}\mid U \leftarrow \calE] - \Pr[\text{accept}\mid U\leftarrow \mu_{\Haar}]\big| \leq \negl(n).
\end{equation}
Thus,
\begin{equation}\Pr[\text{accept}\mid U \leftarrow \calE] \leq \frac{2}{N+1} + \negl(n) = \negl(n).
\end{equation}
\end{proof}

\begin{theorem}[PRUs imply distinctness]\label{thm:PRU-must-be-negl-dist}
    Any $n$-qubit PRU ensemble $\calE$ must be $\negl(n)$-distinct on all efficiently preparable input states.
\end{theorem}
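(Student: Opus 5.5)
The plan is to combine \Cref{thm:PRU-negl-EAC} with the union-bound direction of the equivalence, \Cref{lemma:EAC-implies-distinctness}, while keeping track of which input states actually appear. Fix an efficiently preparable $nt$-qubit input $\rho_{\X_1,\ldots,\X_t}$ with $t=\poly(n)$. By \Cref{fact:dist-eq-op-ineq},
\begin{equation}
\tr\left[\overline{\Pi}^{\dist}\,\E_{U\leftarrow\calE}\left[U^{\ot t}\rho U^{\dagger,\ot t}\right]\right]
\leq \sum_{1\leq i<j\leq t}\tr\left[\Pi^{\eq}_{\X_i,\X_j}\,\E_{U\leftarrow\calE}\left[U^{\ot 2}\rho_{\X_i,\X_j}U^{\dagger,\ot 2}\right]\right].
\end{equation}
The reduction to two-register marginals is the same as in the proof of \Cref{lemma:distinctness-implies-EAC}.

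The key observation is that each marginal $\rho_{\X_i,\X_j}$ is itself efficiently preparable. One prepares $\rho$ and discards, i.e.\ ignores, the other registers; these can be absorbed into the reference register $\mathsf{R}$ of \Cref{def:parallel-pru}. So the distinguisher in the proof of \Cref{thm:PRU-negl-EAC} is QPT for every pair $(i,j)$. It makes two parallel forward queries and accepts on $x_i=x_j$.

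That theorem then gives, for every pair,
\begin{equation}
\tr\left[\Pi^{\eq}_{\X_i,\X_j}\,\E_{U\leftarrow\calE}\left[U^{\ot 2}\rho_{\X_i,\X_j}U^{\dagger,\ot 2}\right]\right]\leq \frac{2}{N+1}+\varepsilon_{ij}(n).
\end{equation}
Here $\varepsilon_{ij}$ is the distinguishing advantage.

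The main subtlety is uniformity. A sum of $t^2=\poly(n)$ negligible functions is negligible only if they are bounded by a common negligible function. The $\varepsilon_{ij}$ depend on $(i,j)$, and summing arbitrary negligible functions indexed by $n$ need not be negligible. To avoid this, I would build a single distinguisher. It prepares $\rho$, queries all $t$ registers in parallel, measures all of them in the computational basis, and accepts iff some collision $x_i=x_j$ occurs, i.e.\ it measures $\overline{\Pi}^{\dist}$ directly. This is QPT, uses $t=\poly(n)$ parallel forward queries, and its acceptance probability under $\calE$ is exactly the left-hand side above.

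Under Haar, the union bound together with the computation in \Cref{thm:PRU-negl-EAC} gives acceptance at most $\binom{t}{2}\cdot 2/(N+1)=\negl(n)$. PRU security then bounds the $\calE$-acceptance probability by $\binom{t}{2}\cdot\frac{2}{N+1}+\negl(n)=\negl(n)$. This is exactly $\negl(n)$-distinctness on $\rho$. The direct collision test makes the pairwise route unnecessary, so I would present it as the main argument and keep the pairwise reduction only as a remark linking to \Cref{thm:equiv-of-distinctness-and-EAC}.
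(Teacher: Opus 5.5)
Your proof is correct, and your main argument takes a different route from the paper's. The paper chains two results. First, \Cref{thm:PRU-negl-EAC} uses a two-query collision test to show that a PRU is $\negl(n)$-EAC on efficiently preparable bipartite inputs. Second, \Cref{lemma:EAC-implies-distinctness} shows that $\delta$-EAC implies $t^2\delta$-distinctness. So the paper invokes PRU security separately for each pair of registers and applies the union bound on the $\calE$ side.

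You instead invoke PRU security once, for a single $t$-query adversary that measures $\overline{\Pi}^{\dist}$ directly. You apply the union bound only on the Haar side, where every pair contributes the same explicit bound $2/(N+1)$. This gives a clean quantitative statement: the distinctness failure is at most $t(t-1)/(N+1)$ plus the advantage of one QPT adversary. There is no $t^2$ blow-up of the advantage, and you never need a common negligible bound on the $t(n)(t(n)-1)/2$ pairwise advantages.

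The paper's short proof passes over that uniformity point, as well as the fact that the marginals $\rho_{\X_i,\X_j}$ are themselves efficiently preparable, which you note explicitly. The gap is easily repaired, for example by letting the two-query distinguisher pick a uniformly random pair, at only a polynomial loss.

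In exchange, the paper's route shows that two parallel queries already suffice to witness a failure. It also places the result inside the EAC--distinctness equivalence that the rest of the paper builds on. Your direct collision test is essentially the intuition the paper states in its introduction, where it says that ``collisions provide a direct efficient distinguisher.''
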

\begin{proof}
    We know from \Cref{thm:PRU-negl-EAC} that $\calE$ must be $\negl(n)$-EAC. Since any efficient distinguisher makes at most $t=\poly(n)$ queries, in that regime \Cref{thm:equiv-of-distinctness-and-EAC} implies that $\negl(n)$-distinctness is necessary for $\negl(n)$-EAC. By transitivity, we get that $\negl(n)$-distinctness is necessary for any PRU ensemble $\calE$. Finally, we only quantify over efficiently preparable input states since that is by definition what a QPT algorithm can query the unknown unitary on.
\end{proof}

\paragraph{A single layer of random single-qubit Cliffords in negligibly distinct.}
While $\negl(n)$-distinctness is necessary for PRUs, it also suffices in the sense it suffices for $D \leftarrow \cal D$  to be $\negl(n)$-distinct in order for unitaries $U = PFD$ to form a PRU ensemble (\Cref{thm:negl-distinct-to-PRU}). 

\begin{proposition}[A single layer of single-qubit Cliffords is $(2/3)^n$-EAC]
\label{prop:single-layer-cliffords-eac}
Let $\cal E$ be the ensemble obtained by sampling
\begin{equation}
U=C_1\ot\cdots\ot C_n,
\end{equation}
where for each $i \in [n]$, $C_i \leftarrow \Cl_{\C}(2)$ are drawn independently from the single qubit complex Clifford group, or an exact single qubit $2$-design,
Then $\calE$ is $\delta$-EAC with $\delta=(2/3)^n$.
\end{proposition}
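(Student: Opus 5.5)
The plan is to compute the adjoint twirl of the equality projector and bound its operator norm. Since $\tr[\Pi^{\eq}\,\E_U[U^{\ot 2}\omega U^{\dagger,\ot 2}]] = \tr[\omega\, \E_U[U^{\dagger,\ot 2}\Pi^{\eq}U^{\ot 2}]]$, it suffices to show that
\begin{equation}
\Big\|\E_{U\leftarrow\calE}\big[U^{\dagger,\ot 2}\,\Pi^{\eq}\,U^{\ot 2}\big]\Big\|_\infty \leq (2/3)^n .
\end{equation}
The worst case over all bipartite states $\omega$ is then exactly this operator norm.

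The first step exploits the product structure. Order the two registers qubit by qubit, pairing qubit $i$ of $\X_1$ with qubit $i$ of $\X_2$. In this ordering $\Pi^{\eq}=\bigotimes_{i=1}^n \pi_i$, where $\pi_i=\ketbra{00}{00}+\ketbra{11}{11}$ is the single-qubit-pair equality projector. Because $U^{\ot 2}=\bigotimes_i C_i^{\ot 2}$ with independent $C_i$, the expectation factorizes:
\begin{equation}
\E\big[U^{\dagger,\ot 2}\Pi^{\eq}U^{\ot 2}\big]=\bigotimes_{i=1}^n \E_{C_i}\big[C_i^{\dagger,\ot 2}\pi_i C_i^{\ot 2}\big].
\end{equation}
The operator norm is multiplicative over tensor products, so the problem reduces to showing that a single-qubit factor has norm $2/3$.

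For a single qubit, the exact $2$-design property (the single-qubit Clifford group is a unitary $2$-design, and $C^\dagger$ is distributed like $C$) gives the Haar twirl $c_{\mathbbm 1}\mathbbm 1+c_{\SWAP}\SWAP$ with $d=2$. Since $\pi$ commutes with $\SWAP$, we have $\tr[\pi]=\tr[\pi\SWAP]=2$, and the Weingarten coefficients yield $c_{\mathbbm 1}=c_{\SWAP}=\tfrac{2-2/2}{3}\cdot\tfrac12=\tfrac13$. Equivalently, this is the same computation as in the proof of \Cref{thm:PRU-negl-EAC}, specialized to $N=2$. The single-qubit twirl is therefore $\tfrac13(\mathbbm 1+\SWAP)=\tfrac23\Pi_{\mathrm{sym}}$, whose operator norm is $2/3$. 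Tensoring over the $n$ qubits gives the bound $(2/3)^n$, and equality is attained by a symmetric product input, so the bound is tight.

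The only delicate point is the qubit-wise reordering of the two $n$-qubit registers. One has to check that $\Pi^{\eq}$ genuinely factorizes as $\bigotimes_i\pi_i$, which holds because $x=y$ if and only if $x_i=y_i$ for all $i$. One also has to justify passing from the arbitrary-input trace to the operator norm of the adjoint twirl. Both steps are routine, so I expect no real obstacle beyond fixing the single-qubit coefficients correctly.
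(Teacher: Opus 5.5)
Your proposal is correct and follows essentially the same route as the paper. Both reduce via H\"older to bounding $\|\E_U[U^{\dagger,\ot 2}\Pi^{\eq}U^{\ot 2}]\|_\infty$, split $\Pi^{\eq}$ into single-qubit-pair equality projectors (the paper writes a product of commuting local projectors where you write a tensor product after reordering), use independence together with the single-qubit $2$-design twirl $\tfrac13(\mathbbm 1+\SWAP)$, and multiply the norms $2/3$. There is one small slip: your intermediate expression $\tfrac{2-2/2}{3}\cdot\tfrac12$ evaluates to $1/6$, so the stray factor $\tfrac12$ should be dropped, since $c_{\mathbbm 1}=(\tr[\pi]-\tr[\pi\SWAP]/2)/3=1/3$; your tightness remark via symmetric product inputs is a correct addition that the paper does not make.
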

\begin{proof}
To show $(2/3)^n$-EAC, it suffices that for any bi-partite state $\omega_{\X_i, \X_j}$,
\begin{equation}
    \tr[\Pi^{\eq}_{\X_i, \X_j} \E_U[U^{\ot 2}\omega_{\X_i, \X_j} U^{\ot 2, \dagger}]] \leq (2/3)^n,
\end{equation}
where each $\X_i$ and $\X_j$ is an $n$-qubit register. Identify each qubit within a given n-qubit register by $b \in [n]$ s.t.\ $\X_{i,b}$ denotes $b$-th qubit within register $\X_i$. Similarly, for $X_{j,b}$. This allows us to write the single qubit equality projector as
\begin{equation}
    \pi_{\X_{i,b},\X_{j,b}}^{\eq}
    :=
    \sum_{a,c\in\zo}
    \delta_{a,c}\,
    \ketbra{a}{a}_{\X_{i,b}}
    \ot
    \ketbra{c}{c}_{\X_{j,b}}.
\end{equation}
The primary technical insight of the proof is realizing that the global equality projector is a product of the local ones. That is,
\begin{equation}\label{eq:global-eq-proj-factorizes}
\Pi^{\eq}_{\X_i, \X_j} = \prod_{b=1}^n  \left(\pi_{\X_{i,b},\X_{j,b}}^{\eq}\right).
\end{equation}
To see this, let us first rewrite $\pi_{\X_{i,b},\X_{j,b}}^{\eq}$ as a $2n$-qubit operator
\begin{equation}
    \pi_{\X_{i,b},\X_{j,b}}^{\eq} = \sum_{x, y \in \zo^n} \delta_{x_b,y_b} \ketbra{x}{x}_{\X_i} \ot \ketbra{y}{y}_{\X_j}.
\end{equation}

Indeed, for every computational basis $\ket{x}_{\X_i}\ot \ket{y}_{\X_j}$, where $x, y \in \zo^n$,
\begin{equation}
    \pi_{\X_{i,b},\X_{j,b}}^{\eq} \left(\ket{x}_{\X_i}\ot \ket{y}_{\X_j}\right) = \delta_{x_b, y_b} \ket{x}_{\X_i} \ot \ket{y}_{\X_j}.
\end{equation}
Hence,
\begin{equation}
    \prod_{b=1}^n  \left(\pi_{\X_{i,b},\X_{j,b}}^{\eq}\right) \left(\ket{x}_{\X_i}\ot \ket{y}_{\X_j}\right) = \left(\prod_{b=1}^n \delta_{x_b, y_b}\right) \left(\ket{x}_{\X_i}\ot \ket{y}_{\X_j}\right) = \delta_{x,y} \left(\ket{x}_{\X_i}\ot \ket{y}_{\X_j}\right),
\end{equation}
which is exactly $\Pi^{\eq}_{\X_i, \X_j} \left(\ket{x}_{\X_i}\ot \ket{y}_{\X_j}\right)$. This proves \Cref{eq:global-eq-proj-factorizes}. By \Holder's inequality, $\tr[\omega_{\X_i, \X_j}] =1$, and cyclicity of trace, we get the inequality $ \tr[\Pi^{\eq}_{\X_i, \X_j} \E_U[U^{\ot 2}\omega_{\X_i, \X_j} U^{\ot 2, \dagger}]] \leq \norm{\E_U [U^{\ot 2, \dagger} \Pi^{\eq}_{\X_i, \X_j} U^{\ot 2}] }_{\infty}$, which we shall upper bound by $(2/3)^n$. By \Cref{eq:global-eq-proj-factorizes},

\begin{equation}\label{eq:global-eq-}
U^{\ot 2, \dagger} \Pi^{\eq}_{\X_i, \X_j} U^{\ot 2}
    = U^{\ot 2, \dagger} \prod_{b=1}^n  \left(\pi_{\X_{i,b},\X_{j,b}}^{\eq}\right) U^{\ot 2} 
    =\prod_{b=1}^n U^{\ot 2, \dagger} \left(\pi_{\X_{i,b},\X_{j,b}}^{\eq}\right) U^{\ot 2} =  \prod_{b=1}^n C_b^{\ot 2, \dagger} \pi_{\X_{i,b},\X_{j,b}}^{\eq}  C_b^{\ot 2},
\end{equation}    
where the last equality holds since for each $b\in [n]$, $\pi_{\X_{i,b},\X_{j,b}}^{\eq}$ acts only on $\X_{i,b},\X_{j,b}$ and $U = C_1 \ot \cdots \ot C_n$, where all $C_{b'}$ with $b'\neq b$ commute through $\pi_{\X_{i,b},\X_{j,b}}^{\eq}$ to yield
\begin{equation} 
U^{\ot 2, \dagger} \pi_{\X_{i,b},\X_{j,b}}^{\eq} U^{\ot 2} = C_b^{\ot 2, \dagger} \pi_{\X_{i,b},\X_{j,b}}^{\eq}  C_b^{\ot 2}.
\end{equation}
Since for each $b \in [n]$, $C_b \leftarrow \Cl_{\C}(2)$ is drawn independently from $1$-qubit complex Clifford group, 
\begin{equation}
    \E_U U^{\ot 2, \dagger} \Pi^{\eq}_{\X_i, \X_j} U^{\ot 2} =  \prod_{b=1}^n \E_{C_b \leftarrow \Cl_{\C}(2)} C_b^{\ot 2, \dagger} \pi_{\X_{i,b},\X_{j,b}}^{\eq}  C_b^{\ot 2}.
\end{equation}
By a standard (single-qubit) 2-design twirl calculation \cite{mele_introduction_2024}, $\E_{C_b \leftarrow \Cl_{\C}(N)} C_b^{\ot 2, \dagger} \pi_{\X_{i,b},\X_{j,b}}^{\eq}  C_b^{\ot 2} = \Id/3 + \SWAP/3$. Thus, $\norm{\E_{C_b \leftarrow \Cl_{\C}(N)} C_b^{\ot 2, \dagger} \pi_{\X_{i,b},\X_{j,b}}^{\eq}  C_b^{\ot 2}}_{\infty} \leq 2/3$. Finally submultiplicativity of the operator norm, 
\begin{equation}
\norm{\E_U U^{\ot 2, \dagger} \Pi^{\eq}_{\X_i, \X_j} U^{\ot 2}}_{\infty} \leq (2/3)^n, 
\end{equation}
as desired.
\end{proof}

\section{Necessary quantum resources for the distinctness}
\begin{theorem}[Imaginarity is necessary for distinctness]\label{thm:imag-necc-for-dist}
    Let $\calE$ be an ensemble of $n$-qubit unitaries. Define the imaginarity of unitary $U$ as $I_p(U) := 1-\frac{1}{N^2}\bigl|\tr[U^\dagger U^*]\bigr|^2$.

If $\calE$ is $\delta$-distinct, then
\begin{equation}
\E_{U\leftarrow\calE}[I_p(U)] \ge 1-\delta.
\label{eq:imaginarity-lb}
\end{equation}
\end{theorem}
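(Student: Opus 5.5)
Via \Cref{lemma:distinctness-implies-EAC}, $\delta$-distinctness yields $\delta$-EAC. We then evaluate EAC on the normalized Bell state $\ket{\Phi}=\ket{\Omega}/\sqrt{N}$ and show that its collision probability equals exactly $1-\E[I_p(U)]$.

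First, compute $(U\ot U)\ket{\Omega}$. The transpose trick $(A\ot \Id)\ket{\Omega}=(\Id\ot A^{\mathsf T})\ket{\Omega}$ gives
\begin{equation}
(U\ot U)\ket{\Omega}=(\Id\ot UU^{\mathsf T})\ket{\Omega}=\sum_{x,y}(UU^{\mathsf T})_{yx}\ket{x,y}.
\end{equation}
The collision probability for a fixed $U$ is therefore
\begin{equation}
\tr\left[\Pi^{\eq}(U\ot U)\ketbra{\Phi}{\Phi}(U\ot U)^\dagger\right]=\frac{1}{N}\sum_{x}\left|(UU^{\mathsf T})_{xx}\right|^2 .
\end{equation}

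Second, we need to compare this with $\frac{1}{N^2}|\tr[U^\dagger U^*]|^2$. This comparison is the main obstacle: the diagonal-sum expression is not literally the squared-trace quantity, so an inequality in the correct direction is needed. Note that $\tr[U^\dagger U^*]=\overline{\tr[U^{\mathsf T}U]}=\overline{\tr[UU^{\mathsf T}]}$. Cauchy--Schwarz on the diagonal entries gives
\begin{equation}
\left|\tr[UU^{\mathsf T}]\right|^2=\Bigl|\sum_x (UU^{\mathsf T})_{xx}\Bigr|^2\le N\sum_x \left|(UU^{\mathsf T})_{xx}\right|^2,
\end{equation}
that is, $\frac{1}{N^2}|\tr[U^\dagger U^*]|^2\le \frac1N\sum_x|(UU^{\mathsf T})_{xx}|^2$, which is exactly the collision probability. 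Hence, for every $U$,
\begin{equation}
1-I_p(U)\le \tr\left[\Pi^{\eq}(U\ot U)\ketbra{\Phi}{\Phi}(U\ot U)^\dagger\right].
\end{equation}
This is the favorable direction. If one instead projected onto the Bell state itself, the overlap would be $|\bra{\Phi}(U\ot U)\ket{\Phi}|^2=\frac{1}{N^2}|\tr[UU^{\mathsf T}]|^2$ exactly, and since $\ketbra{\Phi}{\Phi}\preceq \Pi^{\eq}$ (the Bell state lies in the equality subspace) we would get the same inequality. Either route works, and the Bell-projector route is the cleaner one to write up.

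Finally, average over $U\leftarrow\calE$ and apply $\delta$-EAC with $\omega_{\X_i,\X_j}=\ketbra{\Phi}{\Phi}$:
\begin{equation}
1-\E_{U\leftarrow\calE}[I_p(U)]\le\tr\Bigl[\Pi^{\eq}\,\E_{U\leftarrow\calE}\bigl[(U\ot U)\ketbra{\Phi}{\Phi}(U\ot U)^\dagger\bigr]\Bigr]\le\delta.
\end{equation}
Rearranging gives \Cref{eq:imaginarity-lb}.
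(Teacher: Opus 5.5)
Your proof is correct and follows the paper's argument essentially line for line: you reduce to $\delta$-EAC via \Cref{lemma:distinctness-implies-EAC}, input the normalized Bell state, write the collision probability as $\frac{1}{N}\sum_x|(UU^{\mathsf T})_{xx}|^2$, lower-bound it by $\frac{1}{N^2}|\tr[UU^{\mathsf T}]|^2=1-I_p(U)$ via Cauchy--Schwarz, and then average. Your alternative via $|\Phi\rangle\langle\Phi|\preceq\Pi^{\mathrm{eq}}$ is also valid; it is the same Bell test the paper uses later for the JLS constraint. One slip: your opening says the collision probability equals $1-\E[I_p(U)]$ exactly, but as your own Cauchy--Schwarz step shows it is only at least $1-I_p(U)$, which is all the argument needs.
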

\begin{proof}
By \Cref{lemma:distinctness-implies-EAC}, $\delta$-distinctness implies $\delta$-EAC.
Hence it suffices to prove the stated inequality under the assumption that
$\calE$ is $\delta$-EAC.
By \Cref{def:delta-EAC}, this means that for every bipartite state $\omega_{\X_i,\X_j}$ on two $n$-qubit
registers $\X_i$ and $\X_j$,
\begin{equation}
\tr\left[
\Pi^{\mathrm{eq}}_{\X_i,\X_j}\,
\E_{U\leftarrow\calE}
\Bigl[
U^{\ot 2}\,\omega_{\X_i,\X_j}\,U^{\dagger\ot 2}
\Bigr]
\right]
\le \delta.
\label{eq:eac-assumption}
\end{equation}
We now choose a special bipartite input state. Let
\begin{equation}
\ket{\Phi_N}_{\X_i,\X_j}
:=
\frac{1}{\sqrt{N}}
\sum_{a=0}^{N-1} \ket{a,a},
\qquad
\omega^{\mathrm{im}}_{\X_i,\X_j}
:=
\ket{\Phi_N}\bra{\Phi_N}_{\X_i,\X_j}.
\label{eq:max-ent-state}
\end{equation}
For a fixed unitary $U$, define
\begin{equation}
p_{\mathrm{im}}(U)
:=
\tr\left[
\Pi^{\mathrm{eq}}_{\X_i,\X_j}\,
U^{\ot 2}\,\omega^{\mathrm{im}}_{X_i,X_j}\,U^{\dagger\ot 2}
\right].
\end{equation}
Then
\begin{align}
p_{\mathrm{im}}(U)
=
\sum_{x=0}^{N-1}
\left|
\bra{x,x}U^{\ot 2}\ket{\Phi_N}
\right|^2
=
\frac{1}{N}
\sum_{x=0}^{N-1}
\left|
\sum_{a=0}^{N-1} U_{x ,a}U_{x ,a}
\right|^2
=
\frac{1}{N}
\sum_{x=0}^{N-1}
\bigl|(UU^T)_{x, x}\bigr|^2.
\end{align}
By the Cauchy-Schwarz inequality, we get
\begin{equation}
\sum_{x=0}^{N-1} |z_x|^2 \ge \frac{1}{N}\left|\sum_{x=0}^{N-1} z_x\right|^2
\end{equation}
with $z_x=(UU^T)_{x ,x}$ gives
\begin{equation}\label{eq:pim-imag-bound}
p_{\mathrm{im}}(U)
\ge
\frac{1}{N^2}
\left|
\sum_{x=0}^{N-1} (UU^{\mathsf{T}})_{x ,x}
\right|^2
=
\frac{1}{N^2}
\bigl|\tr[UU^{\mathsf{T}}]\bigr|^2
=\frac{1}{N^2}\bigl|\overline{\tr[UU^{\mathsf{T}}]}\bigr|^2
=
\frac{1}{N^2}
\bigl|\tr[U^\dagger U^*]\bigr|^2
=
1-I_p(U),
\end{equation}
where we have used that
$U^* U^\dagger  = (UU^{\mathsf{T}})^\dagger$, cyclicity of trace and that, for any operator $X$, we have that $\overline{\tr[X]} = \tr[X^{\dagger}]$. Now using $\omega^{\mathrm{im}}_{\X_i,\X_j}$ in the definition of $\delta$-EAC and substituting \Cref{eq:pim-imag-bound}, we get
\begin{align}
\delta
\ge
\tr\left[
\Pi^{\mathrm{eq}}_{\X_i,\X_j}\,
\E_{U\leftarrow\calE}
\Bigl[
U^{\ot 2}\,\omega^{\mathrm{im}}_{\X_i,\X_j}\,U^{\dagger\ot 2}
\Bigr]
\right]
=
\E_{U\leftarrow\calE}\left[p_{\mathrm{im}}(U)\right]
\ge
1-\E_{U\leftarrow\calE}[I_p(U)].
\end{align}
Rearranging yields the claimed inequality.
\end{proof}

\begin{theorem}[Coherence is necessary for distinctness]\label{thm:coh-necc-for-dist}
    Let $\calE$ be an ensemble of $n$-qubit unitaries. Define coherence of a unitary $U$ as $C_p(U) := -\frac{1}{N}\sum_{x,y=0}^{N-1} |U_{x, y}|^2 \ln |U_{x ,y}|^2$.
If $\calE$ is $\delta$-distinct. Then,
\begin{equation}
\E_{U\leftarrow\calE}[C_p(U)] \ge \ln(1/\delta).
\label{eq:coherence-lb}
\end{equation}
\end{theorem}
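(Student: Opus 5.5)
The plan is to pass from distinctness to entangled anticoncentration, as in the proof of \Cref{thm:imag-necc-for-dist}, and then test EAC on a suitable input. By \Cref{lemma:distinctness-implies-EAC}, the ensemble $\calE$ is $\delta$-EAC. Unlike the imaginarity bound, the witness state here does not need to be entangled. I would take the separable, classically correlated state
\begin{equation}
\omega^{\mathrm{coh}}_{\X_i,\X_j} := \frac{1}{N}\sum_{y=0}^{N-1} \ketbra{y}{y}_{\X_i}\ot\ketbra{y}{y}_{\X_j}.
\end{equation}
For a fixed $U$, write $p_y(x) := |U_{x,y}|^2$. Each $p_y$ is a probability distribution on $[N]$, being a column of a unitary. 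Define the column collision probability $c_y(U) := \sum_x p_y(x)^2$. A direct computation gives
\begin{equation}
q(U) := \tr\left[\Pi^{\eq}_{\X_i,\X_j} U^{\ot 2}\omega^{\mathrm{coh}}U^{\dagger\ot 2}\right] = \frac{1}{N}\sum_{y}\sum_x |U_{x,y}|^4 = \frac{1}{N}\sum_y c_y(U).
\end{equation}
The EAC hypothesis then yields $\E_{U\leftarrow\calE}[q(U)] \le \delta$.

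The next step rewrites the coherence column by column:
\begin{equation}
C_p(U) = \frac{1}{N}\sum_y H(p_y),
\end{equation}
where $H$ is the Shannon entropy in nats. Since Shannon entropy dominates collision (Rényi-$2$) entropy, we have $H(p_y) \ge -\ln c_y(U)$. This itself follows from Jensen's inequality applied to the concave function $\ln$, with $p_y$ as weights: $-\sum_x p_y(x)\ln p_y(x) \ge -\ln \sum_x p_y(x)^2$.

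Next, average over the columns $y$. Convexity of $-\ln$ gives
\begin{equation}
\frac{1}{N}\sum_y\left(-\ln c_y(U)\right) \ge -\ln q(U).
\end{equation}
Finally, take the expectation over $U\leftarrow\calE$ and use Jensen once more:
\begin{equation}
\E_U[C_p(U)] \ge \E_U[-\ln q(U)] \ge -\ln \E_U[q(U)] \ge \ln(1/\delta).
\end{equation}
The last inequality uses that $-\ln$ is decreasing together with $\E_U[q(U)]\le\delta$.

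I do not expect a real obstacle here. The only points needing care are the directions of the three Jensen/convexity steps, all of which go the right way because $-\ln$ is convex and decreasing. One must also check that $\omega^{\mathrm{coh}}$ is a legitimate bipartite input for \Cref{def:delta-EAC}. It is, and since it is separable, the bound holds even under EAC restricted to separable states.
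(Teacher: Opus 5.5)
Your proof is correct and matches the paper's argument: the same reduction via \Cref{lemma:distinctness-implies-EAC}, the same classically correlated witness state $\omega^{\mathrm{coh}}$, and the same per-column bound of Shannon entropy by collision entropy. The only difference is cosmetic: you carry out the two averaging Jensen steps in the log domain using convexity of $-\ln$, whereas the paper exponentiates first and uses convexity of $e^{-x}$; both routes reach the same intermediate inequality $q(U)\ge e^{-C_p(U)}$.
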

\begin{proof}
We will proceed in a similar way as we did in the proof of \Cref{thm:imag-necc-for-dist}. Again, by \Cref{lemma:distinctness-implies-EAC}, $\delta$-distinctness implies $\delta$-EAC.
Hence it suffices to prove the stated inequality under the assumption that
$\calE$ is $\delta$-EAC. Once again, by definition of $\delta$-EAC (\Cref{def:delta-EAC}),
\begin{equation}
\tr\left[
\Pi^{\mathrm{eq}}_{\X_i,\X_j}\,
\E_{U\leftarrow\calE}
\Bigl[
U^{\ot 2}\,\omega_{\X_i,\X_j}\,U^{\dagger\ot 2}
\Bigr]
\right]
\le \delta
\end{equation}
for every input state $\omega_{\X_i,\X_j}$. Similar to in the proof of \Cref{thm:imag-necc-for-dist}, we will choose a special bipartite input state. Let
\begin{equation}
\omega^{\mathsf{coh}}_{\X_i,\X_j}
:=
\frac{1}{N}
\sum_{y=0}^{N-1} \ket{y,y}\bra{y,y}_{\X_i,\X_j}.
\label{eq:coh-input}
\end{equation}
For a fixed unitary $U$, define
\begin{equation}
p_{\mathsf{coh}}(U)
:=
\tr\left[
\Pi^{\mathrm{eq}}_{\X_i,\X_j}\,
U^{\ot 2}\,\omega^{\mathsf{coh}}_{\X_i,\X_j}\,U^{\dagger\ot 2}
\right].
\label{eq:def-p-coh}
\end{equation}
A direct computation gives that
\begin{align}
p_{\mathsf{coh}}(U)
&=
\frac{1}{N}
\sum_{y=0}^{N-1}
\sum_{x=0}^{N-1}
|U_{x ,y}|^4.
\label{eq:pcoh-1}
\end{align}
Now, for each fixed column $y$, define the probability distribution $p_x^{(y)} := |U_{x ,y}|^2$
s.t. $\sum_{x=0}^{N-1} p_x^{(y)} = 1$.
Then \Cref{eq:pcoh-1} can be written as 
\begin{equation}
p_{\mathsf{coh}}(U) = \frac{1}{N}\sum_{y=0}^{N-1} Q_y,
\label{eq:pcoh-Q}
\end{equation}
where $Q_y := \sum_{x=0}^{N-1} \bigl(p_x^{(y)}\bigr)^2$ is the collision probability. Denoting the Shannon entropy of the distribution as $H_y := -\sum_{x=0}^{N-1} p_x^{(y)} \ln p_x^{(y)}$, we claim that
\begin{equation}
Q_y \ge e^{-H_y}
\qquad\text{for every } y.
\label{eq:Qy-lower}
\end{equation}
Indeed, by concavity of $\ln$ and the fact that $\sum_x p_x^{(y)}=1$,
\begin{align}
\sum_{x=0}^{N-1} p_x^{(y)} \ln p_x^{(y)}
&\le
\ln\left(
\sum_{x=0}^{N-1} p_x^{(y)} p_x^{(y)}
\right)
=
\ln Q_y.
\label{eq:log-concavity-step}
\end{align}
Multiplying by $-1$ and exponentiating yields \Cref{eq:Qy-lower}. Substituting \Cref{eq:Qy-lower} into \Cref{eq:pcoh-Q}, we obtain
\begin{align}\label{eq:pcoh-lower-bound}
p_{\mathsf{coh}}(U)
\ge
\frac{1}{N}\sum_{y=0}^{N-1} e^{-H_y}
\ge
\exp\left(
-\frac{1}{N}\sum_{y=0}^{N-1} H_y
\right)
=
e^{-C_p(U)},
\end{align}
where the second inequality is Jensen's inequality, since $e^{-x}$ is convex, and
\begin{equation}
\frac{1}{N}\sum_{y=0}^{N-1} H_y
=
-\frac{1}{N}\sum_{x,y=0}^{N-1}|U_{x , y}|^2 \ln |U_{x , y}|^2
=
C_p(U).
\end{equation}
Now using $\omega^{\mathsf{coh}}_{\X_i,\X_j}$ in the definition of $\delta$-EAC (\Cref{def:delta-EAC}) and substituting \Cref{eq:pcoh-lower-bound}, we get
\begin{align}\label{eq:coh-exp-lb}
\delta
\ge
\tr\left[
\Pi^{\mathrm{eq}}_{\X_i,\X_j}\,
\E_{U\leftarrow\calE}
\Bigl[
U^{\ot 2}\,\omega^{\mathsf{coh}}_{\X_i,\X_j}\,U^{\dagger\ot 2}
\Bigr]
\right]
=
\E_{U\leftarrow\calE}\left[p_{\mathsf{coh}}(U)\right]
\ge
\E_{U\leftarrow\calE}\left[e^{-C_p(U)}\right].
\end{align}
Again, due to convexity of $e^{-x}$, Jensen's inequality gives
\begin{equation}
\E_{U\leftarrow\calE}\left[e^{-C_p(U)}\right]
\ge
e^{-\E_{U\leftarrow\calE}[C_p(U)]}.
\label{eq:jensen-final}
\end{equation}
Combining \Cref{eq:coh-exp-lb} and \Cref{eq:jensen-final}, we have $\delta \ge e^{-\E_{U\leftarrow\calE}[C_p(U)]}$. Hence
\begin{equation*}
    \E_{U\leftarrow\calE}[C_p(U)] \ge \ln\Bigl(\frac{1}{\delta}\Bigr).
\end{equation*}
\end{proof}

\Cref{thm:imag-necc-for-dist} and \Cref{thm:coh-necc-for-dist} allow us 
to recover the constraints on coherence and imaginarity for PRUs worked out in Ref.\  \cite{haug_pseudorandom_2024}. 

\begin{corollary}[PRUs must be coherent and imaginary {\cite[Theorems 3 and 5]{haug_pseudorandom_2024}}]\label{corr:PRUs-negl-imag-and-coh-from-HBK[24]}
    If $\cal E$ is an n-qubit PRU ensemble, then
\begin{equation}
\E_{U\leftarrow\calE}[I_p(U)] = 1-\mathrm{negl}(n)
\qquad\text{and}\qquad
\E_{U\leftarrow\calE}[C_p(U)] = \omega(\log n).
\label{eq:negl-consequence}
\end{equation}
\end{corollary}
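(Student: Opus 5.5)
The plan is to combine \Cref{thm:PRU-must-be-negl-dist} with the two resource bounds \Cref{thm:imag-necc-for-dist} and \Cref{thm:coh-necc-for-dist}. One subtlety comes first. \Cref{thm:PRU-must-be-negl-dist} only guarantees $\negl(n)$-distinctness on efficiently preparable inputs, while the resource theorems are stated for full $\delta$-distinctness. However, their proofs only ever use $\delta$-EAC on one specific bipartite witness state. I would therefore invoke \Cref{thm:PRU-negl-EAC} directly on those two witness states, rather than passing through the full distinctness statement.

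For imaginarity, the witness is the Bell state $\ket{\Phi_N}$, which is preparable with $n$ Hadamards and $n$ CNOTs. \Cref{thm:PRU-negl-EAC} then gives
\begin{equation}
\E_{U\leftarrow\calE}[p_{\mathrm{im}}(U)] \le \frac{2}{N+1}+\negl(n) = \negl(n).
\end{equation}
The pointwise bound $p_{\mathrm{im}}(U)\ge 1-I_p(U)$ from \Cref{eq:pim-imag-bound} holds for every unitary. Taking expectations yields $\E[I_p(U)]\ge 1-\negl(n)$. Since $I_p\le 1$ trivially, the expectation equals $1-\negl(n)$.

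For coherence, the witness is $\omega^{\mathsf{coh}}=\frac1N\sum_y\ketbra{y,y}{y,y}$. It is efficiently preparable: prepare $\ket{\Phi_N}$ and measure (or dephase) one half in the computational basis, or equivalently sample $y$ uniformly and prepare $\ket{y,y}$. \Cref{thm:PRU-negl-EAC} gives $\E[p_{\mathsf{coh}}(U)]\le\delta(n)$ for some negligible $\delta$. The chain \Cref{eq:pcoh-lower-bound}–\Cref{eq:jensen-final} then gives $\E[C_p(U)]\ge\ln(1/\delta(n))$. Because $\delta(n)=o(n^{-c})$ for every $c$, we get $\ln(1/\delta(n))\ge c\ln n$ for all large $n$, for every $c$. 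This is exactly $\omega(\log n)$.

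The main obstacle is only the bookkeeping above. I must make sure the pointwise inequalities in the earlier proofs do not secretly require distinctness on all inputs, and I must check that the mixed witness state is admissible for a QPT distinguisher with an ancilla or classical randomness, which \Cref{def:parallel-pru} allows. The negligible-to-$\omega(\log n)$ translation is routine.
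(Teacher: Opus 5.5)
Your proposal is correct. It uses the same core ingredients as the paper, namely the pointwise witness inequalities $p_{\mathrm{im}}(U)\ge 1-I_p(U)$ and $p_{\mathsf{coh}}(U)\ge e^{-C_p(U)}$ fed by a collision-test distinguisher, but it routes them more directly.

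The paper's proof chains \Cref{thm:PRU-must-be-negl-dist} with \Cref{thm:imag-necc-for-dist} and \Cref{thm:coh-necc-for-dist}. So it goes PRU $\Rightarrow$ EAC $\Rightarrow$ distinctness $\Rightarrow$ EAC $\Rightarrow$ witness bound. Along the way it applies resource theorems whose hypothesis is $\delta$-distinctness on \emph{all} inputs, although \Cref{thm:PRU-must-be-negl-dist} only supplies it on efficiently preparable ones. This mismatch is harmless, because those proofs only ever use one witness state each, but the paper never says so.

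You instead invoke \Cref{thm:PRU-negl-EAC} directly on the two witness states. This removes the unnecessary round trip through distinctness and closes the hypothesis gap. You also check that $\omega^{\mathsf{coh}}$ is admissible: CNOT-copying one half of $\ket{\Phi_N}$ into the reference register gives a pure, efficiently preparable state, as \Cref{def:parallel-pru} requires. And you make the negligible-to-$\omega(\log n)$ conversion explicit. The paper leaves these steps implicit.

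What the paper's phrasing buys in exchange is modularity. It presents the corollary as an instance of its general message that distinctness alone already forces imaginarity and coherence, which is the conceptual point of that section.
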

\begin{proof}
   \Cref{thm:imag-necc-for-dist} and \Cref{thm:coh-necc-for-dist} establish that if an ensemble $\cal E$ is $\delta$-distinct then $\E_{U\leftarrow\calE}[I_p(U)] \ge 1-\delta$ and $\E_{U\leftarrow\calE}[C_p(U)] \ge \ln\Bigl(\frac{1}{\delta}\Bigr)$, respectively. Additionally, from \Cref{thm:PRU-must-be-negl-dist}, we know that any PRU ensemble $\cal E$ must be $\negl(n)$-distinct. Thus, setting $\delta=\negl(n)$ yields the claim.
\end{proof}

\section{Real valued PRUs beyond PPT input states}
\label{sec:real-prus}
Existence of real-valued PRUs has been ruled out in 
Ref.\ \cite{haug_pseudorandom_2024}. Later, Brakerski and Magrafta \cite{brakerski_real-valued_2024} proposed an explicit ensemble of real-valued PRUs (composed of random real binary phase, Hadamard, and a random permutation) and showed that they are statistically indistinguishable from Haar random unitaries, so long as the adversary can only query on states any polynomial set of orthogonal input states or on states with 
high min-entropy in the computational basis. The distinguisher in Ref.\ \cite{haug_pseudorandom_2024} works by querying the unknown unitary on the maximally entangled state. It was, thus, proposed as an open question in Ref.\ \cite{brakerski_real-valued_2024} to construct real-valued PRUs for product states. We show that the $PF_{\R}C_{\R}$ ensemble with a real Clifford is indeed one such ensemble and $PF_{\R}HF_{\R}$ is another such ensemble. Recall that $F_{R} = \sum_{x\in[N]}(-1)^{f(x)}\ketbra{x}{x}$ is a binary phase operator, where $f$ is a uniformly random Boolean function. Note that $PF_{\R}HF_{\R}$ is a real-valued version of $PF_{\R}HF_{\C}$, which  is secure on arbitrary input states (\Cref{thm:PFHF-PRU}). This is akin to how $PF_{\R}C_{\R}$ is an input restricted real-valued analogue of $PFC$, where $C$ in the latter case is a complex-valued unitary $2$-design.
We show that $PF_{\R}C_{\R}$ and $PF_{\R}HF_{\R}$  form a PRU and $\calO(t^2/N)$-approximate unitary $t$-design on a larger class of states (than just product or even positive partial transpose states) that we classify as those having a small Bell overlap, a measure we introduce in \Cref{def:bell-overlap}. To quote some examples, separable states across $t$ registers, in fact every PPT (\emph{positive partial transpose}) state, Choi states of traceless unitaries, which are NPT (\emph{non-positive partial transpose}) states,  have a 
constant (in $n$) Bell overlap, due to which we get a larger class of states on which $PFC_{\R}$ and $PF_{\R}HF_{\R}$ are pseudorandom. 
In order to avoid repetition between $PFC_{\R}$  and $PF_{\R}HF_{\R}$ ensembles, for the rest of this section, we will work mostly with $PFC_{\R}$ as it may be more familiar to the reader in the context of the original $PFC$ ensemble \cite{metger_simple_2024}, but our results hold identically for the $PF_{\R}HF_{\R}$ ensemble as well. 
Let us start by defining the Bell overlap condition.
\begin{definition}[Bell Overlap (BO)]\label{def:bell-overlap}

Let $\cal S$ be an ensemble of $nt$-qubit quantum states on $t$ many $n$-qubit registers $\X_1, \ldots, \X_t$. For $\rho_{\mathsf{X}_1, \ldots , \mathsf{X}_t} \in \cal S$ and distinct $i, j \in [t]$, denote its reduced state on registers $\X_i$ and $\X_j$ by 
\begin{equation}
\rho_{\X_i, \X_j} \coloneqq \tr_{\X_{[t] \backslash \{i, j\}}}[\rho_{\mathsf{X}_1, \ldots , \mathsf{X}_t}].
\end{equation} 
We say that $\cal S$ has maximum $f(n)$ Bell overlap, denoted BO$(f(n))$, if

\begin{equation}
   \beta(\calS) := \sup_{\substack{\rho_{\mathsf{X}_1, \ldots , \mathsf{X}_t} \in \cal S\\ 1\leq i < j\leq t}}\tr[\rho_{\X_i, \X_j}\ket{\Omega}\bra{\Omega}_{\X_i, \X_j}] \leq f(n).
\end{equation}
\end{definition}
Note that Bell overlap is not intended to limit entanglement (See  \Cref{remark:compare-grevink} for an example). It limits alignment with {\em one} additional operator that distinguishes the orthogonal and unitary second moments, i.e., the Bell state. Let us now recall the explicit form of the $t$-wise twirl w.r.t.\ Haar random orthogonal matrix.

\begin{lemma}[Second order twirl w.r.t.\ the orthogonal group \cite{garcia-martin_quantum_2025, collins_integration_2006,grevink_will_2025}]\label{lemma:orth2-twirl}
For any operator $X \in \calL((\C^N)^{\ot 2})$, and a Haar random orthogonal matrix $O_{\Haar}$,
\begin{align}
    \E_{O_{\Haar}}(O\ot O)X(O^{\dag} \ot O^{\dag}) &= c_{\mathbbm{1}} \mathbbm{1}\ot\mathbbm{1} + c_{\SWAP} \SWAP + c_{\Omega} \ketbra{\Omega}{\Omega},
\end{align}
where 
\begin{align}
    c_{\mathbbm{1}} = \frac{1}{N(N+2)(N-1)}((N+1)\tr[X] - \tr[X\cdot \SWAP] - \tr[X \cdot \ketbra{\Omega}{\Omega}]) ,\\
    c_{\SWAP} = \frac{1}{N(N+2)(N-1)}(-\tr[X] + (N+1)\tr[X\cdot \SWAP] - \tr[X \cdot \ketbra{\Omega}{\Omega}]) ,\\
    c_{\Omega} = \frac{1}{N(N+2)(N-1)}(-\tr[X] - \tr[X\cdot \SWAP] + (N+1)\tr[X \cdot \ketbra{\Omega}{\Omega}]).
\end{align}
\end{lemma}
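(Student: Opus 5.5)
The plan is to use the standard Weingarten route for the second moment of $\rmO(N)$: first identify the commutant, then solve a small linear system. The first step is to show that $\Comm(\rmO(N),2)$ is spanned by $\mathbbm{1}\ot\mathbbm{1}$, $\SWAP$ and $\ketbra{\Omega}{\Omega}$. All three commute with $O\ot O$ for real orthogonal $O$:
\begin{itemize}
\item this is immediate for $\mathbbm{1}$ and for $\SWAP$;
\item for $\ketbra{\Omega}{\Omega}$ it follows from $(A\ot B)\ket{\Omega} = (AB^{\mathsf{T}}\ot\mathbbm{1})\ket{\Omega}$, so that $(O\ot O)\ket{\Omega} = \ket{\Omega}$ whenever $OO^{\mathsf{T}}=\mathbbm{1}$.
\end{itemize}
That there is nothing else in the commutant is Brauer's first fundamental theorem for $\rmO(N)$ at $t=2$, i.e.\ the Brauer algebra $B_2(N)$ of dimension $3$. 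I will cite this from \cite{collins_integration_2006,garcia-martin_quantum_2025} rather than reprove it. For $N\ge 2$ the three operators are linearly independent, and this is all we need.

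Next, set $\mathcal{T}(X):=\E_{O}(O\ot O)X(O^{\dagger}\ot O^{\dagger})$. By left invariance of the Haar measure on $\rmO(N)$, $\mathcal{T}(X)$ lies in the commutant. Moreover, $\mathcal{T}$ preserves $\tr[X\,Y]$ for every $Y$ in the commutant, by cyclicity of the trace and $[Y,O\ot O]=0$. So if we write $\mathcal{T}(X)=c_{\mathbbm{1}}\mathbbm{1}+c_{\SWAP}\SWAP+c_{\Omega}\ketbra{\Omega}{\Omega}$, the coefficients satisfy $G\,c = b$. Here $b=(\tr X,\tr[X\SWAP],\tr[X\ketbra{\Omega}{\Omega}])^{\mathsf{T}}$ and $G$ is the Gram matrix of the three operators under the Hilbert--Schmidt inner product. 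Its entries are elementary:
\begin{itemize}
\item $\tr\mathbbm{1}=N^2$, $\tr\SWAP=N$, $\tr\ketbra{\Omega}{\Omega}=N$;
\item $\tr\SWAP^2=N^2$;
\item $\tr[\SWAP\ketbra{\Omega}{\Omega}]=\bra{\Omega}\SWAP\ket{\Omega}=N$;
\item $\tr[\ketbra{\Omega}{\Omega}^2]=\langle\Omega|\Omega\rangle^2=N^2$.
\end{itemize}
Hence
\begin{equation}
G=\begin{pmatrix} N^2 & N & N\\ N & N^2 & N\\ N & N & N^2\end{pmatrix}.
\end{equation}

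The last step is inverting $G$, and this is the only place anything can go wrong, though only arithmetically. $G$ has the form $(N^2-N)\mathbbm{1}_3+N J$, where $J$ is the all-ones matrix. Its inverse is therefore $\alpha\mathbbm{1}_3+\gamma J$, with
\begin{equation}
\alpha=\frac{1}{N(N-1)},\qquad \gamma=-\frac{N}{(N^2-N)(N^2+2N)}=-\frac{1}{N(N-1)(N+2)}.
\end{equation}
The diagonal entry is then $\alpha+\gamma=\frac{N+1}{N(N-1)(N+2)}$, and the off-diagonal entry is $\gamma$. Multiplying $G^{-1}$ by $b$ reproduces exactly the three stated formulas for $c_{\mathbbm{1}},c_{\SWAP},c_{\Omega}$.

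As a sanity check, take $X=\mathbbm{1}$. Then $b=(N^2,N,N)$, and the formulas give $c_{\mathbbm{1}}=1$ and $c_{\SWAP}=c_{\Omega}=0$, as they must.
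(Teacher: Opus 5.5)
Your proof is correct: the commutant identification via Brauer's theorem (valid since $N\ge 2$), the invariance of $\tr[XY]$ for $Y$ in the commutant, the Gram matrix, and its inverse (diagonal entries $\frac{N+1}{N(N-1)(N+2)}$, off-diagonal entries $-\frac{1}{N(N-1)(N+2)}$) all check out and reproduce the three stated coefficients. The paper gives no proof of this lemma and simply imports it from the cited Weingarten-calculus references, which obtain it by this same route of determining the commutant and then inverting the Gram matrix.
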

We are ready to show that any orthogonal 2-design ensemble is $\mathcal{O}(t^2/N)$-distinct on state ensembles with constant Bell overlap.

\begin{lemma}[Distinctness of orthogonal $2$-designs at bounded Bell overlap]\label{lemma:orth-2design-dist}
    Let $\calE$ be an orthogonal $2$-design ensemble $\mathcal{E}$ and $\calS$ be an ensemble of states with Bell overlap $\mathcal{O}(1) $ (\Cref{def:bell-overlap}). Then $\calE$ is $\mathcal{O}(t^2/N)$-distinct w.r.t.\ $\calS$.
\end{lemma}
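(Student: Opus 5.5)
The plan mirrors the EAC route of \Cref{lemma:EAC-implies-distinctness}, restricted to $\calS$. By the upper operator inequality of \Cref{fact:dist-eq-op-ineq}, it suffices to bound, for every $\rho\in\calS$ and every pair $i<j$, the two-register collision probability
\begin{equation}
p_{ij} := \tr\left[\Pi^{\eq}_{\X_i,\X_j}\,\E_{O\leftarrow\calE}\bigl[(O\ot O)\rho_{\X_i,\X_j}(O^{\dagger}\ot O^{\dagger})\bigr]\right]
\end{equation}
by $\mathcal{O}(1/N)$. Summing over the $\binom{t}{2}$ pairs then gives $\mathcal{O}(t^2/N)$. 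As in \Cref{lemma:distinctness-implies-EAC}, this reduction uses only the two-register marginal $\rho_{\X_i,\X_j}$, and it is exactly these marginals that the Bell overlap $\beta(\calS)$ controls.

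Because $\calE$ is an exact orthogonal $2$-design, the second moment equals the Haar orthogonal twirl. We then apply \Cref{lemma:orth2-twirl} with $X=\rho_{\X_i,\X_j}$, writing $s:=\tr[\rho_{\X_i,\X_j}\SWAP]\in[-1,1]$ and $b:=\tr[\rho_{\X_i,\X_j}\ketbra{\Omega}{\Omega}]\in[0,\beta(\calS)]$. The resulting $p_{ij}$ is
\begin{equation}
c_{\mathbbm{1}}\tr[\Pi^{\eq}]+c_{\SWAP}\tr[\Pi^{\eq}\SWAP]+c_{\Omega}\tr[\Pi^{\eq}\ketbra{\Omega}{\Omega}].
\end{equation}
The three traces are $\tr[\Pi^{\eq}]=N$, $\tr[\Pi^{\eq}\SWAP]=N$ (as computed in \Cref{thm:PRU-negl-EAC}), and $\tr[\Pi^{\eq}\ketbra{\Omega}{\Omega}]=\bra{\Omega}\Pi^{\eq}\ket{\Omega}=N$, since $\ket{\Omega}$ is supported entirely on the equality subspace. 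Hence $p_{ij}=N(c_{\mathbbm{1}}+c_{\SWAP}+c_{\Omega})$.

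The coefficient formulas make each of $1$, $s$ and $b$ appear with combined weight $(N+1)-1-1=N-1$. This gives
\begin{equation}
p_{ij}=\frac{N(N-1)(1+s+b)}{N(N+2)(N-1)}=\frac{1+s+b}{N+2}\le\frac{2+\beta(\calS)}{N+2}.
\end{equation}
Summing over pairs yields $\tr[\overline{\Pi}^{\dist}\,\E_O O^{\ot t}\rho O^{\dagger\ot t}]\le \frac{t(t-1)}{2}\cdot\frac{2+\beta(\calS)}{N+2}=\mathcal{O}(t^2/N)$ when $\beta(\calS)=\mathcal{O}(1)$.

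The only step needing care is the bookkeeping of the three traces against $\Pi^{\eq}$ and the cancellation in the coefficient sum; in particular one must note that $\ket{\Omega}$ sits inside the range of $\Pi^{\eq}$. This is the whole content: a Bell overlap of order $N$ would make the bound trivial, which is exactly the real-valued obstruction of \Cref{thm:imag-necc-for-dist}. The same computation with $\beta(\calS)\le N/n^{\omega(1)}$ gives $\negl(n)$-distinctness for $t=\poly(n)$, which is the regime needed later for \Cref{thm:real-PRUs}.
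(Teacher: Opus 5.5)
Your proposal is correct and follows essentially the same route as the paper. You reduce via $\overline{\Pi}^{\mathsf{dist}}\preceq\sum_{i<j}\Pi^{\mathsf{eq}}_{\mathsf{X}_i,\mathsf{X}_j}$ to two-register marginals, apply the orthogonal second-moment formula, use that all three traces against $\Pi^{\mathsf{eq}}$ equal $N$, and bound the resulting $(1+s+b)/(N+2)$ using the Bell overlap. Keeping $\beta(\mathcal{S})$ explicit in the final bound $\tfrac{t(t-1)}{2}\cdot\tfrac{2+\beta(\mathcal{S})}{N+2}$ makes the extension to Bell overlap $N/n^{\omega(1)}$ transparent, which the paper only remarks on afterwards.
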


\begin{proof}
 It suffices to show that for any $\rho_{\mathsf{X}_1, \ldots , \mathsf{X}_t} \in \calS$, 
  \begin{equation}
        \tr[\Pi^{\dist}\E_{U \leftarrow \mathcal{E}}C^{\otimes t}\rho_{\mathsf{X}_1, \ldots , \mathsf{X}_j} C^{\dagger, \otimes t}] \geq 1 - \mathcal{O}\left( \frac{t^2}{N} \right).
\end{equation}
 Our strategy is to consider (and upper bound with respect to) the orthogonal complement of the distinct subspace projector $\overline{\Pi}^{\dist} \coloneqq \mathbbm{1} - {\Pi}^{\dist}
 $.  
 Starting from Eq.~(3.8) in Ref.~\cite{metger_simple_2024}, we have that 
    \begin{align}
        \tr[\overline{\Pi}^{\dist} \E_{C \leftarrow \mathcal{E}}C^{\otimes t}\rho_{\mathsf{X}_1, \ldots , \mathsf{X}_t} C^{\dagger, \otimes t}] &\leq \sum_{1\leq i<j \leq t} \tr[\sum_{x\in [N]}\ketbra{x}{x}^{\ot 2}\E_{C \leftarrow \cal E} (C\ot C)\rho_{\X_i, \X_j}(C^{\dag} \ot C^{\dag})]\\
        \nonumber
        &= \frac{t(t-1)}{2} \cdot \tr[\sum_{x\in [N]}\ketbra{x}{x}^{\ot 2}\E_{C \leftarrow \cal E} (C\ot C)\rho_{\X_i, \X_j}(C^{\dag} \ot C^{\dag})].
    \end{align}
    Unlike \cite[Lemma 3.2]{metger_simple_2024}, however, we must prove an {\em input state dependent} bound. For that, we will bound the right hand side by $\mathcal{O}(t^2/N)$. To that end, it suffices to bound 
    \begin{equation}
         \tr[{\Pi}^{\mathsf{eq}}\E_{C \leftarrow \cal E} (C\ot C)\rho_{\X_i, \X_j}(C^{\dag} \ot C^{\dag})] \leq \mathcal{O}(1/N),
    \end{equation}
    where ${\Pi}^{\mathsf{eq}} = \sum_{x\in [N]} \ketbra{x}{x} \otimes \ketbra{x}{x}$. Due to \Cref{lemma:orth2-twirl}, we can write 
\begin{align}
    \tr[{\Pi}^{\mathsf{eq}}\E_{C \leftarrow \cal E} (C\ot C)\rho_{\X_i, \X_j}(C^{\dag} \ot C^{\dag})] &= c_{\mathbbm{1}} \tr[{\Pi}^{\mathsf{eq}}] + c_{\SWAP} \tr[{\Pi}^{\mathsf{eq}} \SWAP] + c_{\Omega}\tr[{\Pi}^{\mathsf{eq}}\ketbra{\Omega}{\Omega}].
\end{align}
It is easy to verify that
\begin{equation}
    \tr[{\Pi}^{\mathsf{eq}}] = \tr[{\Pi}^{\mathsf{eq}} \SWAP] = \tr[{\Pi}^{\mathsf{eq}}\ketbra{\Omega}{\Omega}] = N.
\end{equation}
Thus,
\begin{equation}
     \tr[{\Pi}^{\mathsf{eq}}\E_{C \leftarrow \cal E} (C\ot C)\rho_{\X_i, \X_j}(C^{\dag} \ot C^{\dag})] = N ( c_{\mathbbm{1}} +  c_{\SWAP} + c_{\Omega}).
\end{equation}
Again, by \Cref{lemma:orth2-twirl}, we have that
\begin{align}
 \nonumber
    c_{\mathbbm{1}} +  c_{\SWAP} + c_{\Omega} &= \frac{\tr[\rho_{\X_i, \X_j}] + \tr[\rho_{\X_i, \X_j}\SWAP] + \tr[\rho_{\X_i, \X_j}\ketbra{\Omega}{\Omega}]}{N(N+2)}\\
    \nonumber
    &\leq \frac{2 + \tr[\rho_{\X_i, \X_j}\ketbra{\Omega}{\Omega}]}{N(N+2)}\\
    &\leq \frac{2 + \mathcal{O}(1)}{N(N+2)}\label{eq:BO-condition},
\end{align}
where the last inequality holds because the state ensemble $\calS$ is BO($\calO(1)$) by assumption.
Thus,

\begin{equation}
     \tr[{\Pi}^{\mathsf{eq}}\E_{C \leftarrow \cal E} (C\ot C)\rho_{\X_i, \X_j}(C^{\dag} \ot C^{\dag})] = \frac{2 + \mathcal{O}(1)}{N+2} = \mathcal{O}\left(\frac{1}{N}\right).
\end{equation}
Putting together, this gives
\begin{equation}
\tr[\overline{\Pi}^{\dist} \E_{C \leftarrow \mathcal{E}}C^{\otimes t}\rho_{\mathsf{X}_1, \ldots , \mathsf{X}_t} C^{\dagger, \otimes t}] \leq \mathcal{O}\left( \frac{t^2}{N} \right)
\end{equation} 
as desired. 
\end{proof}

Recall that the real Clifford group forms an orthogonal 2-design \cite{hashagen_real_2018}. \Cref{lemma:orth-2design-dist}, combined with \Cref{thm:negl-distinct-to-PRU} gives the following.

\begin{corollary}[$PFC_{\R}$ is a real unitary design for bounded Bell overlap inputs]\label{corr:PFC-real-PRU}
  Let $C_{R}\leftarrow \Cl_{\R}(N)$ be a real Clifford drawn uniformly randomly from the $n$-qubit real Clifford group $\Cl_{\R}(N)$ and $\calS$ be the set of quantum states with Bell overlap at most $\calO(1)$. Then for every $(\rho_{\X_1,\ldots,\X_t}) \in \calS$,
  \begin{equation}
    \left\|\calM_{PFC_{\R}}^{(t)}(\rho_{\X_1,\ldots,\X_t})-\calM_{U_{\Haar}}^{(t)}(\rho_{\X_1,\ldots,\X_t})\right\|_1 \leq \mathcal{O}(t/\sqrt{N}).
  \end{equation}
\end{corollary}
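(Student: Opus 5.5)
The plan is to combine \Cref{lemma:orth-2design-dist} with the input-restricted version of \Cref{thm:negl-distinct-to-PRU}.

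First, since the real Clifford group $\Cl_{\R}(N)$ is an orthogonal $2$-design \cite{hashagen_real_2018}, \Cref{lemma:orth-2design-dist} shows that $C_{\R}$ is $\delta$-distinct on $\calS$ with $\delta=\calO(t^2/N)$. That is,
\begin{equation}
\tr\bigl[\overline{\Pi}^{\dist}\,\calM^{(t)}_{C_{\R}}(\rho)\bigr]\leq \calO(t^2/N)\qquad\text{for all }\rho\in\calS.
\end{equation}

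Second, I would rerun the argument of \cite[Section~3]{metger_simple_2024} behind \Cref{thm:negl-distinct-to-PRU}, checking that it only uses distinctness of the specific state $\sigma:=\calM^{(t)}_{C_{\R}}(\rho)$ rather than of all inputs. Concretely:
\begin{itemize}
\item Write $\calM^{(t)}_{PFC_{\R}}(\rho)=\calM^{(t)}_{PF}(\sigma)$.
\item Apply the gentle measurement lemma to replace $\sigma$ by the normalized projection $\sigma'=\Pi^{\dist}\sigma\Pi^{\dist}/\tr[\Pi^{\dist}\sigma]$. This costs $\calO(\sqrt{\delta})=\calO(t/\sqrt N)$ in trace norm, and the cost is preserved by the channel $\calM^{(t)}_{PF}$ by contractivity.
\item Invoke \cite[Theorem~5.2]{metger_simple_2024}: on distinct-subspace inputs, $\calM^{(t)}_{PF}$ is $\calO(t^2/N)$-close to the Haar twirl. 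Here it matters that the binary phase $F_{\R}$ is exactly the phase used in $PFC$, so no change is needed.
\end{itemize}

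Third, I would return from $\sigma'$ to $\sigma$ on the Haar side. Haar invariance gives $\calM^{(t)}_{U_{\Haar}}(\sigma)=\calM^{(t)}_{U_{\Haar}}(\rho)$, because $C_{\R}^{\otimes t}$ is absorbed by the left and right invariance of the Haar measure. A second gentle-measurement step gives $\|\calM^{(t)}_{U_{\Haar}}(\sigma')-\calM^{(t)}_{U_{\Haar}}(\sigma)\|_1\leq\calO(t/\sqrt N)$. The triangle inequality then gives a total error of $\calO(t/\sqrt N+t^2/N)$, which is $\calO(t/\sqrt N)$ in the meaningful regime $t\leq\sqrt N$ (otherwise the bound is trivial).

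The main obstacle is the second step: verifying that the $PF$ analysis is genuinely state-wise and not a diamond-norm statement over arbitrary inputs. The distinctness hypothesis used there must be applied to the fixed state $\sigma$, and $\sigma$ depends on $\rho$ only through the $C_{\R}$ twirl. I also need to confirm that \cite[Theorem~5.2]{metger_simple_2024} holds for arbitrary (possibly ancilla-entangled) states supported on the distinct subspace. If ancillas are present, I would first check that purifying $\rho$ does not change the Bell overlap of the relevant two-register marginals. Since those marginals are unaffected by the reference register, \Cref{lemma:orth-2design-dist} still applies.
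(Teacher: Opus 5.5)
Your proposal is correct and follows the paper's route: the paper obtains the corollary in one line, using that $\Cl_{\R}(N)$ is an orthogonal $2$-design, applying \Cref{lemma:orth-2design-dist} to get $\calO(t^2/N)$-distinctness on $\calS$, and then invoking \Cref{thm:negl-distinct-to-PRU} to obtain $\calO(\sqrt{\delta}+t^2/N)=\calO(t/\sqrt N)$. Your state-by-state unpacking of that theorem (gentle measurement applied to $\sigma=\calM^{(t)}_{C_{\R}}(\rho)$, the $PF$ analysis on distinct inputs, and Haar invariance on the Haar side) is the justification for using the theorem under an input restriction, which the paper leaves implicit.
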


 We show in \Cref{cor:binary-HF-EAC} that $HF_{\R}$ ensemble is $\mathcal{O}(t^2/N)$-distinct on input states with $\calO(1)$ Bell overlap, without even being a state $1$-design. Hence, we similarly get that

\begin{corollary}[$PF_{\R}HF_{\R}$ is a real unitary design for bounded Bell overlap inputs]\label{corr:PFHF-real-PRU}
  Let 
  \begin{equation}
  F_{\R} \coloneqq \sum_{x\in[N]}(-1)^{f(x)}\ketbra{x}{x},
\end{equation}
where $f:\zo^n\to\zo$ be a uniformly random Boolean function and let $\calS$ be the set of quantum states with Bell overlap at most $\calO(1)$. Then for every $(\rho_{\X_1,\ldots,\X_t}) \in \calS$,
  \begin{equation}
    \left\|\calM_{PF_{\R}HF_{\R}}^{(t)}(\rho_{\X_1,\ldots,\X_t})-\calM_{U_{\Haar}}^{(t)}(\rho_{\X_1,\ldots,\X_t})\right\|_1 \leq \mathcal{O}(t/\sqrt{N}).
  \end{equation}
\end{corollary}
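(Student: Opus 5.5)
The plan is to reduce \Cref{corr:PFHF-real-PRU} to the $\calO(t^2/N)$-distinctness of $HF_{\R}$ on $\calS$, and then apply the design principle of \Cref{thm:negl-distinct-to-PRU}. This mirrors how \Cref{corr:PFC-real-PRU} follows from \Cref{lemma:orth-2design-dist}.

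\textbf{Step 1: EAC of $HF_{\R}$ on a fixed two-register marginal.} Write $D=HF_{\R}$ with $F_{\R}=\sum_x(-1)^{f(x)}\ketbra{x}{x}$ and $f$ uniformly random. For a two-register marginal $\rho_{\X_i,\X_j}$ of some $\rho\in\calS$, I will compute
\begin{equation}
\tr\bigl[\Pi^{\eq}\,\E_f (HF_{\R})^{\ot 2}\rho_{\X_i,\X_j}(F_{\R}H)^{\ot 2}\bigr].
\end{equation}
First I average over $f$. The two-copy binary phase twirl kills every matrix element $\ketbra{a,b}{a',b'}$ unless $f(a)+f(b)+f(a')+f(b')$ cancels identically. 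That happens exactly when $\{a,b\}=\{a',b'\}$ as multisets, or when $a=b$ and $a'=b'$.

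So the twirled state decomposes into three parts: a diagonal part, a ``swap'' part $\ketbra{a,b}{b,a}$, and a ``Bell'' part $\ketbra{a,a}{a',a'}$. The Bell part is, up to double counting on the diagonal $a=b=a'=b'$, the projection of $\rho$ onto $\mathrm{span}\{\ket{a,a}\}$ paired as $\ketbra{a,a}{a',a'}$.

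Next I apply $H^{\ot 2}$ and the test $\Pi^{\eq}$. Using $\bra{x,x}H^{\ot 2}\ket{a,b}=N^{-1}(-1)^{x\cdot(a\oplus b)}$ and \Cref{fact:character-ortho}, I evaluate each part:
\begin{itemize}
\item The diagonal part contributes at most $\sum_{a,b}\rho_{ab,ab}/N=1/N$.
\item The swap part contributes at most $\sum_{a,b}|\rho_{ab,ba}|/N\le 1/N$, by the same character sum.
\item The Bell part gives $N^{-1}\sum_{a,a'}\rho_{aa,a'a'}=\tr[\rho\ketbra{\Omega}{\Omega}]/N$, because the phases $(-1)^{x\cdot 0}$ are trivial.
\end{itemize}
Combining, the collision probability is at most $(2+\beta(\calS))/N$, which is $\calO(1/N)$ under BO$(\calO(1))$. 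I expect this to be the same content as \Cref{cor:binary-HF-EAC} (referenced above), and I would simply invoke it there.

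\textbf{Step 2: lift to distinctness.} As in \Cref{lemma:EAC-implies-distinctness}, the operator inequality \Cref{fact:dist-eq-op-ineq} bounds $\tr[\overline{\Pi}^{\dist}\cdot]$ by a sum over the $\binom t2$ pairs $i<j$. Each summand depends only on the two-register marginal, and each such marginal has Bell overlap at most $\beta(\calS)$. This gives $\calO(t^2/N)$-distinctness of $HF_{\R}$ on $\calS$.

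One caveat: \Cref{lemma:EAC-implies-distinctness} is stated for EAC on all inputs. Here it must be applied input-wise, which works because the Bell-overlap condition is defined on every pairwise marginal.

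\textbf{Step 3: apply \Cref{thm:negl-distinct-to-PRU} input-wise.} The proof of \Cref{thm:negl-distinct-to-PRU} (following \cite{metger_simple_2024}) uses the diamond norm only through the weight of $D^{\ot t}\rho D^{\dagger\ot t}$ outside $\Pi^{\dist}$. That weight enters via the gentle measurement lemma, followed by the fact that $PF$ on the distinct subspace is $\calO(t^2/N)$-close to Haar.

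So for each fixed $\rho\in\calS$ (allowing a purifying reference), I get a trace distance of $\calO(\sqrt{t^2/N}+t^2/N)=\calO(t/\sqrt N)$ in the regime $t\ll\sqrt N$. The real binary $F_{\R}$ in the outer layer is exactly the $F$ of the $PFC$ construction, so nothing changes there.

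\textbf{Main obstacle.} The hardest part is Step 1's bookkeeping. Overlapping index cases (e.g.\ $a=b=a'=b'$) are double counted across the three parts. The swap part also requires showing the character sum is bounded by $1/N$ even with possibly complex entries $\rho_{ab,ba}$, which follows from $|\rho_{ab,ba}|\le\sqrt{\rho_{ab,ab}\rho_{ba,ba}}$ and AM--GM. I expect these to give only $\calO(1/N)$ corrections.

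A secondary check is that the input-restricted version of \Cref{thm:negl-distinct-to-PRU} is valid. In particular, the reference register must not spoil the Bell-overlap bound, and it does not, since the bound is imposed on the query-register marginals only.
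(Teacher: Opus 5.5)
Your proposal is correct and follows the paper's route. You establish $\calO(1/N)$-EAC of $HF_{\R}$ on inputs of bounded Bell overlap (\Cref{cor:binary-HF-EAC}), lift it pairwise via \Cref{fact:dist-eq-op-ineq} to $\calO(t^2/N)$-distinctness, and feed that input-wise into \Cref{thm:negl-distinct-to-PRU} to get $\calO(\sqrt{t^2/N}+t^2/N)=\calO(t/\sqrt{N})$. Your direct computational-basis evaluation, which gives exactly $\bigl(1+\tr[\rho\,\SWAP]+\bra{\Omega}\rho\ket{\Omega}-2\sum_a\rho_{aa,aa}\bigr)/N$, is an unpacked version of the paper's $HFH$ calculation, where the overcounting term appears as $-\tfrac{2}{N^2}J$ and is discarded by positivity (\Cref{lemma:J-is-PSD}).
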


We can go even further and use the $PFC_{\R}$ (or $PF_{\R}HF_{\R}$) ensemble to show the equivalence of unitary and orthogonal twirls on state ensembles with constant Bell overlap. 

\begin{proposition}[Unitary–orthogonal twirl equivalence at bounded Bell overlap]\label{prop:equiv-orthog-unitary-twirls-BO}
    Let $\calS$ be an ensemble of states with Bell overlap $\calO(1)$. Then for every $\rho_{\X_1,\ldots,\X_t} \in \calS$,
    \begin{equation}
         \left\|\calM_{O_{\Haar}}^{(t)}(\rho_{\X_1,\ldots,\X_t})-\calM_{U_{\Haar}}^{(t)}(\rho_{\X_1,\ldots,\X_t})\right\|_1 \leq \mathcal{O}(t/\sqrt{N}).
    \end{equation}
\end{proposition}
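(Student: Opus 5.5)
The plan is a hybrid argument through the real $PFC_{\R}$ ensemble, using right-invariance of the orthogonal Haar measure. For every $\rho\in\calS$ the triangle inequality gives
\begin{equation}
\left\|\calM_{O_{\Haar}}^{(t)}(\rho)-\calM_{U_{\Haar}}^{(t)}(\rho)\right\|_1
\leq
\left\|\calM_{O_{\Haar}}^{(t)}(\rho)-\calM_{PF_{\R}C_{\R}}^{(t)}(\rho)\right\|_1
+
\left\|\calM_{PF_{\R}C_{\R}}^{(t)}(\rho)-\calM_{U_{\Haar}}^{(t)}(\rho)\right\|_1 .
\end{equation}
The second term is $\calO(t/\sqrt{N})$ by \Cref{corr:PFC-real-PRU}, which is exactly where the $\calO(1)$ Bell overlap assumption enters.

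For the first term, I would first note that $P$, $F_{\R}$ and $C_{\R}$ are all real orthogonal matrices. Hence, for $O\leftarrow O_{\Haar}$ independent of them, $O P F_{\R} C_{\R}$ is again Haar distributed on $\rmO(N)$ by right-invariance, so $\calM_{O_{\Haar}}^{(t)}=\calM_{O_{\Haar}}^{(t)}\circ\calM_{PF_{\R}C_{\R}}^{(t)}$. The first term then becomes
\begin{equation}
\left\|\calM_{O_{\Haar}}^{(t)}(\sigma)-\sigma\right\|_1,
\qquad \sigma\coloneqq\calM_{PF_{\R}C_{\R}}^{(t)}(\rho).
\end{equation}
By the previous step, $\sigma$ is $\calO(t/\sqrt{N})$-close in trace norm to $\tau\coloneqq\calM_{U_{\Haar}}^{(t)}(\rho)$. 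Since $\calM_{O_{\Haar}}^{(t)}$ is a channel and therefore trace-norm contractive, it suffices to show $\calM_{O_{\Haar}}^{(t)}(\tau)=\tau$, at the cost of a factor two in the error.

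To see that $\tau$ is fixed, use left-invariance of the unitary Haar measure under any $O\in\rmO(N)\subset\rmU(N)$: we have $O^{\ot t}\tau O^{\dagger,\ot t}=\tau$ for every orthogonal $O$, so averaging over $O_{\Haar}$ leaves $\tau$ unchanged. This gives the bound $\calO(t/\sqrt{N})+2\cdot\calO(t/\sqrt{N})$.

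The main step to watch is that \Cref{corr:PFC-real-PRU} is only invoked on the original input $\rho\in\calS$, never on the evolved state $\sigma$. The fixed-point argument is what allows the orthogonal twirl to be applied to $\sigma$ without any further Bell overlap condition. If instead one tries to compare $O_{\Haar}$ with $PF_{\R}C_{\R}$ directly on $\rho$, one would need the Bell overlap of $\sigma$, which is not controlled. The route through $\tau$ avoids this. One should also check that \Cref{corr:PFC-real-PRU} holds with an arbitrary auxiliary register, or at least for the $t$-register states in $\calS$, which is all the statement requires.
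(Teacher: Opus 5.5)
Your argument is correct and uses the same ingredients as the paper's proof: the absorption identity $\calM_{O_{\Haar}}^{(t)}\circ\calM_{PFC_{\R}}^{(t)}=\calM_{O_{\Haar}}^{(t)}$, the fixed-point identity $\calM_{O_{\Haar}}^{(t)}\circ\calM_{U_{\Haar}}^{(t)}=\calM_{U_{\Haar}}^{(t)}$, trace-norm contractivity, and \Cref{corr:PFC-real-PRU} applied only to $\rho$. The one difference is that the paper skips your initial triangle inequality. It writes $\calM_{O_{\Haar}}^{(t)}(\rho)-\calM_{U_{\Haar}}^{(t)}(\rho)=\calM_{O_{\Haar}}^{(t)}(\sigma-\tau)$ and contracts once, which gives the bound with constant $1$ instead of your $3$.
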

\begin{proof}
Noting each $PFC_{\R}$ lies in $O(N)$ and by the invariance of the Haar measure, we get the following two equalities: $\calM_{O_{\Haar}}^{(t)} \circ \calM_{PFC_{\R}}^{(t)} = \calM_{O_{\Haar}}^{(t)}$ and  $\calM_{O_{\Haar}}^{(t)}\circ \calM_{U_{\Haar}}^{(t)} = \calM_{U_{\Haar}}^{(t)}$. Thus,
\begin{align}
     \left\|\calM_{O_{\Haar}}^{(t)}(\rho_{\X_1,\ldots,\X_t})-\calM_{U_{\Haar}}^{(t)}(\rho_{\X_1,\ldots,\X_t})\right\|_1 &=  \left\|\calM_{O_{\Haar}}^{(t)}\left(\calM_{PFC_{\R}}^{(t)}(\rho_{\X_1,\ldots,\X_t})-\calM_{U_{\Haar}}^{(t)}(\rho_{\X_1,\ldots,\X_t})\right)\right\|_1 \\
     &\leq   \left\|\calM_{PFC_{\R}}^{(t)}(\rho_{\X_1,\ldots,\X_t})-\calM_{U_{\Haar}}^{(t)}(\rho_{\X_1,\ldots,\X_t})\right\|_1,
     \nonumber
\end{align}
where the second inequality is due to contractivity of trace under CPTP maps. Finally, substituting the bound in \Cref{corr:PFC-real-PRU} yields the desired claim.
\end{proof}

\begin{remark}[Beyond the PPT barrier]\label{remark:compare-grevink}
\Cref{prop:equiv-orthog-unitary-twirls-BO} improves the best known (to 
the authors' knowledge) bound concerning the equivalence of unitary and orthogonal twirls derived as \cite[Theorem S18]{grevink_will_2025}. \Cref{prop:equiv-orthog-unitary-twirls-BO}  holds for a larger class of input states, unlike only the PPT (\emph{positive partial transpose}) 
states in Ref.\ \cite[Theorem S18]{grevink_will_2025}, thus going beyond the PPT barrier identified as one of the open questions in Ref.\ \cite{grevink_will_2025}. The set of PPT states is a strictly smaller set than the set of states with Bell overlap $\calO(1)$. To see this, first note that \cite[Eq.~(S51)]{grevink_will_2025} implies that every PPT state, where the partial transpose is taken on each of the $t$ registers, has $\calO(1)$ Bell overlap. Conversely, there is a class of states with $\calO(1)$ Bell overlap that are NPT (\emph{non-positive partial transpose}). Concretely, let $W$ to be any traceless unitary and consider its Choi state vector 
\begin{equation}
\ket{\psi_W} \coloneqq (W \ot \Id)\ket{\Omega}/\sqrt{N}, 
\end{equation}
where
\begin{equation}
\ketbra{\psi_W}{\psi_W}^{\mathsf{T}_2} = \frac{1}{N}(W\ot \Id)\SWAP(W^{\dagger}\ot \Id).
\end{equation}
For any anti-symmetric state 
vector $\ket{\psi_{-}}$,
we have
\begin{equation}
\ketbra{\psi_W}{\psi_W}^{\mathsf{T}_2}((W\ot \Id)\ket{\psi_{-}}) = -\frac{1}{N}(W\ot \Id)\ket{\psi_{-}}. 
\end{equation}
Hence, all states of the form $\ketbra{\psi_W}{\psi_W}$ are NPT. Notably, by the tracelessness of the unitary $W$, all such states have Bell overlap exactly $0$. This is notable since small Bell overlap can be construed to imply low entanglement. This is not true. The states $\ketbra{\psi_W}{\psi_W}$ are maximally entangled and yet have $0$ Bell overlap. This is because the Bell overlap does not measure overlap w.r.t.\ any (maximally) entangled state, but only with the Bell state, motivated by the fact that it appears as an extra element in the second order commutant of the orthogonal group, compared to the unitary group.     
\end{remark}

While \Cref{remark:compare-grevink} identifies concrete states, e.g.,  PPT states, some NPT (Choi) states, with constant Bell overlap on which $PFC_{\R}$ (or $PF_{\R}HF_{\R}$) forms an orthogonal design, it is natural to ask what other states have constant Bell overlap and whether we can go beyond the constant regime. In the following, we shall address both these questions. Note that if we are not particular about the $t$-dependence in the trace norm bounds in \Cref{prop:equiv-orthog-unitary-twirls-BO} and \Cref{corr:PFC-real-PRU}, as is the case with PRUs, we can, for instance, accommodate states with Bell overlap $\poly(n)$ (see \cref{eq:BO-condition} in \Cref{lemma:orth-2design-dist}). The next 
natural question is to characterize such states. 
We do so by using the Schmidt number of bipartite mixed states as proposed by Terhal and Horodecki \cite{terhal_schmidt_2000}, as an extension of the Schmidt rank for bipartite pure states. 

\begin{definition}[Schmidt number (Definition 1 in Ref.\ \cite{terhal_schmidt_2000})]
    A bipartite density matrix $\rho$ has Schmidt number $k$ if $(i)$ for any decomposition of $\rho$, $\{p_i\geq 0, \ket{\psi_i}\}$ with $\rho = \sum_i p_i \ketbra{\psi_i}{\psi_i}$ at least one of the vectors has at least Schmidt rank $k$ and $(ii)$ there exists a decomposition of $\rho$ with all vectors $\{\ket{\psi_i}\}$ of Schmidt rank at most $k$.
\end{definition}

As one would expect, for a pure state, the Schmidt number reduces to its Schmidt rank. Since we work with $t$-partite ($n$-qubit) states, in the following we first define the appropriate $t$-copy extension for our purposes. 

\begin{definition}[Pairwise $t$-partite Schmidt-number class]
Let $S_k$ denote the set of bipartite density matrices with Schmidt
number at most $k$. We define its pairwise $t$-partite extension by
\begin{equation}
S_{k}^{(t)}
:=
\left\{
\rho_{\X_1,\ldots,\X_t}
:
\rho_{\X_i,\X_j}\in S_k
\text{ for every }1\le i<j\le t
\right\},
\end{equation}
where
\begin{equation}
\rho_{\X_i, \X_j} \coloneqq \tr_{\X_{[t] \backslash \{i, j\}}}[\rho_{\mathsf{X}_1, \ldots , \mathsf{X}_t}].
\end{equation}
\end{definition}

\begin{theorem}[Schmidt number bounds Bell overlap]\label{thm:SN-imply-BO}
    Any state $\rho_{\X_1,\ldots,\X_t} \in S_k^{(t)}$ has Bell overlap at most $k$.
\end{theorem}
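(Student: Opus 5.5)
By \Cref{def:bell-overlap}, it suffices to bound the overlap of each two-register marginal with $\ketbra{\Omega}{\Omega}$. So fix $\rho_{\X_1,\ldots,\X_t}\in S_k^{(t)}$ and a pair $1\le i<j\le t$. We must show $\tr[\rho_{\X_i,\X_j}\ketbra{\Omega}{\Omega}]\le k$. Since $\rho_{\X_i,\X_j}\in S_k$, the definition of Schmidt number gives a decomposition $\rho_{\X_i,\X_j}=\sum_\ell p_\ell\ketbra{\psi_\ell}{\psi_\ell}$ with $p_\ell\ge 0$, $\sum_\ell p_\ell=1$, and every normalized $\ket{\psi_\ell}$ of Schmidt rank at most $k$. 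By linearity,
\begin{equation}
\tr[\rho_{\X_i,\X_j}\ketbra{\Omega}{\Omega}]=\sum_\ell p_\ell\,|\langle\Omega|\psi_\ell\rangle|^2\le\max_\ell |\langle\Omega|\psi_\ell\rangle|^2 .
\end{equation}
This reduces the claim to a statement about pure states: any normalized $\ket{\psi}$ with Schmidt rank $r\le k$ satisfies $|\langle\Omega|\psi\rangle|^2\le r$.

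For the pure-state bound, I will write $\ket{\psi}=\sum_{m=1}^r \lambda_m\ket{a_m}\ket{b_m}$, with $\lambda_m\ge0$, $\sum_m\lambda_m^2=1$, and orthonormal families $\{\ket{a_m}\}$ and $\{\ket{b_m}\}$. Using $\bra{\Omega}(\ket{a}\ot\ket{b})=\sum_x\langle x|a\rangle\langle x|b\rangle=\langle \bar a|b\rangle$, where $\ket{\bar a}$ denotes the entrywise complex conjugate of $\ket a$, this becomes
\begin{equation}
\langle\Omega|\psi\rangle=\sum_{m=1}^r\lambda_m\langle\bar a_m|b_m\rangle .
\end{equation}
Each $|\langle\bar a_m|b_m\rangle|\le1$ by Cauchy--Schwarz, since both vectors are normalized. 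Hence $|\langle\Omega|\psi\rangle|\le\sum_m\lambda_m\le\sqrt{r}\,(\sum_m\lambda_m^2)^{1/2}=\sqrt r$, again by Cauchy--Schwarz. Squaring gives $|\langle\Omega|\psi\rangle|^2\le r\le k$.

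An equivalent route is to write $\ket{\psi}=(A\ot\Id)\ket{\Omega}$ with $\operatorname{rank}A=r$ and $\|A\|_2=1$. Then $\langle\Omega|\psi\rangle=\tr[A]$, and $|\tr A|^2\le r\|A\|_2^2$ because $A$ has at most $r$ nonzero singular values. Either form suffices.

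Combining the pure-state bound with the convex-decomposition step gives the bound $k$ for every pair $(i,j)$. Taking the supremum over $\rho$ and pairs yields $\beta(S_k^{(t)})\le k$.

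I do not expect a real obstacle here. The only points needing care are the conjugation convention in $\langle\Omega|(\ket a\ot\ket b)\rangle$ and keeping the states normalized while $\ket{\Omega}$ stays unnormalized, which is why the bound is $k$ rather than $k/N$. The bound is tight: $\ket{\Omega}/\sqrt{N}$ has Schmidt rank $N$ and overlap exactly $N$.
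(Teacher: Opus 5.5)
Your proof is correct and follows the paper's route: restrict to each pairwise marginal $\rho_{\X_i,\X_j}$ and bound its overlap with $\ketbra{\Omega}{\Omega}$ by its Schmidt number. The only difference is that the paper cites Lemma~1 of Terhal and Horodecki for that bound, whereas you prove it directly. You decompose into pure states of Schmidt rank at most $k$ and use the Cauchy--Schwarz estimate $|\langle\Omega|\psi\rangle|^2\le r$, which is the standard proof of that lemma, so your argument is self-contained.
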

\begin{proof}
    For any bipartite state $\rho_{\X_i,\X_j}$, denote its Schmidt number by  $\operatorname{SN}
(\rho_{\X_i,\X_j})$. Then, $S_{k}^{(t)}$ can equivalently be written as
\begin{equation}
S_{k}^{(t)}
=
\left\{
\rho_{\X_1,\cdots,\X_t}:
\operatorname{SN}
(\rho_{\X_i,\X_j})\le k
\text{ for every }1\le i<j\le t
\right\}.
\end{equation}
Due to \cite[Lemma 1]{terhal_schmidt_2000}, we have that $\operatorname{SN}
(\rho_{\X_i\X_j})\le k$ implies $\tr[\ketbra{\Omega}{\Omega}\rho_{\X_i,\X_j}] \leq k$. Since this holds for every bipartition $1\leq i<j\leq t$, the desired claim follows.
\end{proof}

\begin{theorem}[Real unitary designs on separable inputs]\label{thm:PFC-sep-states}
    The ensembles $PFC_{\R}$ and $PF_{\R}HF_{\R}$ are both $\mathcal{O}(t/\sqrt{N})$-approximate additive unitary $t$-design on all separable states across registers $\X_1,\ldots,\X_t$.
\end{theorem}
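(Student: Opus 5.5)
The plan is to reduce the statement to \Cref{corr:PFC-real-PRU} and \Cref{corr:PFHF-real-PRU}. Those two results already give an $\mathcal{O}(t/\sqrt{N})$ trace-norm bound for every input in a state class of Bell overlap $\calO(1)$. So the only thing to check is that separable states across $\X_1,\ldots,\X_t$ form such a class. We must also make sure the bound holds in the additive (diamond-type) sense with an auxiliary register, since that is how an additive unitary $t$-design is defined.

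\textbf{Step 1: separable inputs have Bell overlap at most $1$.} Let $\rho_{\X_1,\ldots,\X_t}=\sum_k p_k\, \sigma^{(k)}_{1}\otimes\cdots\otimes\sigma^{(k)}_{t}$ be separable. Every two-register marginal $\rho_{\X_i,\X_j}=\sum_k p_k\,\sigma^{(k)}_i\otimes\sigma^{(k)}_j$ is then separable, so its Schmidt number is $1$. \Cref{thm:SN-imply-BO} with $k=1$ gives $\tr[\rho_{\X_i,\X_j}\ketbra{\Omega}{\Omega}]\le 1$. One can also verify this directly: for pure product vectors, $|\langle \Omega|a,b\rangle|^2=|\langle \bar a|b\rangle|^2\le 1$, and convexity extends this to mixtures. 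Hence the class of separable states is BO$(1)$.

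\textbf{Step 2: apply the corollaries.} With Step 1 in hand, \Cref{lemma:orth-2design-dist} (real Cliffords form an orthogonal $2$-design) and \Cref{cor:binary-HF-EAC} show that $C_{\R}$ and $HF_{\R}$ are $\mathcal{O}(t^2/N)$-distinct on separable inputs. Then \Cref{thm:negl-distinct-to-PRU} with $\delta=\mathcal{O}(t^2/N)$ gives an error of $\mathcal{O}(\sqrt{t^2/N}+t^2/N)=\mathcal{O}(t/\sqrt N)$ for $t\ll\sqrt N$. This is exactly the content of \Cref{corr:PFC-real-PRU} and \Cref{corr:PFHF-real-PRU}.

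\textbf{Step 3: the auxiliary register (main obstacle).} A design statement needs inputs $\rho_{\X_1\cdots\X_t\mathsf{R}}$ whose $\X$-marginal is separable. The proof of \Cref{thm:negl-distinct-to-PRU}, following \cite{metger_simple_2024}, only uses the distinctness weight $\tr[\overline{\Pi}^{\dist}\,\cdot\,]$. That quantity depends only on the reduced state on $\X_1,\ldots,\X_t$, and hence only on its two-register marginals, which have Bell overlap at most $1$ by Step 1. The gentle-measurement step and the $PF$-analysis on the distinct subspace go through with a reference register attached.

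I expect the delicate point to be this verification: that the proof of \Cref{thm:negl-distinct-to-PRU} uses distinctness only through the $\X$-marginal. If so, the restricted-input diamond bound follows, giving the $\mathcal{O}(t/\sqrt N)$-approximate additive $t$-design on separable inputs for both ensembles.
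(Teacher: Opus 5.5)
Your proposal is correct and follows the paper's proof. Separable states have Schmidt number one on every pair of registers, so \Cref{thm:SN-imply-BO} gives Bell overlap at most $1$, and the claim then follows from \Cref{corr:PFC-real-PRU} and \Cref{corr:PFHF-real-PRU}. Your Step~3 about the reference register addresses a point the paper leaves implicit, and it is sound: the argument behind \Cref{thm:negl-distinct-to-PRU} uses the input only through $\tr[\overline{\Pi}^{\dist}\calM_D^{(t)}(\rho_{\X_1,\ldots,\X_t})]$, and the union bound of \Cref{fact:dist-eq-op-ineq} controls this quantity by the two-register marginals (the quantity itself depends on the full $\X$-marginal, not only on the pairs).
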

\begin{proof}
    We will show that the 
    set of separable states across $\X_1,\ldots,\X_t$ has Bell overlap at most $1$ and the claim will follow from \Cref{corr:PFC-real-PRU} and \Cref{corr:PFHF-real-PRU}. Since every $t$-copy separable state is also separable on any bipartition, and noting that the Schmidt number of a separable state is 1, by \Cref{thm:SN-imply-BO} we get that the set of separable states across $\X_1,\ldots,\X_t$ also has Bell overlap at most $1$.
\end{proof}

As a special case, \Cref{thm:PFC-sep-states} resolves the conjecture by \cite{brakerski_real-valued_2024} that posited the existence of real-valued PRUs on product input states. 
Note that for PRUs, we can tolerate input states with Bell overlap of up to $N/n^{\omega(1)}$ since a $\negl(n)$ trace distance upper bound suffices (\Cref{thm:negl-distinct-to-PRU}). Consequently, for PRUs, we have security against input states whose Schmidt number can be (at most) $N/n^{\omega(1)}$.

\section{Distinctness without designs}
 In this section, we will show that $HF$ ensemble is $\calO(1/N)$-EAC. This would directly imply $\calO(t^2/N)$-distinctness by \Cref{lemma:EAC-implies-distinctness} that has  earlier been achieved 
 only by a unitary 2-design. We stress again $HF$ is not even a state 1-design. We first show that $HF$ ensemble is $\calO(1/N)$-EAC for complex $F$ in \Cref{lemma:HF-distinct} and then proceed to an analogous statement on restricted input states in the case when $F$ is a real binary phase operator \Cref{cor:binary-HF-EAC}. 

\begin{lemma}\label{lemma:HF-distinct}
    Let $N\coloneqq 2^{n}, H\coloneqq H^{\ot n}$ be the $n$-qubit Hadamard, and  $F = \sum_{x \in [N]} \omega^{f(x)}\ketbra{x}{x}$ where $\omega = e^{2\pi i/3}$ and $f: \zo^n \rightarrow \{0, 1, 2\}$ is random ternary function. Then, $HF$ is $\calO(1/N)$-EAC.
\end{lemma}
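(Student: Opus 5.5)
The plan is to bound the collision probability uniformly over all inputs by an operator norm, exactly as in the proof of \Cref{prop:single-layer-cliffords-eac}. By H\"older's inequality and cyclicity of the trace, for every bipartite state $\omega_{\X_i,\X_j}$,
\begin{equation}
\tr\left[\Pi^{\eq}\,\E_{F}\left[(HF)^{\ot 2}\omega_{\X_i,\X_j}(HF)^{\dagger\ot 2}\right]\right]\le \left\|\E_F\left[F^{\dagger\ot 2}\,H^{\ot 2}\Pi^{\eq}H^{\ot 2}\,F^{\ot 2}\right]\right\|_\infty .
\end{equation}
So it suffices to compute the operator $M:=\E_F[F^{\dagger\ot 2}H^{\ot 2}\Pi^{\eq}H^{\ot 2}F^{\ot 2}]$ explicitly and show that its norm is at most $2/N$.

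The first step is to conjugate the equality projector by Hadamards. Using \Cref{fact:character-ortho}, the matrix element is
\begin{equation}
\bra{a,b}H^{\ot 2}\Pi^{\eq}H^{\ot 2}\ket{c,d}=\frac{1}{N^2}\sum_{x}(-1)^{\langle x,a\oplus b\oplus c\oplus d\rangle}=\frac{1}{N}\,\delta_{a\oplus b,\,c\oplus d}.
\end{equation}
The second step is the diagonal phase twirl. Conjugating by $F^{\ot 2}$ multiplies the $(ab,cd)$ entry by $\omega^{f(c)+f(d)-f(a)-f(b)}$. Only four values of $f$ appear, so $4$-wise independence gives the same average as a uniformly random ternary $f$.

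The average is $1$ precisely when the coefficient of every distinct point among $a,b,c,d$ vanishes mod $3$, and $0$ otherwise. I would do a short case analysis on the multisets:
\begin{itemize}
\item If $\{a,b\}=\{c,d\}$ as multisets, every coefficient cancels and the average is $1$.
\item If $a=b$ but $c=d\neq a$, the coefficient of $f(a)$ is $-2\not\equiv 0 \pmod 3$, so the average is $0$.
\item In all remaining cases some point has coefficient $\pm1$ or $\pm2$ mod $3$, so the average is again $0$.
\end{itemize}
The case $a=b$, $c=d$ is exactly where ternary (rather than binary) phases matter. With $\pm1$ phases, $2f(a)$ vanishes mod $2$ and a $\ketbra{\Omega}{\Omega}$-type term survives; this is consistent with the real-valued case needing a Bell-overlap restriction.

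Whenever $\{a,b\}=\{c,d\}$, the constraint $a\oplus b=c\oplus d$ holds automatically. Summing the surviving terms $c,d=a,b$ and $c,d=b,a$, and correcting the double count on $a=b$, gives
\begin{equation}
M=\frac{1}{N}\left(\Id+\SWAP-\Pi^{\eq}\right).
\end{equation}

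The final step is the norm bound. The operators $\Id+\SWAP=2\Pi_{\mathrm{sym}}$ and $\Pi^{\eq}$ commute, and $\Pi^{\eq}\preceq\Pi_{\mathrm{sym}}$. Hence $0\preceq M\preceq \frac{2}{N}\Pi_{\mathrm{sym}}$, so $\|M\|_\infty\le 2/N$. This shows $HF$ is $(2/N)$-EAC, and \Cref{lemma:EAC-implies-distinctness} then gives $\calO(t^2/N)$-distinctness.

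I expect the only delicate point to be the phase-averaging case analysis. One has to make sure no non-trivial pattern of coincidences among $a,b,c,d$ survives the cube-root-of-unity average. That is where the choice $q=3$ is used.
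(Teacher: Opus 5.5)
Your proof is correct and takes essentially the paper's route. H\"older's inequality reduces EAC to bounding $\|\E_F[U^{\dagger\otimes 2}\Pi^{\mathsf{eq}}U^{\otimes 2}]\|_\infty$; the same ternary phase-twirl identity (the average is $1$ iff $\{a,b\}=\{c,d\}$ as multisets) is applied; and the result is dominated by $\frac{1}{N}(\Id+\SWAP)$, which gives $2/N$.

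The only difference is bookkeeping. The paper twirls $HFH$ and gets $\frac{1}{N}(\Id+\SWAP)-\frac{1}{N^2}J$, so it needs \Cref{lemma:J-is-PSD} to drop $J$. You twirl $HF$ directly and get the Hadamard-conjugate form $\frac{1}{N}(\Id+\SWAP-\Pi^{\mathsf{eq}})$, where the subtracted term is manifestly positive.
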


\begin{proof}
    To show that an ensemble is $\calO(1/N)$-EAC, it suffices to show
    \begin{align}
        \tr[{\Pi}^{\mathsf{eq}}\E_{U\leftarrow \cal E}U^{\ot 2}\rho_{\X_i, \X_j}U^{\ot 2, \dagger}] \leq \mathcal{O}(1/N),
    \end{align}
    where recall that ${\Pi}^{\mathsf{eq}} = \sum_{x\in [N]} \ketbra{x}{x} \otimes \ketbra{x}{x}$ is the `equality projector' and $\rho_{\X_i, \X_j}$ is any bipartite quantum state. By Hölder's inequality, linearity of expectation, and normalization of $\rho_{\X_i, \X_j}$ it suffices to show 
    \begin{equation}
         \left\lVert\E_{U\leftarrow \cal E}[U^{\ot 2, \dagger}{\Pi}^{\mathsf{eq}}U^{\ot 2}]\right\rVert_{\infty} \leq \mathcal{O}(1/N).
    \end{equation}
    For the purposes of this analysis, we will show the above operator norm bound for $U=HFH$ and later get the same bound for $HF$ due to the unitary invariance of the operator norm, using the steps
    \begin{align}
        U_{a,b} &= \sum_{x,y \in [N]} H_{a,x}F_{x,y}H_{y,b} \\
        \nonumber
        &= \sum_{x \in [N]} H_{a,x}\omega^{f(x)}H_{x,b} \\
         \nonumber
        &= \sum_{x \in [N]} \frac{1}{\sqrt{N}}(-1)^{a\cdot x}\omega^{f(x)} \frac{1}{\sqrt{N}}(-1)^{b \cdot x} \\
         \nonumber
        &= \frac{1}{N} \sum_{x\in [N]} \omega^{f(x)} (-1)^{(a\oplus b)\cdot x},
         \nonumber
    \end{align}
    where $a.b = a_1b_1 + a_2b_2 + \ldots a_nb_n \mod 2$ and $a\oplus b$ is the bitwise XOR for two n-bit strings. 
    To avoid clutter, let us write $X(U) = \sum_{z\in [N]} U^{\dagger}\ketbra{z}{z}U \ot U^{\dagger}\ketbra{z}{z}U$. Then the elements of $X(U)$ can be written as 
    \begin{align}
        X(U)_{(a,b), (a',b')} &= \sum_{z \in [N]}\bra{a,b} U^{\dagger}\ketbra{z}{z}U \ot U^{\dagger}\ketbra{z}{z}U \ket{a',b'} \\
         \nonumber
        &= \sum_{z\in [N]}\bra{a}U^{\dagger}\ketbra{z}{z}U\ket{a'}\bra{b} U^{\dagger}\ketbra{z}{z}U\ket{b'}\\
         \nonumber
        &= \sum_{z\in [N]}\overline{U}_{z,a}U_{z,a'}\overline{U}_{z,b}U_{z,b'}
         \nonumber.
    \end{align}
    We can now substitute the general expression for the entries of $U$, to get
\begin{align}\label{eq:op-elements-HFH}
        X(HFH)_{(a,b), (a',b')} &= \sum_{z\in [N]} \overline{(N^{-1} \sum_{x\in [N]} \omega^{f(x)} (-1)^{(z\oplus a)\cdot x})}\cdot (N^{-1} \sum_{x'\in [N]} \omega^{f(x')} (-1)^{(z\oplus a')\cdot x'})\\
         \nonumber
        &\quad\cdot
         \overline{(N^{-1} \sum_{y\in [N]} \omega^{f(y)} (-1)^{(z\oplus b)\cdot y})}\cdot (N^{-1} \sum_{y'\in [N]} \omega^{f(y')} (-1)^{(z\oplus b')\cdot y'})\\
          \nonumber&=N^{-4}\sum_{\substack{z,x,y,x',y' \in [N]}} \overline{\omega^{f(x)}}\omega^{f(x')} \overline{\omega^{f(y)}} \omega^{f(y')} (-1)^{(z\oplus a)\cdot x} (-1)^{(z\oplus a')\cdot x'} (-1)^{(z\oplus b)\cdot y}(-1)^{(z\oplus b')\cdot y'}.
           \nonumber
    \end{align}
    Since $F$ is a random diagonal unitary, let us evaluate $\E_F[X(HFH)_{a,b), (a',b')}]$. For that, it is sufficient to analyse $\E_F[\overline{\omega^{f(x)}}\omega^{f(x')} \overline{\omega^{f(y)}} \omega^{f(y')}]$. First, observe that 
    \begin{equation}
        \E_F[\omega^{f(x)}] = \E_F[(\omega^{f(x)})^2] = 1/3\cdot(1+\omega+\omega^2) = 0,
    \end{equation}
    because $\omega^4 = \omega^3\cdot \omega = \omega$ and $(1+\omega+\omega^2) = 0$. Due to the vanishing first and second moments, it is easy to verify that 
\begin{equation}\label{eq:complex-F-twirl-omega}
        \E_F[\overline{\omega^{f(x)}}\omega^{f(x')} \overline{\omega^{f(y)}} \omega^{f(y')}] = \delta_{x,x'}\delta_{y,y'} + \delta_{x,y'}\delta_{y,x'} - \delta_{x,x'}\delta_{y,y'}\delta_{x,y},
    \end{equation}
where the last term subtracts the overcounting in the case $x=x'=y=y'$.
    Thus, 
    \begin{align}
        \E_{F}X(HFH)_{(a,b), (a',b')} &= N^{-4}\sum_{z,x,y \in [N]} (-1)^{(z\oplus a \oplus z\oplus a')\cdot x} (-1)^{(z\oplus b \oplus z\oplus b')\cdot y} + (-1)^{(z\oplus a \oplus z\oplus b')\cdot x} (-1)^{(z\oplus a' \oplus z\oplus b)\cdot y} \notag\\
         \nonumber
        &\quad - N^{-4}\sum_{z,x \in [N]}(-1)^{(z\oplus a \oplus z\oplus a' \oplus z\oplus b  \oplus z\oplus b')\cdot x}\\
         \nonumber
        &= N^{-3}\left(\sum_{x \in [N]} (-1)^{(a\oplus a')\cdot x}\sum_{y\in [N]}(-1)^{(b\oplus b')\cdot y} + \sum_{x \in [N]}(-1)^{(a\oplus b')\cdot x}\sum_{y\in [N]}(-1)^{(a'\oplus b)\cdot y}\right)  \notag\\
         \nonumber
        &\quad  - N^{-3} \sum_{x \in [N]}(-1)^{( a \oplus a' \oplus b \oplus b').x}\\
        &= N^{-3} \left((N \cdot \delta_{a,a'})(N \cdot \delta_{b,b'}) + (N \cdot \delta_{a,b'})(N \cdot \delta_{a',b}) -N \delta_{a \oplus a',b 
        \oplus b'} \right)  \\
        \nonumber
        &= N^{-1}(\delta_{a,a'}\delta_{b,b'} + \delta_{a,b'}\delta_{a',b}) - N^{-2}\delta_{a \oplus a',b \oplus b'},
         \nonumber
    \end{align}
    where the penultimate step follows from the orthogonality of characters of $\F_2^n$: $\sum_{x\in [N]}(-1)^{x.w} = N\delta_{w,0}$ (\Cref{fact:character-ortho}). Hence, 
    \begin{align}
        \E[X(HFH)] &= \sum_{a,a',b,b' \in [N]} \left(\frac{\delta_{a,a'}\delta_{b,b'} + \delta_{a,b'}\delta_{a',b}}{N} - \frac{\delta_{a \oplus a',b \oplus b'}}{N^2}\right) \ketbra{a,b}{a',b'} \\
         \nonumber
        &= \sum_{a,b \in [N]}
          \frac{\ketbra{a,b}{a,b} + \ketbra{a,b}{b,a}}{N} - \sum_{a,a',b,b' \in [N]}\delta_{a \oplus a',b \oplus b'}\cdot\frac{\ketbra{a,b}{a',b'}}{N^2} \\
           \nonumber
        &= \frac{1}{N}(I + \SWAP) - \frac{1}{N^2}J,
         \nonumber
    \end{align}
    where $J \coloneqq \sum_{a,a',b,b' \in [N]}\delta_{a \oplus a',b \oplus b'}\ketbra{a,b}{a',b'}$. We show in \Cref{lemma:J-is-PSD} that $J$ is PSD. Moreover, both $I + \SWAP$ and $ \E[X(HFH)]$ are also PSD operators. Thus, $\E[X(HFH)] \preceq \frac{1}{N}(I + \SWAP)$. Hence,
    \begin{equation}
        \left\lVert\E[X(HFH)]\right\rVert_{\infty} \leq \left\lVert\frac{1}{N}(I + \SWAP)\right\rVert_{\infty} = \frac{2}{N}.
    \end{equation}
    The claimed bound for $HF$ follows by the unitary invariance of operator norm
    w.r.t.\ the right-most Hadamard in $HFH$.
\end{proof}

We now show that if $F$ is a real-valued binary phase operator, then retains distinctness but on input states with bounded Bell overlap.

\begin{corollary}
\label{cor:binary-HF-EAC}
Let $F$ in \Cref{lemma:HF-distinct} be
\begin{equation}
F \coloneqq \sum_{x\in[N]}(-1)^{f(x)}\ketbra{x}{x},
\end{equation}
where $f:\zo^n\to\zo$ be a uniformly random Boolean function. Then, $HF$ is $\mathcal O(1/N)$-EAC on states with Bell overlap $\mathcal O(1)$.
\end{corollary}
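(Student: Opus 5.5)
We follow the proof of \Cref{lemma:HF-distinct}, changing only the fourth-moment computation of the phases and the final bound on the twirled equality projector. We again work with $U=HFH$ and move the right-most $H$ onto the input state at the end. The matrix elements $X(HFH)_{(a,b),(a',b')}$ are exactly as in \Cref{eq:op-elements-HFH}, now with $\omega$ replaced by $-1$. Since $(-1)^{f(x)}$ is real, the complex conjugations can be dropped.

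The key change is the phase moment. We now have $\E_F[(-1)^{f(x)}]=0$, but $\E_F[((-1)^{f(x)})^2]=1$, so the second moment no longer vanishes. The four-point function becomes
\begin{equation}
\E_F\left[(-1)^{f(x)+f(x')+f(y)+f(y')}\right]=\delta_{x,x'}\delta_{y,y'}+\delta_{x,y'}\delta_{y,x'}+\delta_{x,y}\delta_{x',y'}-2\,\delta_{x,x'}\delta_{y,y'}\delta_{x,y}.
\end{equation}
The correction $-2\delta\delta\delta$ removes the triple counting on the diagonal $x=x'=y=y'$. We then sum over $z,x,y$ using \Cref{fact:character-ortho}, exactly as before. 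The first two pairings give $N^{-1}(\delta_{a,a'}\delta_{b,b'}+\delta_{a,b'}\delta_{a',b})$, i.e.\ $(I+\SWAP)/N$. The correction gives $-2J/N^2$ with the same $J$ as before.

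The new pairing $\delta_{x,y}\delta_{x',y'}$ contributes
\begin{equation}
N^{-4}\sum_{z,x,x'}(-1)^{(a\oplus b)\cdot x}(-1)^{(a'\oplus b')\cdot x'}=N^{-1}\delta_{a,b}\delta_{a',b'},
\end{equation}
using $(z\oplus a)\cdot x\oplus(z\oplus b)\cdot x=(a\oplus b)\cdot x$. As an operator this term is $\frac{1}{N}\ketbra{\Omega}{\Omega}$. Hence
\begin{equation}
\E_F[X(HFH)]=\frac{1}{N}\left(I+\SWAP+\ketbra{\Omega}{\Omega}\right)-\frac{2}{N^2}J.
\end{equation}
This is the orthogonal-group analogue of the complex case, mirroring \Cref{lemma:orth2-twirl}.

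We cannot bound the operator norm as in \Cref{lemma:HF-distinct}, because $\ketbra{\Omega}{\Omega}$ has norm $N$. Instead we use an input-dependent bound. Since $J\succeq 0$ by \Cref{lemma:J-is-PSD}, we have $\E_F[X(HFH)]\preceq\frac1N(I+\SWAP+\ketbra{\Omega}{\Omega})$. For any input $\rho$, the quantity to control is
\begin{equation}
\tr\left[\Pi^{\eq}\,\E\left[U^{\ot 2}\rho\,U^{\dagger\ot 2}\right]\right]=\tr\left[\E[X(U)]\,\rho\right].
\end{equation}

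The main obstacle is the remaining right-most Hadamard. For $U=HF=(HFH)H$, the relevant input is $\sigma=H^{\ot 2}\rho H^{\ot 2}$, so we need its Bell overlap rather than that of $\rho$. This works because $H$ is real symmetric with $H^2=I$, so $(H\ot H)\ket{\Omega}=(HH^{\mathsf T}\ot I)\ket{\Omega}=\ket{\Omega}$. Therefore $\tr[\sigma\ketbra{\Omega}{\Omega}]=\tr[\rho\ketbra{\Omega}{\Omega}]$, which is $\calO(1)$ by assumption.

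Combining this with $\tr[\sigma]=1$ and $\tr[\sigma\,\SWAP]\le 1$ gives
\begin{equation}
\tr\left[\Pi^{\eq}\,\E\left[U^{\ot 2}\rho\,U^{\dagger\ot 2}\right]\right]\le\frac{2+\calO(1)}{N}=\calO(1/N),
\end{equation}
which is the claimed $\calO(1/N)$-EAC on states with Bell overlap $\calO(1)$.
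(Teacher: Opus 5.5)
Your proposal is correct and follows the paper's proof essentially step for step. It uses the same modified four-point function, with the extra pairing $\delta_{x,y}\delta_{x',y'}$ and the $-2\,\delta_{x,x'}\delta_{y,y'}\delta_{x,y}$ correction, to get $\frac{1}{N}(I+\SWAP+\ketbra{\Omega}{\Omega})-\frac{2}{N^2}J$; it then drops $J\succeq 0$, bounds the trace against the input, and absorbs the right-most Hadamard via $H^{\ot 2}\ket{\Omega}=\ket{\Omega}$, exactly as the paper does.
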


\begin{proof}
Note that for real-valued binary phase operator  $F = \sum_{x\in[N]}(-1)^{f(x)}\ketbra{x}{x}$, \Cref{eq:complex-F-twirl-omega}, has two additional contributions
\begin{equation}
\delta_{x,y}\delta_{x',y'}
-
\delta_{x,x'}\delta_{y,y'}\delta_{x,y}.
\end{equation}
due to the non-vanishing second moment. Inserting these terms into \Cref{eq:op-elements-HFH} contributes
$N^{-1}\ketbra{\Omega}{\Omega}$ and $-N^{-2}J$, respectively. Hence,
the calculation in \Cref{lemma:HF-distinct} gives
\begin{equation}
\begin{aligned}
\E\left[X(HFH)\right]
&=
\frac{1}{N}
\left(
I+\SWAP+\ketbra{\Omega}{\Omega}
\right)
-\frac{2}{N^2}J \\
&\preceq
\frac{1}{N}
\left(
I+\SWAP+\ketbra{\Omega}{\Omega}
\right),
\end{aligned}
\end{equation}
where the inequality follows from the fact that $J$ is PSD (\Cref{lemma:J-is-PSD}). Therefore, for any state $\sigma_{\X_i, \X_j}$ with $\mathcal{O}(1)$ Bell overlap, we have that
\begin{equation}
\begin{aligned}
\tr\left[
\sigma_{\X_i, \X_j}\E\left[X(HFH)\right]
\right]
&\leq
\frac{
\tr[\sigma_{\X_i, \X_j}]
+\tr[\sigma_{\X_i, \X_j}\SWAP]
+\bra{\Omega}\sigma_{\X_i, \X_j}\ket{\Omega}
}{N} \label{eq:tr-HFH-real-bound}\\
&\leq
\frac{2+\mathcal{O}(1)}{N},
\end{aligned}
\end{equation}
where we used $\tr[\sigma_{\X_i, \X_j}]=1$ and $\tr[\sigma_{\X_i, \X_j}\SWAP]\leq 1$.
Finally, given $\rho_{\X_i,\X_j}$, define
\begin{equation}
\sigma_{\X_i, \X_j}
\coloneqq
H^{\ot 2}\rho_{\X_i,\X_j}H^{\ot 2}.
\end{equation}
Then,
\begin{equation}
(HF)^{\ot 2}
\rho_{\X_i,\X_j}
(HF^\dagger)^{\ot 2} \\
 =
(HFH)^{\ot 2}
\sigma_{\X_i, \X_j}
\bigl((HFH)^\dagger\bigr)^{\ot 2}.
\end{equation}
Thus, the bound in \Cref{eq:tr-HFH-real-bound} remains unchanged when using $HF$. Finally, to make sure that the Bell overlap assumption continues to hold, note that $H^{\ot 2}\ket{\Omega}=\ket{\Omega}$, and hence the Bell overlap stays unchanged:
$\bra{\Omega}\sigma\ket{\Omega}
=
\bra{\Omega}\rho_{\X_i,\X_j}\ket{\Omega}
\leq \mathcal{O}(1).
$
Applying the bound in \Cref{eq:tr-HFH-real-bound} to $\sigma_{\X_i, \X_j}$ proves the claim for $HF$.
\end{proof}

\begin{lemma}[Positivity of the XOR-collision operator]\label{lemma:J-is-PSD}
    $J \coloneqq \sum_{a,a',b,b' \in [N]}\delta_{a \oplus a',b \oplus b'}\ketbra{a,b}{a',b'}$ is \emph{positive semi-definite} (PSD).
\end{lemma}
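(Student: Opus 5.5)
The plan is to exhibit $J$ as a sum of rank-one positive operators, i.e.\ a Gram-type decomposition. The defining condition $a\oplus a' = b\oplus b'$ is equivalent to $a\oplus b = a'\oplus b'$, since XOR is its own inverse on $\F_2^n$. Hence the matrix entry $\bra{a,b}J\ket{a',b'}$ equals $1$ exactly when the ``XOR label'' of the row index $(a,b)$ coincides with that of the column index $(a',b')$, and is $0$ otherwise.

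Grouping basis vectors by this label, I define for each $s\in[N]$ (identified with $\F_2^n$) the unnormalized vector
\begin{equation}
\ket{v_s} \coloneqq \sum_{a\in[N]} \ket{a, a\oplus s}.
\end{equation}
Then
\begin{equation}
J = \sum_{s\in[N]} \ketbra{v_s}{v_s}.
\end{equation}
To check this, expand the right-hand side: $\sum_s \sum_{a,a'} \ketbra{a,a\oplus s}{a',a'\oplus s}$. Each pair $(a,b)$ with $b=a\oplus s$ determines $s=a\oplus b$ uniquely. So the coefficient of $\ketbra{a,b}{a',b'}$ is $1$ if $a\oplus b = a'\oplus b'$ and $0$ otherwise, which matches the definition of $J$.

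Since each $\ketbra{v_s}{v_s}\succeq 0$, the sum is positive semi-definite, which proves the claim. As a sanity check, the $\ket{v_s}$ are mutually orthogonal with $\braket{v_s}{v_s}=N$, so $J$ is $N$ times a projector of rank $N$. Equivalently, $J = N\sum_s \ketbra{\Phi_s}{\Phi_s}$ with $\ket{\Phi_s} = (X^s\ot\Id)\ket{\Phi}$ being Pauli-$X$-shifted Bell states, and $\norm{J}_\infty = N$.

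The proof has no real obstacle. The only step needing care is the bijection argument verifying the decomposition, namely that each pair $(a,b)$ lies in the support of exactly one $\ket{v_s}$.
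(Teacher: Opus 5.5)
Your proposal is correct and follows essentially the same route as the paper: rewrite $\delta_{a\oplus a',b\oplus b'}$ as $\delta_{a\oplus b,a'\oplus b'}$ and write $J=\sum_s \ketbra{v_s}{v_s}$ with $\ket{v_s}=\sum_{a\oplus b=s}\ket{a,b}$. Your extra remark that the $\ket{v_s}$ are orthogonal with $\braket{v_s}{v_s}=N$, so $J$ is $N$ times a rank-$N$ projector onto $X$-shifted Bell states, is also correct.
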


\begin{proof}
   First, note that $\delta_{a \oplus a',b \oplus b'} = \delta_{a \oplus b,a' \oplus b'}$ because
\begin{equation}
a\oplus a' = b\oplus b'
\Longleftrightarrow
(a\oplus a')\oplus(a'\oplus b)=(b\oplus b')\oplus(a'\oplus b)
\Longleftrightarrow
a\oplus b = a'\oplus b',
\end{equation}
where we used associativity/commutativity of $\oplus$ and $x\oplus x=0$. Hence $J$ can be re-written as
\begin{equation}
    J = \sum_{a,a',b,b' \in [N]}\delta_{a \oplus b,a' \oplus b'}\ketbra{a,b}{a',b'}.
\end{equation}
Using the identity $\delta_{x, y} = \sum_{s\in [N]} \delta_{x,s}\delta_{y,s}$, rewrite
\begin{align}
J
&=\sum_{s \in [N]}\sum_{a,b \in [N]}\sum_{a',b' \in [N]} \delta_{a\oplus b,s}\,\delta_{a'\oplus b',s}\,|a,b\rangle\langle a',b'|
\label{eq:P-expand}\\
&=\sum_{s} |v_s\rangle\langle v_s|,
\label{eq:P-rankone}
\end{align}
where
\begin{equation}
|v_s\rangle := \sum_{a,b\in [N]:\,a\oplus b=s} |a, b\rangle .
\label{eq:def-vs}
\end{equation}
Thus,
\begin{equation}
J\succeq 0,
\label{eq:P-psd}
\end{equation}
since it is a sum of rank-one PSD operators.
\end{proof}

\section{Constraints on the JLS conjecture}\label{sec:alt-FH-distinguisher}
The distinctness of $HF$ informed us that a complex phase unitary followed a layer of Hadamard takes us to the distinct subspace. One might wonder if one could keep alternating similarly between independent phase-Hadamard layers to get a full PRU. In fact, Ref.\ 
\cite{ji_pseudorandom_2018} has 
conjectured 
that a constant number of alternating 
iterations of Hadamards and random phase unitaries result in a pseudorandom unitary. We show in 
the following that this is not true when the functions' co-domain has size  superpolynomially smaller 
than its domain size. 

\begin{theorem}[Small phase alphabets preclude pseudorandomness]\label{thm:two-copy-suffices}
Let $n, k\in\N$, $K\coloneqq 2^k \leq N\coloneqq 2^n$,
$\omega_N\coloneqq e^{2\pi i/N}$, and $l=\poly(n)$. For each layer $j\in[l]$, let
\begin{equation}
F_j=\sum_{x\in\{0,1\}^n}\omega_N^{\,f_j(x)}\ketbra{x}{x},\qquad f_j:[N]\to \{0,1,\ldots, K-1\},
\end{equation}
with each $f_j$ chosen uniformly randomly 
from the set of functions from $[N]\to \{0,1,\ldots, K-1\}$ where $\log K \leq n - \omega(\log n)$. If $\cal E$ is the ensemble of 
unitaries of 
the form 
\begin{equation}U=F_l H\cdots F_1 H,\end{equation}
then $\cal E$ cannot form a PRU ensemble.
\end{theorem}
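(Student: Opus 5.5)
The plan is to use the Bell-state collision test described in the introduction: query the unknown unitary on both halves of $\ket{\Phi}=\ket{\Omega}/\sqrt{N}$ and project back onto $\ket{\Phi}$. The key observation is that when $K\le N/n^{\omega(1)}$, every phase layer $F_j$ is operator-norm close to the identity. The whole circuit $U=F_lH\cdots F_1H$ is then negligibly close to the fixed unitary $H^l\in\{\Id,H\}$. Since $H$ is real and symmetric, $H^l(H^l)^{\mathsf T}=\Id$, so the Bell test accepts with probability close to $1$. Under Haar it accepts with probability $O(1/N)$. Since $\ketbra{\Phi}{\Phi}\preceq\Pi^{\eq}$, this is in particular a failure of $\negl(n)$-EAC, and hence of distinctness, on an efficiently preparable input. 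This contradicts \Cref{thm:PRU-negl-EAC}, or can be used directly as a distinguisher in \Cref{def:parallel-pru}.

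\textbf{Step 1 (phases are nearly trivial).} For every $x$ we have $0\le f_j(x)\le K-1$. Hence
\[
|\omega_N^{f_j(x)}-1|\le 2\pi f_j(x)/N< 2\pi K/N,
\]
so $\|F_j-\Id\|_\infty< 2\pi K/N$ deterministically, for every key. A telescoping (hybrid) argument, using unitary invariance of $\|\cdot\|_\infty$, then gives
\[
\|U-H^l\|_\infty\le 2\pi lK/N.
\]
With $l=\poly(n)$ and $\log K\le n-\omega(\log n)$, i.e.\ $K/N\le n^{-\omega(1)}$, the bound $2\pi lK/N$ is $\negl(n)$.

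\textbf{Step 2 (Bell test on the ensemble).} Take $p(U):=|\bra{\Phi}U\ot U\ket{\Phi}|^2=|\tr[U^{\mathsf T}U]|^2/N^2$. For $V:=H^l$ we get $V^{\mathsf T}V=\Id$, so $p(V)=1$. Next write
\[
\|U\ot U-V\ot V\|_\infty\le 2\|U-V\|_\infty.
\]
Since $p$ is the squared modulus of an amplitude, this gives $p(U)\ge(1-2\|U-V\|_\infty)^2\ge 1-\negl(n)$ for every $U$ in the ensemble, and hence also on average.

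\textbf{Step 3 (Haar value and conclusion).} Under Haar, the unitary $2$-twirl formula used in the proof of \Cref{thm:PRU-negl-EAC} applies. It gives $\E\,p(U)\le \bra{\Phi}\Pi^{\eq}$-type bound: with $\ketbra{\Phi}{\Phi}\preceq \Pi^{\eq}$, the calculation there yields $\E_{\Haar}\,p(U)\le 2/(N+1)$. Alternatively one can evaluate directly $\E|\tr[U^{\mathsf T}U]|^2=2N/(N+1)$, giving $p=O(1/N)$. The distinguisher prepares $\ket{\Phi}$ efficiently (Hadamards and CNOTs), makes two parallel forward queries, and measures the Bell projector. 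It therefore has advantage $1-\negl(n)-O(1/N)$, so $\calE$ is not a PRU.

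The main obstacle I expect is only bookkeeping: making sure the bound in Step 1 uses the correct root of unity $\omega_N=e^{2\pi i/N}$ with exponents bounded by $K$, and keeping track of the factor $l$ in the hybrid bound. Both are routine, so the argument should go through without a finer spectral analysis of $UU^{\mathsf T}$.
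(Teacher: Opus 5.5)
Your proof is correct, but it takes a different route from the paper's. The paper never compares $U$ to a fixed unitary. It applies Jensen's inequality $\E|\tr[UU^{\mathsf T}]|^2\ge|\tr\,\E[UU^{\mathsf T}]|^2$. It then uses independence of the layers together with $H^{\mathsf T}=H$, $H^2=\Id$ and $F_j^{\mathsf T}=F_j$ to collapse $\E[UU^{\mathsf T}]=(\E F^2)^l=\alpha^l\Id$, where $\alpha=\E[\omega_N^{2f(x)}]$. Finally it bounds $|\alpha|$ via the Dirichlet kernel and $|\alpha|^{2l}$ via Bernoulli's inequality.

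You instead argue deterministically, key by key: $\|F_j-\Id\|_\infty<2\pi K/N$, a hybrid argument gives $\|U-H^l\|_\infty\le 2\pi lK/N$, and $H^l$ is real orthogonal, so it passes the Bell test with certainty.

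Your route is more elementary and more general. It uses nothing about the distribution of the $f_j$, neither uniformity nor independence. It therefore applies verbatim to pseudorandomly keyed functions as in the JLS conjecture. It also makes transparent that under $K\le N/n^{\omega(1)}$ the ensemble is negligibly close to the fixed Clifford $H^l$, so even simpler single-query tests would succeed.

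The paper's averaged computation is quantitatively sharper. The first-order phase drift enters $\alpha$ only as a global phase, so $|\alpha|\ge 1-O((K/N)^2)$ and only $l(K/N)^2$ needs to be small. Your bound instead requires $lK/N$ to be small. Under the stated hypotheses $\log K\le n-\omega(\log n)$ and $l=\poly(n)$, both conditions hold, so both arguments give distinguishing advantage $1-\negl(n)$.
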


\begin{proof}
A simple Bell-type distinguisher 
tells apart $\cal E$ from the Haar ensemble by first querying the unknown unitary on a Bell state and then measuring in the Bell basis. The success probability of any unitary $U$ for this test can be written as
\begin{equation}
p(U) = \braket{\Phi|(U\otimes U)\,\ketbra{\Phi}{\Phi}\,(U\otimes U)^{\dagger}|\Phi}
=\frac{1}{N^2}\,\big|\Tr[UU^{\mathsf{T}}]\big|^2,\qquad \ket{\Phi}=\frac{1}{\sqrt N}\sum_x\ket{x}\otimes\ket{x}.    
\end{equation}
It suffices to show that  \begin{equation}\lvert \E_{U\leftarrow U_{\Haar}}p(U) - \E_{U\leftarrow \cal E} p(U) \rvert \geq 1/\poly(n). \end{equation} For a Haar random unitary,
on the one hand, 
\begin{equation}\label{eq:p_succ-Haar}
    \E_{U\leftarrow U_{\Haar}}p(U) = \frac{2}{N(N+1)}\bra{\Phi}(\Id + \SWAP)\ket{\Phi} = \frac{2}{N(N+1)}.
\end{equation}   
On the other hand, due to Jensen's inequality, 
\begin{equation}\label{eq:p_succ-E}
    \E_{U\leftarrow \cal E} p(U) = \frac{1}{N^2}\E_{U \leftarrow \cal E}\big|\Tr[UU^{\mathsf{T}}]\big|^2 \geq \frac{1}{N^2}\big|\Tr[\E_{U \leftarrow \cal E}UU^{\mathsf{T}}]\big|^2.
\end{equation}
Since all $F_1, \ldots, F_l$ are i.i.d., $H^\mathsf{T} = H$, and $F_i^{\mathsf{T}}=F_i$, and so
\begin{equation}\label{eq:exp-UU^T}
    \E [UU^{\mathsf{T}}] = \E [(F_l H\cdots F_1 H)(H^\mathsf{T} F_1^{\mathsf{T}}\cdots H^{\mathsf{T}} F_l^{\mathsf{T}})] = (\E_{F_l}F_l^2)\ldots(\E_{F_1}F_1^2) = (\E_F [F^2])^l = \alpha^l\Id_N,
\end{equation}
where 
\begin{equation}
\alpha:= \E\Big[\omega_N^{\,2 f(x)}\Big]\ =\ \frac{1}{2^k}\cdot \sum_{a=0}^{2^k-1}\,e^{\frac{2\pi i}{N}\,2 a}.
\end{equation}
By the Dirichlet kernel identity (\Cref{fact:Dirichlet-kernel-identity}),
\begin{equation}\label{eq:alpha-dirichlet}
\left\lvert\alpha\right\rvert = \frac{1}{2^k}\cdot\left\lvert\frac{\sin\big(\pi\cdot \tfrac{2\cdot 2^{k}}{N}\big)}{\sin\big(\pi\cdot \tfrac{2}{N}\big)}\right\rvert = \frac{1}{2^k}\cdot\frac{\sin\big(\pi\cdot \tfrac{2\cdot 2^{k}}{N}\big)}{\sin\big(\pi\cdot \tfrac{2}{N}\big)},
\end{equation}
where for any large $n$ and $k = n - \omega(\log n)$ ensures that both the numerator and the denominator are positive and non-zero \footnote{Note that choosing $k=n-1$ will, due to $\sin(\pi)=0$, result in $\alpha =0$. Thus, vanishing the signal: $\alpha = 0$. Remarkably this step of the argument goes through with $k\leq n-2$. However, the ultimate choice of $k \leq n - \omega(\log n)$ is dictated by \Cref{eq:alpha^2-lower-bound} to account for the fact that $t$ could be an arbitrary polynomial in $n$.}. Same choice of $n$ and $k$ ensures that we can use the inequalities \begin{equation}x - x^3/6 < \sin(x) < x\end{equation} for $x \in (0, \pi/2]$ to get
\begin{equation}
    \abs{\alpha} > 1 - \frac{(L\theta)^2}{6},
\end{equation}
where $L= 2^k$ and $\theta = 2\pi/N$. Furthermore, 
\begin{equation}\label{eq:alpha^2-lower-bound}
    \abs{\alpha}^{2l} > \left(1 - \frac{(L\theta)^2}{6}\right)^{2l} \geq 1 - \frac{2l\cdot (L\theta)^2}{6} \geq 1 - \negl(n),
\end{equation}
where the penultimate inequality is a consequence of Bernoulli's inequality, given by
\begin{equation}(1+x)^r\geq1+rx\end{equation} for every $r\geq 1$ and any real $x\geq -1$ and the last inequality holds because, again, $k \leq n - \omega(\log n)$ and $l=\poly(n)$. Combining \Cref{eq:p_succ-E,eq:exp-UU^T}, we get that
\begin{equation}\label{eq:p_succ-E-final}
    \E_{U\leftarrow \cal E} p(U) \geq \abs{\alpha}^{2l} \geq 1 - \negl(n).
\end{equation} 
Finally, \Cref{eq:p_succ-E-final,eq:p_succ-Haar} imply 
\begin{equation}
    \left\lvert \E_{U\leftarrow U_{\Haar}}p(U) - \E_{U\leftarrow \cal E} p(U) \right\rvert \geq  \left\lvert 1 - \negl(n) - \frac{2}{2^n(2^n+1)} \right\rvert \geq 1 - \negl(n).
\end{equation}
\end{proof}

\newpage
\bibliographystyle{alpha}
\bibliography{refs_main,BigReferences70}
\end{document}